\documentclass[a4paper,UKenglish,cleveref, autoref, thm-restate, numberwithinsect,final]{lipics-v2021}
\hideLIPIcs

\nolinenumbers

\sloppy
\usepackage[T1]{fontenc}

\usepackage[author=anonymous,marginclue,footnote]{fixme}

\FXRegisterAuthor{jf}{ajf}{JF}
\FXRegisterAuthor{pw}{apw}{PW}
\FXRegisterAuthor{ls}{als}{LS}
\FXRegisterAuthor{hb}{ahb}{HB}

\usepackage{tikz-cd}
\usepackage{graphicx}
\usepackage{amsfonts}
\usepackage{amsmath}
\usepackage{amssymb}
\usepackage{mathtools}
\usepackage{stmaryrd}
\usepackage{enumitem}
\usepackage{float}
\usepackage{thmtools}

\usepackage[strings]{underscore}
\usepackage{hyperref}
\hypersetup{pdftex,colorlinks=true,allcolors=blue}

\newcommand{\by}[1]{\qquad\hfill(\text{#1})}
\newcommand{\stream}{\mathsf{stream}}
\newcommand{\ptrace}{\mathsf{ptrace}}

\newcommand{\unif}{\mathsf{unif}}
\newcommand{\Gprob}{{G^\mathsf{prob}}}

\newcommand{\Trace}{\mathit{Tr}}
\newcommand{\id}{\mathit{id}}
\newcommand{\Id}{\mathit{Id}}
\newcommand{\ev}{\mathit{ev}}
\newcommand{\Met}{\mathbf{Met}}
\newcommand{\FPow}{\mathcal{F}}
\newcommand{\FHaus}{\overline{\mathcal{F}}}

\newcommand{\V}{[0,1]}
\newcommand{\qeq}{\mathcal{E}}
\newcommand{\catC}{\mathcal{C}}

\newcommand{\ftFinK}{\overline{\mathcal{D}}_\omega}
\newcommand{\ftDist}{\mathcal{D}_\omega}

\newcommand{\nat}{\mathbb{N}}
\newcommand{\SET}{\mathbf{Set}}
\newcommand{\MET}{\mathbf{Met}}

\newcommand{\Pmet}{\overline{\mathcal{P}}_\omega}

\newcommand{\Pset}{\mathcal{P}_\omega}
\newcommand{\Pc}{\mathcal{P}_{c}}

\newcommand{\sem}[1]{\llbracket #1 \rrbracket}
\newcommand{\rsem}[1]{\llparenthesis #1 \rrparenthesis}
\newcommand{\modal}[1]{\Diamond_{#1}}
\newcommand{\appModal}[2]{\rsem{#1}^{\bullet}(#2)}
\newcommand{\Galg}[2]{\text{Alg}_#1(#2)}

\newcommand{\lrule}[3]{(\mathbf{#1})\;\infrule{#2}{#3}}
\newcommand{\infrule}[2]{\frac{#1}{#2}}

\newcommand{\monad}{\mathbb{M}}

\theoremstyle{definition}
\newtheorem{defn}[theorem]{Definition}

\newcommand{\autorefexpls}[1]{\renewcommand{\exampleautorefname}{Examples}%
\autoref{#1}\renewcommand{\exampleautorefname}{Example}%
}
\newcommand{\autorefsecs}[1]{\renewcommand{\sectionautorefname}{Sections}%
\autoref{#1}\renewcommand{\sectionautorefname}{Section}%
}

\newcommand{\Act}{\mathcal{A}}

\numberwithin{equation}{section}

\title{Quantitative Graded Semantics and\\ Spectra of Behavioural Metrics}

\relatedversion{}
\relatedversiondetails{Conference Version}{https://doi.org/10.4230/LIPIcs.CSL.2025.33}

\titlerunning{Quantitative Graded Semantics and Spectra of Behavioural Metrics}

\ArticleNo{33}

\author{Jonas Forster}{Friedrich-Alexander-Universität Erlangen-Nürnberg,
  Germany}{jonas.forster@fau.de}{https://orcid.org/0000-0002-5050-2565}{}

\author{Lutz Schr{\"o}der}{Friedrich-Alexander-Universität Erlangen-Nürnberg,
  Germany}{lutz.schroeder@fau.de}{https://orcid.org/0000-0002-3146-5906}{}

\author{Paul Wild}{Friedrich-Alexander-Universität Erlangen-Nürnberg,
  Germany}{paul.wild@fau.de}{https://orcid.org/0000-0001-9796-9675}{}

\author{Harsh Beohar}{University of Sheffield, United
Kingdom}{h.beohar@sheffield.ac.uk}{https://orcid.org/0000-0001-5256-1334}{}

\author{Sebastian Gurke}{Universität Duisburg-Essen,
Germany}{sebastian.gurke@uni-due.de}{https://orcid.org/0009-0008-4343-1384}{}

\author{Barbara K{\"o}nig}{Universität Duisburg-Essen,
Germany}{barbara\_koenig@uni-due.de}{https://orcid.org/0000-0002-4193-2889}{}

\author{Karla Messing}{Universität Duisburg-Essen,
Germany}{karla.messing@uni-due.de}{https://orcid.org/0009-0003-1019-6449}{}

\funding{The fourth author was supported by the EPSRC NIA Grant
  EP/X019373/1, while the remaining authors were supported by the
  Deutsche Forschungsgemeinschaft (DFG, German Research Foundation) --
  project number 434050016 (SpeQt).}

\authorrunning{J. Forster, \and P. Wild, \and L. Schr\"oder, \and
H. Beohar, \and S. Gurke, \and B. K{\"o}nig, \and and K. Messing}

\Copyright{Jonas Forster, Paul Wild, Lutz Schr\"oder,
Harsh Beohar, Sebastian Gurke, Barbara K{\"o}nig, and Karla Messing}

\ccsdesc{Theory of computation~Modal and temporal logics}
\keywords{transition systems,
    modal logics, coalgebras, behavioural
    metrics}

\begin{document}

\renewcommand{\subsectionautorefname}{Section}
\renewcommand{\subsubsectionautorefname}{Section}

\maketitle              %

\begin{abstract}
  Behavioural metrics provide a quantitative refinement of classical
  two-valued behavioural equivalences on systems with quantitative
  data, such as metric or probabilistic transition systems. In analogy
  to the linear-time/\,branching-time spectrum of two-valued
  behavioural equivalences on transition systems, behavioural metrics
  vary in granularity, and are often characterized by fragments of
  suitable modal logics. In the latter respect, the quantitative case
  is, however, more involved than the two-valued one; in fact, we show
  that probabilistic metric trace distance cannot be characterized by
  any compositionally defined modal logic with unary modalities. We go
  on to provide a unifying treatment of spectra of behavioural metrics
  in the emerging framework of graded monads, working in coalgebraic
  generality, that is, parametrically in the system type. In the
  ensuing development of \emph{quantitative graded semantics}, we
  introduce algebraic presentations of graded monads on the category
  of metric spaces. Moreover, we provide a general criterion for a
  given real-valued modal logic to characterize a given behavioural
  distance. As a case study, we apply this criterion to obtain a new
  characteristic modal logic for trace distance in fuzzy metric
  transition systems. %
\end{abstract}
\section{Introduction}\label{sec:intro}
\noindent While qualitative models of concurrent systems are
traditionally analysed using various notions of two-valued process
equivalence, it has long been recognized that for systems involving
quantitative data, notions of behavioural \emph{distance} play a
useful role as a more fine-grained measure of process
similarity. Well-known examples include behavioural distances on
probabilistic transition
systems~\cite{GiacaloneEA90,dgjp:metrics-labelled-markov,bw:behavioural-pseudometric},
on systems combining nondeterminism and
probability~\cite{cgt:logical-bisim-metrics}, and on metric transition
systems~\cite{afs:linear-branching-metrics,FahrenbergEA11}.  Like in
the two-valued case, where process equivalences of varying granularity
are arranged on the \emph{linear-time/branching-time
  spectrum}~\cite{DBLP:books/el/01/Glabbeek01}, one has a spectrum of
behavioural metrics on a given system type that vary in granularity
(with greater distances thought of as having finer
granularity)~\cite{DBLP:journals/tcs/FahrenbergL14}.

An important point of interest in this context are
\emph{characteristic modal logics}\lsnote{Apparently the terminology is controversial; other ideas?}. In the two-valued setting, a logic
is \emph{characteristic} for a given behavioural equivalence if the
latter coincides with the respective induced logical
indistinguishability relation, so that behavioural inequivalence can
be certified by distinguishing formulae (as in the recent proof of the
failure of unlinkability in the {ICAO} 9303 e-passport
standard~\cite{DBLP:conf/esorics/FilimonovHMS19})). For instance,
Hennessy-Milner logic is characteristic for
bisimilarity~\cite{DBLP:journals/jacm/HennessyM85}, and most
equivalences on the classical spectrum are characterized by fragments
of Hennessy-Milner logic~\cite{DBLP:books/el/01/Glabbeek01} that are
compositionally defined, i.e.\ given by a choice of modalities and
propositional operators equipped with a recursively defined semantics
(e.g.\ trace equivalence is characterized by the logic built from
diamonds, truth, and -- optionally -- disjunction). In the
quantitative setting, a logic is \emph{characteristic} if the induced
logical distance coincides with the respective behavioural distance,
so that high behavioural distance may be \emph{certified} by means of
distinguishing modal
formulae~\cite{RadyBreugel23}. %
A prototypical example is quantitative probabilistic modal logic,
which is characteristic for branching-time behavioural distance on
probabilistic transition systems~\cite{bw:behavioural-pseudometric}.
However, it turns out that in general, the quantitative setting
behaves less smoothly in this respect than the two-valued
setting. Indeed, we show as our first main result that for
probabilistic metric trace distance (on generative probabilistic
transition systems in which the set of labels is equipped with a
metric, i.e.\ on the probabilistic variant of metric transition
systems), there does not exist any characteristic quantitative modal
logic at all. Here, the term \emph{modal logic} is understood in a
fairly broad sense; essentially, we stipulate no more than that, in
analogy to the two-valued case as discussed above, the logic should be
a compositionally defined fragment of a bisimulation-invariant
next-step logic with unary modalities.

We subsequently work towards positive results, using the framework of
\emph{graded
  semantics}~\cite{DBLP:conf/calco/MiliusPS15,DBLP:conf/concur/DorschMS19}
to achieve an appropriate level of generality. Graded semantics is
parametric both in the \emph{type} of systems (e.g.\ probabilistic,
metric, fuzzy) and in the quantitative \emph{semantics} of systems,
i.e.\ the choice of behavioural distance. The system type is
abstracted as an endofunctor on a suitable base category following the
paradigm of \emph{universal
  coalgebra}~\cite{DBLP:journals/tcs/Rutten00}. Parametricity in the
system semantics, on the other hand, is based on the choice of a
\emph{graded monad}, which handles additional semantic identifications
(beyond branching-time equivalence) by algebraic means, using grades
to control the depth of look-ahead. Both Kleisli-style coalgebraic
trace semantics~\cite{DBLP:journals/lmcs/HasuoJS07} and the smoother,
but less widely applicable Eilenberg-Moore-style coalgebraic trace
semantics~\cite{DBLP:journals/jcss/Jacobs0S15} are subsumed by this
framework~\cite{DBLP:conf/calco/MiliusPS15}. %

Graded semantics has recently been extended to cover behavioural
distances in the Eilenberg-Moore-style
setting~\cite{DBLP:conf/stacs/BeoharG0MFSW24,DBLP:journals/corr/abs-2307-14826},
and, generalizing the two-valued
case~\cite{DBLP:conf/calco/MiliusPS15,DBLP:conf/concur/DorschMS19}, a
canonical notion of \emph{quantitative graded logic} has been
identified. Quantitative graded logics are always \emph{invariant}
under the underlying behavioural distance in the sense that formula
evaluation is nonexpansive, so that logical distance is below
behavioural distance. In some cases, the reverse inequality, i.e.\
\emph{expressivity} of quantitative graded logics, can be established
by a straightforward generalization of corresponding criteria for the
two-valued case. Notably, one can show that in the Eilenberg-Moore
setting, one essentially always has a characteristic modal
logic~\cite{DBLP:journals/corr/abs-2307-14826}, in sharp contrast to
our present negative result. The flip side of the coin is that
Eilenberg-Moore style trace semantics applies to only rather few
system types (essentially automata with effects), and in particular
does not support a metric on the labels as found, for instance, in
standard metric transition systems.

Our second, now positive, contribution in the present work is to
extend the framework to unrestricted graded semantics, notably
including Kleisli-style coalgebraic trace semantics and, hence, trace
semantics on systems with labels taken from a metric space. For the
syntactic treatment of spectra of behavioural distances in this sense,
we introduce a graded extension of \emph{quantitative
  algebra}~\cite{mardare2016quantitative} that allows describing
graded monads on the category of metric spaces by operations and
approximate equations. As suggested by our negative result,
establishing expressivity of graded logics in the general case
presents additional challenges compared to the two-valued variant and
the Eilenberg-Moore case. In particular, it turns out that the
expressivity criterion needs to be parametric in a strengthening of
the inductive hypothesis in the induction on depth of look-ahead that
it encapsulates; indeed, this happens already in strikingly simple
cases such as metric streams. We develop a number of example
applications: We recover results on expressivity of quantitative modal
logics for (finite-depth) branching-time
distances%
~\cite{DBLP:conf/concur/KonigM18,DBLP:conf/concur/WildS20,DBLP:conf/lics/FordMS21,DBLP:conf/lics/KomoridaKKRH21},
as well as a recent result on expressivity of a quantitative modal
logic for trace distance in metric transition
systems~\cite{DBLP:conf/csl/BeoharG0M23}, which in fact we generalize
to systems with metric state space and closed branching. In a
concluding case study, we moreover identify a new characteristic modal
logic for trace distance on fuzzy metric transition systems, which
turns out to require next-step modalities incorporating a constant
shift on label distances.

Omitted proofs and additional details can be found in the appendix.

\subparagraph*{Related Work}
We have mentioned previous work on coalgebraic branching-time behavioural
distances~\cite{bbkk:trace-metrics-functor-lifting,DBLP:conf/concur/KonigM18,DBLP:conf/csl/Forster0HNSW23,DBLP:conf/fossacs/WildS21,DBLP:journals/lmcs/WildS22,DBLP:conf/csl/BeoharG0M23,DBLP:conf/lics/KomoridaKKRH21}
and on graded semantics for two-valued behavioural equivalences and
preorders~\cite{DBLP:conf/calco/MiliusPS15,DBLP:conf/concur/DorschMS19,DBLP:conf/lics/FordMS21}. Kupke
and Rot~\cite{DBLP:journals/lmcs/KupkeR21} study logics for
\emph{coinductive predicates}, which generalize branching-time
behavioural distances. Generally, our overall setup differs from the
one used in~\cite{DBLP:journals/lmcs/KupkeR21} and elsewhere by
working with coalgebras that already live on metric spaces
(e.g.~\cite{Rutten98,Worrel00,bw:behavioural-pseudometric,DBLP:conf/csl/Forster0HNSW23,GoncharovEA23});
this allows covering functors on metric spaces that are not liftings
of set functors, such as the full Hausdorff functor (which takes
closed subsets). Recent work on Galois connections for logical
distances~\cite{DBLP:conf/csl/BeoharG0M23,DBLP:conf/stacs/BeoharG0MFSW24}
is highly general (and in fact not even tied to state-based systems)
but leaves more work to the instantiation than the framework of
graded monads. Moreover, it is aimed primarily at fixpoint
characterizations of logical distance, and in fact induces behavioural
distance from the logic, while we aim to provide logical
characterizations of \emph{given} behavioural distances. %
Alternative coalgebraic approaches to process equivalences coarser
than branching time include coalgebraic trace semantics in
Kleisli~\cite{DBLP:journals/lmcs/HasuoJS07} and Eilenberg-Moore
categories~\cite{DBLP:journals/jcss/Jacobs0S15}, which are both
subsumed by the paradigm of graded
monads~\cite{DBLP:conf/calco/MiliusPS15}, as well as an approach in
which behavioural equivalences are \emph{defined} via characteristic
logics~\cite{DBLP:journals/corr/KlinR16}. The Eilenberg-Moore and
Kleisli setups can be unified using corecursive algebras, which also
support, under certain assumptions, a logical characterization for
these cases~\cite{DBLP:journals/logcom/RotJL21}. The Eilenberg-Moore
approach has been applied to linear-time behavioural
distances~\cite{bbkk:trace-metrics-functor-lifting}.  Recently, some
of the present authors used the
graded-semantics approach to Eilenberg-Moore semantics to extract
characteristic logics that factor through the determinization of a
coalgebra~\cite{DBLP:journals/corr/abs-2307-14826}.  We make use of their notion of \emph{graded logic} and
complement their work by considering unrestricted graded semantics, in
particular covering the more broadly applicable Kleisli-style
semantics.

De Alfaro et al.~\cite{afs:linear-branching-metrics} introduce a
linear-time logic for (state-labelled) metric transition systems. %
The semantics of this logic is defined by first computing the set of
paths of a system, so that propositional operators and modalities have
a different meaning than in corresponding branching-time logics, while
our graded logics are fragments of branching-time logics. Fahrenberg
et al.~\cite{FahrenbergEA11} present a game-based approach to a
spectrum of behavioural distances on metric transition systems.  A
two-valued logic for probabilistic trace semantics (for a discrete set
of labels) has been considered in the context of differential
privacy~\cite{ccp:logic-differential-privacy}. A notion of logical
distance is then obtained via a real-valued semantics defined using a
syntactic distance on formulae; this semantics is not compositional
(truth values are defined by taking infima over the whole logical
syntax), so subsequent results relating this logical distance to
notions of weak anonymity do not contradict our impossibility result on
(compositional) characteristic logics for probabilistic trace
semantics.

\section{Preliminaries}\label{sec:prelims}
Basic familiarity with category theory is assumed
(e.g.~\cite{DBLP:books/daglib/0023249}). We write~$\SET$ for the
category of sets and maps. Below, we recall some background on
(bounded) metric spaces and universal coalgebra.

\subparagraph*{Metric spaces}
\noindent The real unit interval $[0,1]$ will serve as the domain of
distances and truth values. Under the usual ordering $\le$, $[0,1]$
forms a complete lattice; we write $\bigvee,\bigwedge$ for joins and
meets in $[0,1]$ (e.g.\ $\bigvee_i x_i=\sup_i x_i$), and $\vee,\wedge$
for binary join and meet, respectively. We denote truncated addition
and subtraction by $\oplus$ and~$\ominus$, respectively;
that is, $x\oplus y=\min(x+y,1)$ and $x\ominus y=\max(x-y,0)$. These
operations form part of a structure of $[0,1]$ as a (co-)quantale; for
readability, we refrain from working with more general
quantales~\cite{DBLP:conf/fossacs/WildS21,DBLP:conf/csl/Forster0HNSW23}.

\begin{defn}A (bounded) \emph{pseudometric space} is a pair $(X, d)$
  consisting of a set $X$ and a function $d\colon X\times X \to [0,1]$
  satisfying the standard conditions of \emph{reflexivity} ($d(x,x)=0$
  for all $x\in X$), \emph{symmetry} ($d(x,y)=d(y,x)$ for all
  $x,y\in X$), and the \emph{triangle inequality}
  ($d(x, z)\le d(x,y) + d(y, z)$ for all $x, y, z \in X$); if
  additionally \emph{separation} holds (for $x,y \in X$, if
  $d(x, y) = 0$ then $x = y$), then~$(X,d)$ is a \emph{metric
    space}. A function $f\colon X \to Y$ between pseudometric spaces
  $(X, d_X)$ and $(Y, d_Y)$ is \emph{nonexpansive} if
  $d_Y(f(x), f(y)) \leq d_X(x, y)$ for all $x, y \in X$.  Metric
  spaces and nonexpansive maps form a category $\MET$.
\end{defn}
\noindent We often do not distinguish notationally between a
(pseudo-)metric space $(X,d)$ and its underlying set~$X$. Occasionally
we
use subscripts to make explicit the carrier to which a (pseudo-)metric is
associated, i.e. $d_X$ is the
(pseudo-)metric of the space with carrier $X$. The categorical
product $(X,d_X) \times (Y, d_Y)$ of (pseudo-)metric spaces equips
the Cartesian product $X \times Y$ with the supremum (pseudo-)metric
$d_{X \times Y}((a, b), (a', b')) = d_X(a, a') \vee d_Y(b,
b')$. Similarly, the \emph{Manhattan tensor} $\boxplus$ equips
$X\times Y$ with the \emph{Manhattan (pseudo-)metric}
$d_{X\boxplus Y}((a,b), (a',b')) = d_X(a, a') \oplus d_Y(b, b')$.  We
occasionally write elements of the product $X^{n+m}$ as $vw$ if
$v\in X^n$ and $w \in X^m$. Given
(pseudo-)metric spaces $X,Y$, the nonexpansive functions
$X\to Y$ form a (pseudo-)metric space under the standard supremum
distance.

\begin{example}
    \label{expl:monads}
    We recall some key examples of functors on~$\SET$ and~$\Met$.
  \begin{enumerate}[wide]
  \item We write $\Pset$ for the finite powerset functor on~$\SET$,
    and~$\Pmet$ for the lifting of $\Pset$ to $\MET$ given by the
    Hausdorff metric. Explicitly, for a metric space $(X, d)$ and
    $A, B \in \Pset X$,
    \begin{equation}\label{eq:hausdorff}
      d_{\Pmet X}(A, B) =  \textstyle
      (\bigvee_{\raisebox{1pt}{$\scriptstyle a\in A$}} \bigwedge_{b\in B} d(a, b)) \vee %
      (\bigvee_{\raisebox{1pt}{$\scriptstyle{b\in B}$}}\bigwedge_{\raisebox{1pt}{$\scriptstyle a\in A$}}d(a, b)).
    \end{equation}
    Both~$\Pset$ and~$\Pmet$ are monads, with multiplication taking
    big unions.
\item Related to the above, the closed Hausdorff monad $\Pc$ on $\MET$ sends a metric space~$X$ to the set of closed subsets of $X$, again equipped with the Hausdorff metric. For a nonexpansive function $f\colon X \to Y$,  $\Pc f$ sends  $A \in\Pc X$ to the closure of $f[A]$. Monad multiplication takes the closure of the big union.
\item Similarly, $\ftDist$ denotes the functor on $\SET$ that maps a
  set $X$ to the set of finitely supported probability distributions
  on $X$, and $\ftFinK$ denotes the lifting of $\ftDist$ to $\MET$
  that equips $\ftDist X$ with the Kantorovich metric. Explicitly, for
  a metric space $(X,d)$ and $\mu, \nu \in \ftDist X$,
    \begin{equation*}
      \textstyle d_{\ftFinK X}(\mu, \nu) = \bigvee_f \sum_{x\in X} f(x)
      (\mu(x) \ominus \nu(x))
    \end{equation*}
     where $f$
    ranges over all nonexpansive functions $X \to [0, 1]$. We often
    write elements of $\ftFinK X$ as finite formal sums
    $\sum p_i\cdot x_i$, with $x_i\in X$ and $\sum p_i=1$.
  \item\label{item:fuzzy-power} The \emph{finite fuzzy powerset}
    functor $\FPow_\omega$ is given on sets~$X$ by
    $\FPow_\omega X=\{A\colon X\to [0,1]\mid A(x)=0\text{ for almost
      all $x\in X$}\}$, and on maps $f\colon X\to Y$
    by %
    $\FPow_\omega f(A)(y)=\bigvee\{A(x)\mid f(x)=y\}$ for
    $A\in\FPow_\omega X$. That is, $\FPow_\omega X$ consists of the
    finite fuzzy subsets of~$X$, given by assigning membership degrees
    in $[0,1]$ to elements of~$X$, and $\FPow_\omega f$ acts by taking
    fuzzy direct images. We lift~$\FPow_\omega$ to a
    functor~$\FHaus_\omega$ on metric
    spaces %
  that equips $\FPow_\omega X$ with the fuzzy Hausdorff
  distance~\cite[Example~5.3.1]{DBLP:conf/fossacs/WildS21}. Explicitly,
		  $d_{\FHaus_\omega X}(A,B)=d_0(A,B)\vee d_0(B,A)$ for a metric space $(X,d)$ and
  $A,B\in\FPow_\omega X$,  where
  \begin{equation*}\label{eq:fuzzy-haus-main}\textstyle
    d_0(A,B)=\bigvee_{x}\bigwedge_{y} (A(x)\ominus B(y))\vee (A(x)\wedge d(x,y) ). %
  \end{equation*}
  Thus, $d_0(A,B)$ is analogous to the left-hand term in the binary
  join defining the Hausdorff metric~\eqref{eq:hausdorff}: Both terms
  can be read intuitively as `$B$ is far from~$A$ if there is~$x$ such
  that $x\in A$ and for all $y$, if $y\in B$ then~$y$ is far
  from~$x$', where~$d_0(A,B)$ takes into account that the sets~$A,B$
  are fuzzy (in particular, the `if $y\in B$' is reflected in the
  contribution of~$B(y)$ being negative).
  \end{enumerate}
\end{example}

\subparagraph*{Coalgebra} \label{sec:coalg}\noindent\emph{Universal
  coalgebra}~\cite{DBLP:journals/tcs/Rutten00} has established itself
as a way to reason about state-based systems at an appropriate level
of abstraction. It is based on encapsulating the transition type of
systems as an endofunctor $G\colon \catC \to \catC$ on a base
category~$\catC$. Then, a $G$-coalgebra $(X, \gamma)$ consists of a
$\catC$-object $X$, thought of as an object of \emph{states}, and a
morphism $\gamma\colon X \to G X$, thought of as assigning to each
state a collection of successors, structured according to~$G$. A
$\catC$-morphism $h\colon X \to Y$ is a morphism
 of $G$-coalgebras $(X, \gamma)\to(Y, \delta)$ if
 $G h \cdot \gamma = \delta \cdot h$.

 For a functor $G\colon \MET \to \MET$, one has a canonical notion of
 \emph{branching-time behavioural distance} $d^{G}_\gamma$ on a
 $G$-coalgebra $(X,\gamma)$~\cite{DBLP:conf/csl/Forster0HNSW23}. In
 case~$G$ is a lifting of a set functor (which means roughly that the
 underlying set of $GX$ is independent of the metric on~$X$), the
 general definition simplifies as follows: $d^G_\gamma$ is the least
 fixpoint of the map
 $d\mapsto
 d_{G(X,d)}\circ(\gamma\times\gamma)$~\cite{bbkk:trace-metrics-functor-lifting,DBLP:conf/csl/Forster0HNSW23}.

\begin{example}\label{expl:coalg}
  Throughout the paper, we \emph{fix a metric space~$\Act$} of
  \emph{labels}. Finitely branching metric transition systems with
  transition labels in~$\Act$ are coalgebras for the functor
  $\Pmet(\Act\times{-})$. (More precisely, a metric transition system
  is usually assumed to have a set as its state space, while
  $\Pmet(\Act\times{-})$-coalgebras more generally have a metric space
  of states, subsuming mere sets of states as discrete metric
  spaces). Similarly, coalgebras for the functor $\Pc(\Act\times{-})$
  are \emph{closed-branching} metric transition systems, where sets of
  successors can be infinite but are required to be closed. With few
  exceptions (e.g.~\cite{DBLP:conf/csl/Forster0HNSW23}), most
  coalgebraic approaches to behavioural metrics (e.g.\
  \cite{bbkk:trace-metrics-functor-lifting,DBLP:conf/concur/KonigM18,DBLP:journals/lmcs/WildS22,DBLP:journals/lmcs/KupkeR21})
  rely on the functor being a lifting of a $\SET$-functor. We work
  with unrestricted functors on $\MET$, thus, e.g., covering the
  above-mentioned functor $\Pc(\Act\times{-})$, which is not a lifting
  of a set functor.  We use trace semantics on metric labelled
  transition systems (both finitely branching and closed-branching) as
  a running example of concepts as they appear throughout the text.
\end{example}

\subparagraph*{Quantitative Coalgebraic Modal Logic} We proceed to
introduce the requisite notion of quantitative coalgebraic modal
logic~\cite{DBLP:conf/ijcai/SchroderP11,DBLP:conf/concur/KonigM18,DBLP:journals/lmcs/WildS22,DBLP:journals/corr/abs-2307-14826},
in a formulation geared towards easing the extraction of invariant
fragments for various
semantics~\cite{DBLP:journals/corr/abs-2307-14826}, and instantiated
to the category of metric spaces. The notion of quantitative
coalgebraic modal logic will also serve as the yardstick for our
negative result on characteristic modal logics for probabilistic
metric trace semantics (\autoref{sec:prob-met-trace}).

Syntactically, a \emph{modal logic} is a triple
$\mathcal{L} = (\Theta, \mathcal{O}, \Lambda)$ where~$\Theta$ is a set
of truth constants,~$\mathcal{O}$ is a set of propositional operators,
each with associated finite arity, and~$\Lambda$ is a set of modal
operators, also each with an associated finite arity. For readability,
we restrict to unary modal operators; extending our positive results
to modal operators of higher arity is simply a matter of adding
indices. The set of \emph{formulae} of $\mathcal{L}$ is then given by
the grammar
\begin{equation*}
  \phi ::= c \mid p(\phi_1, \ldots, \phi_n) \mid L\phi \qquad\qquad(c
  \in \Theta,\; p
  \in \mathcal{O}\text{ $n$-ary},\; L \in \Lambda).
\end{equation*}
Formulae are interpreted in $G$-coalgebras for a given functor
$G\colon \MET \to \MET$, and take values in the truth value object
$\Omega=[0,1]$, which we equip with the standard metric
$d_\Omega(x,y)=|x-y|$.  Moreover, the semantics is parametric in the
following components:

\begin{itemize}
\item For every $c \in \Theta$, a nonexpansive map
  $\hat{c}\colon 1 \to \Omega$.
\item For $p \in \mathcal{O}$ with arity $n$, a nonexpansive map
  $\sem{p}\colon\Omega^n \to \Omega$
	\item For $L\in \Lambda$, a nonexpansive map $\sem{L}\colon G\Omega\to \Omega$
\end{itemize}

\noindent The evaluation of a formula $\phi$ on a $G$-coalgebra
$(X, \gamma)$ is then a nonexpansive map
$\sem{\phi}_\gamma\colon X \to \Omega$, inductively defined
by\jfnote{Possibly layout a bit more nicely}
\begin{equation*}
\begin{gathered}
  \sem{c}_\gamma = (X \xrightarrow{!} 1 \xrightarrow{\hat{c}} \Omega)
  \qquad \qquad \sem{p(\phi_1, \ldots, \phi_n)}_\gamma  =
  (X\xrightarrow{\langle \sem{\phi_1}_\gamma, \ldots,
  \sem{\phi_n}_\gamma\rangle}\Omega^n\xrightarrow{\sem{p}}\Omega)
  \\[0.5em]
  \sem{L \phi}_\gamma  = (X\xrightarrow{\gamma} GX\xrightarrow{G\sem{\phi}_\gamma} G\Omega\xrightarrow{\sem{L}}\Omega)
\end{gathered}
\end{equation*}
\begin{example}\label{expl:modalities}
  We briefly exemplify the semantics of modalities: Take the functor
  $G=\Pmet(\Act\times (-))$ modelling metric transition systems
  (\autoref{expl:coalg}), and define the interpretation
  $\sem{\modal{a}}\colon\Pmet(\Act\times\Omega)\to\Omega$ of
  modalities $\Diamond_a$, for $a\in\Act$, by
  $\sem{\modal{a}}(U) = \bigvee_{(b, v)\in U} (1- d(a,b))\land
  v$. Then, roughly speaking, the degree to which a state in a metric
  transition system satisfies a formula $\Diamond_a\phi$ is the degree
  to which it has a $b$-successor that satisfies~$\phi$, for some~$b$
  that is close to~$a$. (The use of $1- d(a,b)$ is owed to the usual
  discrepancy between~$1$ representing `true' but also `far apart'.)
\end{example}
\noindent In the framework defined so far, truth constants are
interchangeable with nullary propositional operators, but in the
setting of graded logics (\autoref{sec:logic}), the two concepts will
play syntactically and semantically distinct roles. In particular,
invariance w.r.t.\ a target semantics (\autoref{thm:invariance}) will
in general hold only for formulae of \emph{uniform depth}, that is,
formulae in which all occurrences of truth constants are nested under
the same number of modal operators. In cases where there are no truth
constants, all formulae are
uniform. %
We write $\mathcal{L}_\unif$ for the set of uniform-depth
$\mathcal L$-formulae.

\begin{defn}
	\label{dfn:log-distance}
        \emph{Logical distance} under the logic $\mathcal{L}$ on a
        $G$-coalgebra $(X, \gamma)$ is the
        pseudometric~$d^{\mathcal{L}}$ given by
  $\textstyle d^{\mathcal{L}}(x, y) = \bigvee \{ d_\Omega(\sem{ \phi}_\gamma(x), \sem{ \phi }_\gamma(y)) \mid \phi \in
    \mathcal{L}_\unif\}$.\jfnote{Expand when space is available}
\end{defn}

\noindent Logical distance is always a lower bound for branching-time
behavioural
distance~\cite{DBLP:conf/concur/KonigM18,DBLP:journals/lmcs/WildS22,DBLP:conf/csl/Forster0HNSW23};
we discuss details in \autoref{expl:branching-time}.

\section{Probabilistic Metric Trace
Semantics}\label{sec:prob-met-trace} Finitely branching
\emph{probabilistic metric transition systems} over a metric space of
transition labels~$\Act$ are coalgebras for the functor
$\Gprob = \ftFinK(\Act\boxplus(-))$ (cf.\ \autorefexpls{expl:monads}
and~\ref{expl:coalg}). %
  The \emph{probabilistic (metric) trace
    semantics}~\cite{DBLP:conf/concur/Christoff90} of a probabilistic
  transition system calculates, at each depth~$n$, a distribution over
  length-$n$ traces. One then obtains a notion of \emph{depth-$n$
    probabilistic trace distance} $d_n^\ptrace$, which takes
  Kantorovich distances of depth-$n$ trace distributions under the
  Manhattan distance on traces. Formal definitions are as follows.
  \begin{defn}
    We write $\Act^{\boxplus n}$ for the $n$-fold Manhattan tensor
    $\Act\boxplus\dots\boxplus\Act$. Let $(X, \gamma)$ be a
    $\Gprob$-coalgebra.  For each $x\in X$, the \emph{depth-$n$ trace
      distribution} $\mu^n_x \in \ftFinK(\Act^{\boxplus n})$ is
    inductively defined as
    $\mu^{n+1}_x(aw) = \sum_{y\in X}\gamma(x)(a,y)\mu^{n}_y(w)$ for
    $a \in \Act$ and $w\in \Act^n$, with
    $\mu^0_x\in \ftFinK(\Act^{\boxplus 0}) \cong \ftFinK(1)$ being the
    unique distribution on the singleton set~$1$.  The
    \emph{probabilistic trace distance} on $X$ is
    $d^\ptrace=\bigvee_{n<\omega}d_n^\ptrace$, where
    $d^\ptrace_n(x, y) = d_{\ftFinK(\Act^{\boxplus n})}(\mu^n_x,
    \mu^n_y)$.
  \end{defn}

\noindent Consider the following concrete example, where we assume
that $d(b,c) = 0.5$.

\begin{tikzpicture}[node distance=2cm, auto]
  \node at (0, 0) (A)[circle, fill,inner sep=2pt] {};
  \node at ([shift={(160:0.4)}]A) {$x$};
  \node at (1.5, 0) (E)[circle, fill,inner sep=2pt] {};

  \node at (7, 0) (A1)[circle, fill,inner sep=2pt] {};
  \node at ([shift={(45:0.4)}]A1) {$y$};
  \node at (5.5, 1) (D1)[circle, fill,inner sep=2pt] {};
  \node at (5.5, -1) (D2)[circle, fill,inner sep=2pt] {};

  \node at (4.0, 0) (Sink)[circle, fill,inner sep=2pt] {};

  \path[->,arrows = {-Stealth[length=6pt, inset=2pt]}]
    (A) edge node[below] {\(1\)} node[above] {$a$} (E)

    (Sink) edge[out=220,in=320,looseness=20]  node[below] {\(1\)} node[above] {$a$} (Sink)

    (D1) edge node[above left=-2pt] {$b$} node[below right=-2pt] {$1$}(Sink)
    (D2) edge node[above right=-2pt] {$c$} node[below left=-2pt] {$1$}(Sink)

    (E) edge[out=30, in=150] node[above] {$b\quad\frac{1}{2}$} (Sink)
    (E) edge[out=-30, in=-150] node[below] {$c\quad\frac{1}{2}$} (Sink)

    (A1) edge node[above right=-2pt] {\(a\)} node [below left=-2pt] {$\frac{1}{4}$} (D1)
    (A1) edge node[above left=-2pt] {\(a \)} node[below right=-2pt]{$\frac{3}{4}$} (D2);

\end{tikzpicture}

\noindent For  $n \ge 2$ we then have by the above definition that
$\mu^{n}_x = \frac{1}{2}aba^{n-2} + \frac{1}{2}aca^{n-2}$ while
$\mu^{n}_y = \frac{1}{4}aba^{n-2} + \frac{3}{4}aca^{n-2}$. The
distance of the two relevant traces is given by $d(aba^{n-2}, aca^{n-2}) = d(a, a)
\oplus d(b,c) \oplus d(a,a) \oplus \ldots \oplus d(a,a)= 0.5$.
Calculating the Kantorovich distance of trace distributions then gives us that $d(\mu^{n}_x,
\mu^{n}_y) = \frac{1}{4}d(aba^{n-2}, aca^{n-2}) = 0.125$, and by
extension $d^\ptrace(x,y) = 0.125$.

  One would now like to have a logic that characterizes the trace
  distance $d^\ptrace$. However, we establish the following
  impossibility result instead:

\begin{theorem}
  \label{thm:propNoExpr}
  Let $\mathcal{L} = (\Theta, \mathcal{O}, \Lambda)$ be a coalgebraic modal logic
  with unary modalities for the functor
  $\Gprob$, over a  non-discrete metric space~$\Act$ of labels. Then $d^\mathcal{L}\not = d^{\mathsf{ptrace}}$.
\end{theorem}
In other words, no quantitative coalgebraic modal logic with unary
modalities has a compositionally defined fragment that characterizes
probabilistic metric trace distance. The restriction to coalgebraic
modal logics effectively means only that modal logics should be
invariant under the standard branching-time semantics and have only
next-step
modalities~\cite{DBLP:journals/ndjfl/Pattinson04,DBLP:journals/tcs/Schroder08}. \autoref{thm:propNoExpr}
implies in particular that the logic featuring modalities $\Diamond_a$
for $a\in\Act$, with $\Diamond_a\phi$ being the expected truth value
of~$\phi$ restricted to $a$-successors, fails to characterize
probabilistic metric trace distance (even though it characterizes
two-valued probabilistic trace
\emph{equivalence}~\cite{DBLP:journals/mscs/BernardoB08,DBLP:conf/concur/DorschMS19}).
In fact, it can even be shown that giving up the
requirement of interpretations of modalities being nonexpansive does
not help.

\begin{proof}[Proof sketch (\autoref{thm:propNoExpr})]
  Suppose that $\mathcal{L}$ is invariant under probabilistic metric
  trace semantics ($d^{\mathcal{L}}\le d^{\mathsf{ptrace}}$); we show
  that $\mathcal{L}$ fails to be expressive
  ($d^{\mathcal{L}}\not\ge d^{\mathsf{ptrace}}$). As an intermediate
  step, we show that invariance under probabilistic metric trace
  semantics implies that modal operators are affine maps. Then
  calculation shows that affine modalities are unable to distinguish
  the states~$x$ and~$y$ in the following system, where
  $d(a,b) = v<1$, to a degree greater than~$v^2$, even though the
  behavioural distance of~$x$ and~$y$ under probabilistic trace
  semantics is~$v$.

\vspace{-1.3em}  \centering
\qquad\qquad\qquad\qquad
\begin{tikzpicture}[node distance=2cm, auto]
				\node at (2.5, 1) (A)[circle, fill,inner sep=2pt] {};
				\node at ([shift={(45:0.4)}]A) {$x$};
				\node at (1.5, 0) (D)[circle, fill,inner sep=2pt] {};
				\node at ([shift={(160:0.4)}]D) {$x_a$};
				\node at (3.5, 0) (E)[circle, fill,inner sep=2pt] {};
				\node at ([shift={(30:0.4)}]E) {$x_b$};

				\node at (6.5, 1) (A1)[circle, fill,inner sep=2pt] {};
				\node at ([shift={(45:0.4)}]A1) {$y$};
				\node at (5.5, 0) (D1)[circle, fill,inner sep=2pt] {};
				\node at ([shift={(160:0.4)}]D1) {$y_b$};
				\node at (7.5, 0) (E1)[circle, fill,inner sep=2pt] {};
				\node at ([shift={(30:0.4)}]E1) {$y_a$};

				\path[->,arrows = {-Stealth[length=6pt, inset=2pt]}]
					(A) edge node[left=2ex] {\(\frac{1}{2}\)} node[right=0.5ex] {$a$} (D)
						edge node[right=2ex] {\(\frac{1}{2}\)} node[left=0.5ex] {$b$} (E)

					(D) edge[out=220,in=320,looseness=30]  node[below] {\(1\)} node[above] {$a$} (D)
					(E) edge[out=220,in=320,looseness=30]  node[below] {\(1\)} node[above] {$b$} (E)

					(D1) edge[out=220,in=320,looseness=30]  node[below] {\(1\)} node[above] {$b$} (D1)
					(E1) edge[out=220,in=320,looseness=30]  node[below] {\(1\)} node[above] {$a$} (E1)

					(A1) edge node[left=2ex] {\(\frac{1}{2}\)} node[right=0.5ex] {$a$}(D1)
						 edge node[right=2ex] {\(\frac{1}{2}\)} node[left=0.5ex] {$b$}(E1) ; 
\end{tikzpicture}\vspace{-0.8em}
\end{proof}

\noindent We leave the question of whether a characteristic logic with higher-arity
modalities exists as an open problem.

While expressive quantitative coalgebraic logics for branching-time
semantics exist for a wide variety of
systems~\cite{DBLP:conf/concur/KonigM18,DBLP:journals/lmcs/WildS22,DBLP:conf/csl/Forster0HNSW23,GoncharovEA23},
this is thus apparently not always the case for linear-time
semantics. The no-go result above emphasizes the challenges of the
quantitative setting and the need for a theory of quantitative
coalgebraic logics beyond branching time. In the following, we will
address precisely this problem, by adopting techniques from the theory
of graded semantics and highlighting issues unique to the metric
setting.

\section{Graded Monads and Graded Algebras} The framework of
\emph{graded
  semantics}~\cite{DBLP:conf/concur/DorschMS19,DBLP:conf/calco/MiliusPS15}
is based on the central notion of \emph{graded monads}, which
algebraically describe the structure of observable behaviours, in
particular identifications beyond branching time, at each finite
depth. Here, \emph{depth} is understood as look-ahead, measured in
terms of the number of transition steps.

\begin{defn}
  A \emph{graded monad}
  $\mathbb{M}=((M_n)_{n\in\nat},\eta,(\mu^{n,k})_{n,k\in\nat})$ on a
  category\;$\catC$ consists of a family of functors
  $M_n\colon \catC \to \catC$ for $n \in \nat$ and natural
  transformations $\eta\colon \Id \to M_0$ (the \emph{unit}) and
  $\mu^{n,k}\colon M_nM_k \to M_{n+k}$ for all $n, k \in \nat$ (the
  \emph{multiplications}), subject to essentially the same laws as
  ordinary monads up to the insertion of grades; specifically, one has
  \emph{unit laws} $\mu^{0,n}\cdot \eta M_n=\id_{M_n}=\mu^{n,0}\cdot M_n\eta$
  and an \emph{associative law}
  $\mu^{n+k,m} \cdot \mu^{n,k}M_m=\mu^{n,k+m}\cdot M_n\mu^{k,m}$.
\end{defn}
In particular, $(M_0, \eta, \mu^{00})$ is an ordinary (non-graded)
monad.

The understanding of the data constituting a graded monad is similar
as for plain monads: Roughly speaking (this will be made more precise
in \autoref{sec:theories}), $M_nX$ may be thought of as a space of
terms of depth~$n$, modulo given identities, over variables from~$X$;
$\mu^{nk}$ substitutes depth-$k$ terms into a depth-$n$ term,
obtaining a depth-$(n+k)$ term; and~$\eta$ converts variables into terms
of depth~$0$.
\begin{example}\label{expl:graded-monads}
  We discuss graded monads modelling the linear-time end of the
  spectrum, noting that graded monads cover also branching-time
  (\autoref{expl:branching-time}) and intermediate semantics,
  involving simulation, readiness, failures
  etc.~\cite{DBLP:conf/concur/DorschMS19}.  A \emph{Kleisli
    distributive law} is a natural transformation
  $\lambda\colon FT \to TF$ where~$F$ is a functor and~$T$ a monad,
  subject to coherence with the monad
  structure~\cite{DBLP:journals/lmcs/HasuoJS07}. This yields a graded
  monad with $M_n = TF^n$~\cite{DBLP:conf/calco/MiliusPS15}; here,~$T$
  may be understood as defining the branching type of the system,
  and~$F$ as defining a type of accepted structure. We will use the
  following instance of this construction as a running example: Take
  $F = \Act \times (-)$ and $T = \Pmet$ or $T = \Pc$ (corresponding to
  nondeterministic branching). Then
  $\lambda(a, U) = \{(a, x) \mid x \in U\}$ defines a distributive law
  $\lambda\colon \Act \times T({-}) \to T(\Act \times ({-}))$ (in
  particular,~$\lambda$ is nonexpansive). We obtain the \emph{graded
    metric trace monads} $M_n =T(\Act^n \times (-))$.  \end{example}

\noindent Graded monads come with a graded analogue of Eilenberg-Moore
algebras, which play a central role in the semantics of graded
logics~\cite{DBLP:conf/calco/MiliusPS15,DBLP:conf/concur/DorschMS19}.

\begin{defn}[Graded Algebra]
  Let $\mathbb{M}$ be a graded monad in $\catC$. A \emph{graded
  $M_n$-algebra} $((A_k)_{k \leq n}, (a^{mk})_{m+k \leq n})$ consists
  of a family of $\catC$-objects $A_i$ and morphisms
  $a^{mk}\colon M_mA_k \to A_{m+k}$ satisfying essentially the same
  laws as a monad algebra, up to insertion of the
  grades. Specifically, we have $a^{0m} \cdot \eta_{A_m} = \id_{A_m}$ for
  $m \leq n$, and whenever $m + r + k \leq n$, then
  $a^{m+r,k} \cdot \mu^{m,r}_{A_k}=a^{m,r+k} \cdot M_ma^{r,k}$.
  An \emph{$M_n$-homomorphism} of $M_n$-algebras $A$ and $B$ is a
  family $(f_k\colon A_k \to B_k)_{k \leq n}$ of maps such that
  whenever $m + k \leq n$, then
  $f_{m+k} \cdot a^{m,k}=b^{m,k} \cdot M_mf_k$. Graded $M_n$-Algebras
  and their homomorphisms form a category $\Galg{n}{\mathbb{M}}$.
\end{defn}

\noindent That is, elements of a graded algebra are stratified by
depth, and applying an operation of depth~$m$ to elements of depth~$k$
yields elements of depth $m+k$, For $n = 1$, this definition
instantiates as follows: An $M_1$-algebra is a tuple
$(A_0, A_1,a^{00},a^{01},$ $ a^{10})$, such that 1) $(A_0, a^{00})$
and $(A_1, a^{01})$ are $M_0$-algebras.  2) (Homomorphy)
$a^{10} \colon M_1A_0 \to A_1$ is an $M_0$-homomorphism
$(M_1A_0, \mu^{01})\to(A_1, a^{01})$.  3) (Coequalization)
$a^{10} \cdot M_1a^{00} = a^{10} \cdot \mu^{10}$, i.e. the following
diagram commutes (without necessarily being a coequalizer):
	\begin{equation}\label{eq:coeq}
	\begin{tikzcd}
		M_1M_0A_0 \arrow[r,swap, "\mu^{10}", shift right] \arrow[r, "M_1a^{00}", shift left] & M_1A_0 \arrow[r, "a^{10}"] & A_1
	\end{tikzcd}
	\end{equation}

\noindent It is easy to see that $((M_kX)_{k\leq n}, (\mu^{m,k})_{m+k \leq n})$
is an $M_n$-algebra for every $\catC$-object $X$.
 Again, $M_0$-algebras are just
(non-graded) algebras for the monad $(M_0, \eta, \mu^{00})$.
        \noindent The semantics of modalities will later need the
        following property:

\begin{defn}[Canonical algebras]
  Let
  $({-})_0\colon \Galg{1}{\mathbb{M}}\rightarrow \Galg{0}{\mathbb{M}}$ be
  the functor taking an $M_1$-algebra
  $A=((A_k)_{k \leq 1}, (a^{mk})_{m+k \leq 1})$ to the $M_0$-algebra
  $(A_0, a^{00})$. An $M_1$-algebra $A$ is
  \emph{canonical} if it is free over $(A)_0$, i.e.\ if for every
  $M_1$-algebra $B$ and $M_0$-homomorphism
  $f: (A)_0 \rightarrow (B)_0$, there is a unique $M_1$-homomorphism
  $g: A \rightarrow B$ such that $(g)_0 = f$.
\end{defn}

\begin{lemma}(\cite[Lemma 5.3]{DBLP:conf/concur/DorschMS19})
  \label{lem:canonical}
  An $M_1$-algebra~$A$ is canonical iff~\eqref{eq:coeq} is a
  coequalizer diagram in the category of $M_0$-algebras.
\end{lemma}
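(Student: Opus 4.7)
The plan is to pit the universal property of coequalizers in $\Galg{0}{\monad}$ against the universal property defining canonical algebras, namely freeness along $(-)_0$. A preliminary observation is that \eqref{eq:coeq} really does live in $\Galg{0}{\monad}$: each object of the form $M_1 X$ carries a canonical $M_0$-algebra structure via $\mu^{0,1}_X$; the map $\mu^{1,0}_{A_0}$ is an $M_0$-homomorphism by the graded associativity law $\mu^{1,0}\mu^{0,1}M_0 = \mu^{0,1}M_0\mu^{1,0}$; the map $M_1 a^{00}$ is an $M_0$-homomorphism by naturality of $\mu^{0,1}$; and $a^{10}$ is an $M_0$-homomorphism by the homomorphy axiom for $M_1$-algebras. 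The coequalization axiom of an $M_1$-algebra then says precisely that $a^{10}$ coequalizes the parallel pair.

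For the direction \emph{coequalizer implies canonical}, assume \eqref{eq:coeq} is a coequalizer in $\Galg{0}{\monad}$. Given any $M_1$-algebra $B$ and $M_0$-homomorphism $f\colon (A)_0\to (B)_0$, set $g_0 := f$ and consider $b^{10}\circ M_1 f\colon M_1 A_0\to B_1$. Using naturality of $\mu^{1,0}$, the $M_0$-hom equation $f\circ a^{00} = b^{00}\circ M_0 f$, and coequalization in $B$, one checks that $b^{10}\circ M_1 f$ coequalizes $\mu^{1,0}_{A_0}$ and $M_1 a^{00}$. The universal property then yields a unique $M_0$-homomorphism $g_1\colon A_1\to B_1$ with $g_1\circ a^{10} = b^{10}\circ M_1 f$, which is exactly the $M_1$-homomorphy condition linking the main structure maps; the remaining $M_1$-homomorphism laws follow from $g_0$ and $g_1$ being $M_0$-homomorphisms. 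Any extension with $g_0 = f$ is forced to agree on $A_1$ because $a^{10}$, as a coequalizer, is epi in $\Galg{0}{\monad}$.

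For \emph{canonical implies coequalizer}, form the coequalizer $q\colon M_1 A_0\to C$ of $\mu^{1,0}_{A_0}$ and $M_1 a^{00}$ in $\Galg{0}{\monad}$, and let $c^{0,1}$ denote the $M_0$-algebra structure on $C$. The tuple $\tilde A = (A_0, C, a^{00}, c^{0,1}, q)$ satisfies all $M_1$-algebra axioms: the $0$- and $1$-parts are $M_0$-algebras by construction, and $q$ is an $M_0$-homomorphism coequalizing the pair by its very definition. By the previous paragraph (applied to $\tilde A$, whose main structure map is tautologically a coequalizer), $\tilde A$ is canonical over $(A_0, a^{00})$. Since $A$ is canonical over the same $M_0$-algebra by hypothesis, the identity on $(A)_0$ extends uniquely in both directions, yielding mutually inverse $M_1$-homomorphisms $A\leftrightarrows \tilde A$. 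Their $1$-components give an iso $\phi_1\colon A_1\to C$ with $\phi_1\circ a^{10} = q$, so $a^{10}$ itself is a coequalizer.

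The main obstacle is the preliminary bookkeeping: identifying the correct $M_0$-algebra structures on $M_1 A_0$ and $M_1 M_0 A_0$ and checking that the parallel pair, the main structure map of any $M_1$-algebra, and the comparison maps constructed in each direction are all $M_0$-homomorphisms, which unpacks a small tower of graded unit, naturality, and associativity equations. Once this is in place, each direction reduces to a single symmetric invocation of a universal property.
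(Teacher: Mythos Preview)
The paper does not prove this lemma; it simply cites it from \cite[Lemma~5.3]{DBLP:conf/concur/DorschMS19}. Your argument is the natural one and the first direction is carried out correctly.

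There is, however, a gap in the direction \emph{canonical implies coequalizer}: you \emph{form} the coequalizer $q\colon M_1 A_0 \to C$ in $\Galg{0}{\monad}$ without arguing that it exists. Existence of coequalizers in Eilenberg--Moore categories is not automatic for an arbitrary monad on an arbitrary base category. You could observe that the pair is reflexive (with common section $M_1\eta_{A_0}$, using the $M_0$-algebra unit law and the graded unit law) and then appeal to suitable hypotheses on reflexive coequalizers, but a cleaner fix sidesteps the issue entirely. Given any $M_0$-homomorphism $h\colon (M_1 A_0,\mu^{0,1}) \to (C,c)$ coequalizing the parallel pair, note that the tuple $B=(A_0, C, a^{00}, c, h)$ is itself an $M_1$-algebra with $(B)_0=(A)_0$: the $M_1$-algebra axioms for~$B$ are precisely the hypotheses you have on~$h$. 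Now apply canonicity of~$A$ to the identity $\id\colon (A)_0\to(B)_0$. The resulting unique $M_1$-homomorphism has $1$-component $g_1\colon A_1\to C$ satisfying $g_1\circ a^{10}=h$, and uniqueness of this factorization follows because any such $M_0$-homomorphism $\bar h$ makes $(\id,\bar h)$ an $M_1$-homomorphism $A\to B$ extending $\id$. This verifies the coequalizer universal property directly, without constructing an auxiliary colimit.
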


\section{Graded Quantitative Theories}\label{sec:theories}
\lsnote{@Jonas: Maybe swap with graded semantics section}
\noindent Monads on $\SET$ are induced by equational theories
\cite{Linton86}. By equipping each operation with an assigned depth
and requiring each axiom to be of uniform depth, one obtains a notion
of \emph{graded equational theory} which, modulo size issues, can be
brought into bijective correspondence with graded
monads~\cite{DBLP:conf/calco/MiliusPS15}. On the other hand, Mardare
et al.~\cite{mardare2016quantitative} introduce a system of
quantitative equational reasoning, with formulae of the form
$s =_\epsilon t$ understood as ``$s$ differs from $t$ by at most
$\epsilon$''. These quantitative equational theories induce monads on
the category of metric spaces. We introduce a graded version of this
system to present graded monads in $\MET$, keeping to finitary
operations (and hence finite branching) for ease of presentation.

\begin{defn}[Graded signatures, uniform terms]\label{def:terms}
  A \emph{graded signature} consists of an algebraic
  signature~$\Sigma$ and a function
  $\delta\colon \Sigma \rightarrow \nat$ assigning a \emph{depth} to
  each algebraic operation. \emph{Uniform depth} of terms is then
  defined inductively: Variables have uniform depth~$0$, and for
  $m$-ary $f\in\Sigma$, $f(t_1, \ldots, t_m)$ has uniform depth $n+k$
  if $\delta(f) = n$ and all~$t_i$ have uniform depth~$k$. In
  particular, constants $c \in \Sigma$, as terms, have uniform
  depth~$n$ for all $n \geq \delta(c)$.  We write
  $\mathbb{T}^\Sigma_nX$, or just $\mathbb{T}_nX$, for the set of
  terms of uniform depth $n$
  over~$X$. %
  A \emph{substitution of uniform depth $n$} is a function
  $\sigma\colon X \rightarrow \mathbb{T}_nY$. Such a substitution
  extends to a map
  $\sigma: \mathbb{T}_kX \rightarrow \mathbb{T}_{k+n}Y$ on terms for
  all $k \in \nat$, where as usual one defines
  $\sigma(f(t_1, \ldots, t_m)) = f(\sigma(t_1), \ldots,
  \sigma(t_m))$. A substitution is \emph{uniform-depth} if it is of
  uniform depth~$n$ for some~$n$.
\end{defn}

\begin{defn}[Graded quantitative theory]\label{def:graded-theory}
  For a set~$Z$, we let $\qeq(Z)$ denote the set of quantitative
  equalities $z_1 =_\epsilon z_2$ where $z_1,z_2 \in Z$ and
  $\epsilon \in \V$.
  Given a set~$X$ of variables, we then write
  $\qeq(\mathbb{T}(X)) = \bigcup_{n\in \nat} \qeq(\mathbb{T}_n(X))$;
  that is, $\qeq(\mathbb{T}(X))$ is the set of uniform-depth
  quantitative equalities among $\Sigma$-terms over~$X$.  A
  \emph{quantitative theory} $\mathcal{T} = (\Sigma, \delta, E)$
  consists of a graded signature $(\Sigma, \delta)$ and a set
  $E \subseteq \mathcal{P}(\qeq(X)) \times \qeq(\mathbb{T}X)$ of
  \emph{axioms}.  Axioms $(\Gamma,s=_\epsilon t)$ are written in the
  form $\Gamma\vdash s=_\epsilon t$; we refer to~$\Gamma$ as the
  \emph{context} of the axiom. The \emph{depth} of
  $\Gamma\vdash s=_\epsilon t$ is that of $s=_\epsilon t$.  We say
  that~$\mathcal{T}$ is \emph{depth-1} if all its operations and
  axioms have depth at most~$1$.
\end{defn}
The context $\Gamma$ of an axiom $\Gamma\vdash s=_\epsilon t$ forms a
constraint on the variables that is required in order for
$s=_\epsilon t$ to hold. Correspondingly, \emph{derivability} of
quantitative equalities in~$\qeq(\mathbb{T}(X))$ over a graded
quantitative theory~$\mathcal{T} = (\Sigma, \delta, E)$ in a
\emph{context} $\Gamma_0 \in \mathcal{P}(\qeq(X))$ is defined
inductively by the following rules:
\begin{gather*}
  \lrule{triang}{t=_\epsilon s\quad
  s=_{\epsilon'}u}{t=_{\epsilon+\epsilon'}u} \qquad
  \lrule{refl}{}{s=_0s} \qquad \lrule{sym}{t=_\epsilon
  s}{s=_{\epsilon}t}\\[0.5em]
  \lrule{wk}{t=_\epsilon s}{t=_{\epsilon'}s}\;(\epsilon' \geq \epsilon) \qquad
  \lrule{arch}{\{t =_{\epsilon'}s\mid \epsilon' > \epsilon\}}{t=_\epsilon s} \qquad
  \lrule{assn}{}{\,\phi\,}\;(\phi\in\Gamma_0)\\[0.5em]
\lrule{ax}{\{\sigma(u) \mid u \in \Gamma\}}{\sigma(t)=_\epsilon
  \sigma(s)}\;((\Gamma, t=_\epsilon s)\in E)\qquad
	\lrule{nexp}{t_1=_\epsilon s_1 \qquad\dots\qquad t_n=_\epsilon s_n}
	{f(t_1,\dots,t_n)=_\epsilon f(s_1,\dots,s_n)}
\end{gather*}
where $\sigma$ is a uniform-depth substitution. Note the difference
between rules $(\mathbf{ax})$ and $(\mathbf{assn})$: Quantitative
equalities from the theory can be substituted into, while this is not
sound for quantitative equalities from the context. A graded
quantitative equational theory \emph{presents} a graded monad~$\monad$
on $\MET$ where $M_nX$ is the set of terms of uniform depth $n$ over
variables in~$X$, quotiented by the equivalence relation that
identifies terms $s,t$ if $s=_0t$ is derivable in context~$X$, with
the distance $d_{M_n}([s], [t]) = \epsilon$ of equivalence classes
$[s], [t] \in M_nX$ being the least~$\epsilon$ such that
$s =_\epsilon t$ is derivable (which exists by
(\textbf{arch})). Multiplication collapses terms-over-terms, and the
unit maps an element of $x \in X$ to $[x] \in M_0X$.
\begin{remark}
  The above system for quantitative reasoning follows Ford et
  al.~\cite{DBLP:conf/calco/FordMS21} in slight modifications to the
  original (ungraded) system~\cite{mardare2016quantitative}. In
  particular, we make do without a cut rule, and allow substitution
  only into axioms (substitution into derived equalities is then
  admissible~\cite{DBLP:conf/calco/FordMS21}). We include the rule
  \textbf{(nexp)} ensuring that all operations are nonexpansive,
  i.e.\ the induced graded monad is \emph{enriched} (acts
  nonexpansively on functions).
\end{remark}
\noindent We recall that a graded monad is
\emph{depth-$1$}~\cite{DBLP:conf/calco/MiliusPS15,DBLP:conf/concur/DorschMS19} if $\mu^{nk}$ and
$M_0\mu^{1k}$ are epi-transformations and the diagram below is a
coequalizer of $M_0$-algebras for all~$X$ and $n < \omega$:
\begin{equation}\label{coequalizer}
\begin{tikzcd}
	M_1M_0M_nX \arrow[r,swap, "\mu^{10}M_n", shift right] \arrow[r, "M_1\mu^{0n}", shift left] & M_1M_nX \arrow[r, "\mu^{1n}"] & M_{1+n}X.
\end{tikzcd}
\end{equation}

\noindent By~\autoref{lem:canonical} the following is then immediate:

\begin{proposition}(\cite[Corollary 5.4]{DBLP:conf/concur/DorschMS19})
  \label{canonical}
  If $\mathbb{M}$ is a depth-1 graded monad, then for every $n\in \nat$ and every object~$X$, the $M_1$-algebra with carriers $M_nX$, $M_{n+1}X$ and multiplications as algebra structure is canonical.
\end{proposition}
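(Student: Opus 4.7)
The plan is to reduce the claim directly to Lemma~\ref{lem:canonical} by exhibiting the candidate algebra and showing that the requisite coequalizer diagram coincides with the one packaged into the definition of a depth-$1$ graded monad.

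First, I would fix $X$ and $n$ and set $A_0=M_nX$, $A_1=M_{n+1}X$, $a^{00}=\mu^{0,n}_X$, $a^{01}=\mu^{0,n+1}_X$, and $a^{10}=\mu^{1,n}_X$. That this tuple actually is an $M_1$-algebra is a routine book-keeping exercise: the two $M_0$-algebra laws for $(A_0,a^{00})$ and $(A_1,a^{01})$ are instances of the graded unit law $\mu^{0,k}\eta M_k=\id_{M_k}$ and the graded associativity law specialised to $(0,0,k)$; the homomorphy of $a^{10}=\mu^{1,n}_X$ as a map $(M_1A_0,\mu^{0,1}_{A_0})\to(A_1,a^{01})$ is the associativity instance with indices $(0,1,n)$; and the coequalisation identity $a^{10}\circ M_1a^{00}=a^{10}\circ\mu^{1,0}_{A_0}$ is exactly the associativity instance with indices $(1,0,n)$, namely $\mu^{1,n}_X\circ M_1\mu^{0,n}_X=\mu^{1,n}_X\circ\mu^{1,0}_{M_nX}$.

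Next I would invoke Lemma~\ref{lem:canonical}: the $M_1$-algebra just constructed is canonical exactly when diagram~\eqref{eq:coeq} is a coequaliser in the category of $M_0$-algebras. Substituting our choice of $a^{00}$ and $a^{10}$, diagram~\eqref{eq:coeq} becomes
\[
\begin{tikzcd}
M_1M_0M_nX \arrow[r,swap,"\mu^{1,0}_{M_nX}",shift right] \arrow[r,"M_1\mu^{0,n}_X",shift left] & M_1M_nX \arrow[r,"\mu^{1,n}_X"] & M_{n+1}X,
\end{tikzcd}
\]
which is literally the diagram~\eqref{coequalizer} from the definition of a depth-$1$ graded monad. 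By hypothesis this diagram is a coequaliser of $M_0$-algebras, so Lemma~\ref{lem:canonical} yields canonicity.

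There is essentially no obstacle once the indices are aligned; the only temptation to be resisted is to do more than bookkeeping. In particular, one must be careful not to confuse the two uses of ``$\mu^{1,0}$'' arising in~\eqref{eq:coeq}: as an $M_1$-algebra multiplication it denotes $\mu^{1,0}_{A_0}=\mu^{1,0}_{M_nX}$, not the natural transformation $\mu^{1,0}$ applied elsewhere. Once this is observed, the proof reduces to noting that~\eqref{eq:coeq} and~\eqref{coequalizer} are the same diagram.
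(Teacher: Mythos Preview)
Your argument is correct and is precisely the unpacking of ``immediate'' that the paper gestures at: the paper does not give its own proof beyond noting the result follows directly, and your reduction via Lemma~\ref{lem:canonical} and identification of diagram~\eqref{eq:coeq} with diagram~\eqref{coequalizer} is exactly the intended reasoning. The bookkeeping checks on the $M_1$-algebra laws are all accurate, and your caution about reading $\mu^{1,0}$ as $\mu^{1,0}_{A_0}=\mu^{1,0}_{M_nX}$ is well placed.
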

We briefly refer to canonical algebras as per the above proposition as
being \emph{of the form $M_nX$}.\medskip

\noindent Crucially, we establish a metric variant of a result on depth-$1$
graded monads on $\SET$~\cite{DBLP:conf/calco/MiliusPS15}:

\begin{theorem}
  \label{prop:theoryMonad}
  Graded monads on $\Met$ presented by depth-$1$ graded quantitative
  theories are depth-$1$.
\end{theorem}
\begin{remark}
  A depth-1 graded monad~$\mathbb{M}$ can be reconstructed from its
  constituents of depth at most one, i.e.\ from $M_0$, $M_1$,~$\eta$,
  and the~$\mu^{nk}$ for
  $n+k\le 1$~\cite{DBLP:conf/concur/DorschMS19}.  Graded semantics
  (\autoref{sec:logic}) does however make use of the full structure
  of~$\mathbb{M}$ also at higher depths.
\end{remark}

\phantomsection
\subparagraph*{Presentations of graded trace monads}%
\label{sec:presentations}
We proceed to investigate the quantitative-algebraic presentation of
graded trace monads that are given by a Kleisli distributive law of
the functor
$\Act\times(-)$ (with $\Act$ being the space of action labels) over a
monad (\autoref{expl:graded-monads}). Given a function
$k\colon [0,1]^2\to [0,1]$ with suitable properties, we
write~$\otimes$ for the tensor that equips the Cartesian product of
two sets with the metric
$d_{A\otimes B}((a, b), (a', b')) = k(d(a,a'), d(b, b'))$ generated
by~$k$. This induces trace distances on $\Act^n$, $n\ge 0$, by
viewing~$\Act^n$ as the $n$-fold tensor of~$\Act$. Examples include
the Euclidean ($k(x,y)=\sqrt{x^2+y^2}$), supremum
($k(x,y)=\max(x,y)$), and Manhattan ($k(x,y)=x\oplus y$) distances.
 The fact that $k$ computes distances of traces recursively ``one
symbol at a time'' translates into uniform depth-1 equations:

\begin{defn}\label{def:trace-theory}
  Let $\mathcal{T} = (\Sigma, \mathcal{E})$ be a quantitative
  algebraic presentation of a (plain) monad~$T$ on~$\MET$. We define a
  graded quantitative theory $\mathcal{T}[\Act]$ by including the
  operations and equations of~$\mathcal{T}$ at depth~$0$, along with
  unary depth-1 operations~$a$ for all labels $a \in \Act$, and as
  depth-1 axioms the distributive laws
  $\vdash a(f(x_1, \ldots, x_n)) =_0 f(a(x_1), \ldots, a(x_n))$ for
  all $a\in \Act$ and $f\in \Sigma$, as well as the distance axioms
  $x =_\epsilon y \vdash a(x) =_{k(d(a, b), \epsilon)} b(y)$.
\end{defn}
\noindent The obvious candidate for a Kleisli distributive law
inducing the graded monad presented by the theory $\mathcal{T}[\Act]$
is the family of
maps~$\lambda_X\colon \Act \otimes TX \to T(\Act \otimes X)$ given by
\begin{equation}
\lambda_X(a, t) = T\langle a, \id_X\rangle_{\otimes}(t)\label{eq:lambda}
\end{equation}
where $\langle a, \id_X\rangle_\otimes$ takes $x \in X$ to
$(a, x) \in \Act \otimes X$. However, these maps~$\lambda_X$ may fail
to be nonexpansive, depending on~$T$ and~$\otimes$; for instance, this
happens for $T=\ftFinK$ and~$\otimes$ being Cartesian product~$\times$
(which carries the supremum distance):
\begin{example}
  \label{expl:failure-of-nonexp}
  Put $X=\{x,y\}$ where $d(x, y) = 1$, and
  $s = 0.5\cdot x + 0.5\cdot y$, $t = 1\cdot x\in\ftFinK X$. Clearly
  $d(s, t) = 0.5$. Given $a, b \in \Act$ with $d(a, b) = 0.5$, we have
  $d((a, s), (b, t)) = 0.5$ in $\Act\times\ftFinK X$ while
  $d(\lambda_X(a, s), \lambda_X(b, t)) = d(0.5\cdot(a, x) +
  0.5\cdot(a, y), 1\cdot(b, x)) = 0.75$ in $\ftFinK(\Act\times X)$.
\end{example}

\noindent Nonexpansiveness is, of course, needed to obtain a graded
monad on $\MET$, and as we show later (\autoref{rem:trace-larger}),
its failure may cause undesirable effects. In the case of Manhattan
distance, nonexpansiveness always holds:

\begin{lemma}
  \label{lem:ManhattanNexp}
  The maps $\lambda_X$ as per~\eqref{eq:lambda} are nonexpansive as
  maps $\Act \boxplus TX \to T(\Act \boxplus X)$.
\end{lemma}
\noindent In case the~$\lambda_X$ as per~\eqref{eq:lambda} \emph{are}
nonexpansive, we do in fact have that the distributive law~$\lambda$
and the algebraic theory $\mathcal{T}[\Act]$ induce the same graded
monad:

\begin{lemma}
  \label{lem:distributive}
  Let $\lambda_X$ be defined by~\eqref{eq:lambda}. If
  $\lambda_X\colon \Act \otimes T \to T(\Act \otimes (-))$ is
  nonexpansive for all~$X$, then the~$\lambda_X$ form a Kleisli
  distributive law
  $\lambda\colon\Act \otimes T \to T(\Act \otimes (-))$, and the
  graded monad induced by~$\lambda$ according to
  \autoref{expl:graded-monads} is presented by the quantitative equational
  theory $\mathcal{T}[\Act]$ as per \autoref{def:trace-theory}.
\end{lemma}
\begin{example}\label{expl:trace-theories}
  In our running example of finitely branching metric trace semantics,
  it is easy to check that the distributive law claimed in
  \autoref{expl:graded-monads} is indeed
  nonexpansive, so the induced graded monad is, by
  \autoref{lem:distributive}, presented by the corresponding theory
  as per \autoref{def:trace-theory}, and in particular is
  depth-$1$.
  Explicitly, recall~\cite[Corollary 9.4]{mardare2016quantitative}
  that~$\Pmet$ is a monad, presented in quantitative algebra by the
  usual axioms of join semilattices for a binary join operation~$+$
  and a constant~$0$ %
  (nonexpansiveness of~$+$ is enforced by the deduction rules). The
  quantitative graded theory presenting the graded metric trace monad
  $\Pmet(\Act^n\times{-})$ according to \autoref{lem:distributive} has
  depth-0 operators~$+$ and~$0$ as above and adds unary depth-$1$
  operations~$a$ for all $a \in \Act$, subject to axioms (for
  $a,b \in \Act$, $\epsilon \in [0, 1]$)
  \begin{equation*}
    \vdash a(0) =_0 0\qquad
    \vdash a(x+y) =_0 a(x) + a(y) \qquad x =_\epsilon y \vdash a(x) =_{\epsilon\vee d_{\Act}(a, b)} b(y).
  \end{equation*}
  The distribution of the operations~$a$ over the join semilattice
  structure effectively implements trace equivalence, and the last
  axiom determines the metric on traces, which in this case is taken
  to be the supremum metric.
\end{example}

\section{Graded Quantitative Semantics and Graded
  Logics}\label{sec:logic}

\noindent We proceed to introduce the framework of \emph{graded
  quantitative semantics}, to study spectra of behavioural metrics for
various system types. By `spectra' we informally refer to collections
of process comparisons of varying granularity that arise by observing
a specific system type in different ways, as exemplified by the
classical linear-time/branching-time spectrum on labelled transition
systems~\cite{DBLP:books/el/01/Glabbeek01}. Generally, a \emph{graded
  semantics}~\cite{DBLP:conf/calco/MiliusPS15} $(\mathbb{M},\alpha)$
of a functor $G\colon\catC \to \catC$ consists of a graded
monad~$\mathbb{M}$ and a natural transformation
$\alpha\colon G \to M_1$. Intuitively, $M_n1$ (where~$1$ is a terminal
object of~$\catC$) is a domain of behaviours observable after~$n$
transition steps, with~$\alpha$ determining behaviours after one
step. For a $G$-coalgebra $(X, \gamma)$, we inductively define
\emph{behaviour maps} $\gamma^{(n)}\colon X\to M_n1$ assigning to a
state in~$X$ its behaviour after $n$~steps:
\[\gamma^{(0)}\colon X \xrightarrow{M_0! \cdot \eta} M_01 \hspace{3em}
  \gamma^{(n+1)}\colon X \xrightarrow{\alpha \cdot \gamma} M_1X
  \xrightarrow{M_1\gamma^{(n)}} M_1M_n1 \xrightarrow{\mu^{1n}}
  M_{n+1}1\] For $\catC=\MET$, these maps induce a notion of
\emph{graded behavioural distance} (for readability, we refrain from working
with more general~$\catC$, such as categories of relational
structures~\cite{DBLP:conf/calco/FordMS21}):

\begin{defn}[Graded behavioural distance]
  Given a graded semantics $\alpha\colon G \to M_1$ of a functor~$G$
  on~$\MET$, \emph{(graded) behavioural distance} is the pseudometric on
  states in $G$-coalgebras $(X, \gamma)$ given by
  $d^{\alpha}(x, y) =\bigvee_{n \in \nat}d_{M_n1}(\gamma^{(n)}(x),
  \gamma^{(n)}(y))$ for $x, y \in X$.
\end{defn}

\begin{example}
  \label{expl:graded-semantics}

  The metric trace semantics of finitely branching metric transition
  systems~\cite{afs:linear-branching-metrics,DBLP:journals/tcs/FahrenbergL14}
  and closed-branching metric transition systems is captured by the
  graded metric trace monads $M_n =\Pmet(\Act^n \times {-})$ and
  $M_n =\Pc(\Act^n \times {-})$ (\autoref{expl:graded-monads}),
  respectively (with~$\alpha$ being identity). The behaviour maps
  calculate, at each depth~$n$, sets of length-$n$ traces, whose
  distance is given by the Hausdorff distance induced by the supremum
  metric on traces.

\end{example}

\begin{remark}
\label{rem:trace-larger}
In cases where nonexpansiveness of $\alpha$ or the natural
transformations of $\mathbb{M}$ does not hold (e.g.\ if one attempts
to construct~$\mathbb{M}$ using a family of maps~\eqref{eq:lambda}
that fails to be nonexpansive, cf.\ \autoref{expl:failure-of-nonexp}),
other expected properties can fail. For instance, it can happen that
trace distance exceeds branching time distance (while for trace
semantics induced by nonexpansive graded semantics, general properties
of graded semantics imply that trace distance is below branching-time
distance, in tune with the two-valued setting where trace equivalence
is coarser than bisimilarity\lsnote{Better: Pull branching-time example and lemma
  forward}).  \autoref{expl:failure-of-nonexp} manifests in the
$\ftFinK(\Act \times {-})$-coalgebra (i.e.\ generative probabilistic
metric transition system) shown below, where $\Act = \{a, b, c, d\}$
with relevant distances $d(a,b) = 0.5$ and $d(c,d) = 1$:

\begin{tikzpicture}[node distance=2cm, auto]
  \node at (0, 0) (A)[circle, fill,inner sep=2pt] {};
  \node at ([shift={(160:0.4)}]A) {$x$};
  \node at (1.5, 0) (E)[circle, fill,inner sep=2pt] {};

  \node at (7, 0) (A1)[circle, fill,inner sep=2pt] {};
  \node at ([shift={(45:0.4)}]A1) {$y$};
  \node at (5.5, 0) (D1)[circle, fill,inner sep=2pt] {};

  \node at (4.0, 0) (Sink)[circle, fill,inner sep=2pt] {};

  \path[->,arrows = {-Stealth[length=6pt, inset=2pt]}]
    (A) edge node[below] {\(1\)} node[above] {$a$} (E)

    (Sink) edge[out=220,in=320,looseness=30]  node[below] {\(1\)} node[above] {$a$} (Sink)

    (D1) edge node[below] {$1$} node[above] {$c$}(Sink)

    (E) edge[out=30, in=150] node[above] {$c\quad\frac{1}{2}$} (Sink)
    (E) edge[out=-30, in=-150] node[below] {$d\quad\frac{1}{2}$} (Sink)

    (A1) edge node[below] {\(1\)} node[above=] {$b$}(D1);

\end{tikzpicture}

\noindent Here, we have length-$n$ trace distributions
$\mu^n_x = \frac{1}{2}\cdot(aca^{n-2}) + \frac{1}{2}\cdot(ada^{n-2})$ and
$\mu^n_y = 1\cdot(bca^{n-2})$ for $n\geq 2$.  When the metric on traces is
defined via supremum distance, instead of Manhattan distance as in
\Autoref{sec:prob-met-trace}, the trace distance of the states $x$ and
$y$ is $\bigvee_{n\in \nat}d(\mu^n_x, \mu^n_y) = 0.75$, while their
branching-time distance (cf.\ \autoref{sec:prelims}) is $0.5$.
\end{remark}

\noindent We have the following criterion for invariance of a logic
under a graded semantics $(\alpha,\mathbb{M})$, with $\mathbb{M}$
depth-1, for a functor~$G\colon\MET\to\MET$ \emph{that we fix from now
  on}; recall from \autoref{sec:prelims} that we use $\Omega$ to
denote the unit interval $[0,1]$ equipped with Euclidean distance.

\begin{defn}[Graded logic] \label{def:graded-logic}
  Let $o\colon M_0\Omega \to \Omega$ be an $M_0$-algebra structure on
  $\Omega$. A logic~$\mathcal{L}$ is a \emph{graded logic} (for
  $(\alpha, \mathbb{M})$) if the following hold:
\begin{enumerate}
\item For $n$-ary $p\in \mathcal{O}$, the semantics
  $\sem{p}$ is an $M_0$-algebra homomorphism $(\Omega, o)^n\to(\Omega,
  o)$.
\item\label{item:m1-alg} For each $L \in \Lambda$, there is an
  associated nonexpansive map $\rsem L\colon M_1\Omega\to\Omega$ such
  that the semantics $\sem{L}\colon G\Omega\to\Omega$ factors as
  $\sem{L} =
  (G\Omega\xrightarrow{\alpha_\Omega}M_1\Omega\xrightarrow{\rsem
    L}\Omega)$, and such that the tuple
  $(\Omega, \Omega, o, o, \rsem L)$ constitutes an $M_1$-algebra (that
  is,~$\rsem L$ satisfies homomorphy and coequalization, cf.\
  \autoref{sec:prelims}). We abuse notation and write $\rsem L$ to
  denote the $M_1$-algebra $(\Omega, \Omega,o,o, \rsem L)$.
\end{enumerate}
\end{defn}

\noindent
Notice the different treatment of nullary propositional operators and
truth constants: The former are required to be interpreted as
homomorphisms $1\to(\Omega,o)$ in a graded logic, while no such
condition is imposed on truth constants. In many examples,
$\alpha = id$, in which case condition~\ref{item:m1-alg} just states
that $(\Omega, \Omega, o, o, \sem{L})$ is an $M_1$-algebra
(non-identity~$\alpha$ are associated, for instance, with readiness
and failure semantics~\cite{DBLP:conf/concur/DorschMS19}).

\begin{defn}
  \label{def:inv-expr}
  We say that $\mathcal{L}$ is \emph{invariant} with respect to a
  graded semantics $(\alpha, \mathbb{M})$ if
  $d^\mathcal{L} \leq d^\alpha$ holds in all
  $G$-coalgebras; \emph{expressive} if $d^\mathcal{L} \geq d^\alpha$;
  and \emph{characteristic} if $d^\mathcal{L} = d^\alpha$.
\end{defn}
\begin{theorem}[{\cite[Proposition 21]{DBLP:journals/corr/abs-2307-14826}}]
  \label{thm:invariance}
  Let $\mathcal L$ be a graded logic for $(\alpha, \mathbb M)$. Then
  the evaluation maps~$\sem{\phi}_\gamma$ of uniform-depth $\mathcal{L}$-formulae
  $\phi$ on $G$-coalgebras $(X,\gamma)$ are nonexpansive w.r.t.\
	behavioural distance $d^\alpha$, and hence $\mathcal{L}$ is invariant.
\end{theorem}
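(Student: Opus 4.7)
The plan is to establish a stronger factorization claim by induction on the structure of $\mathcal{L}$-formulas: for every $\phi$ of uniform depth $n$, I will construct a nonexpansive $M_0$-homomorphism $\hat{\sem{\phi}}\colon (M_n 1, \mu^{0,n}) \to (\Omega, o)$ such that $\sem{\phi}_\gamma = \hat{\sem{\phi}} \circ \gamma^{(n)}$ for every $G$-coalgebra $(X,\gamma)$. Once such a factorization is in hand, the bound
\[
d_\Omega(\sem{\phi}_\gamma(x), \sem{\phi}_\gamma(y)) \le d_{M_n 1}(\gamma^{(n)}(x), \gamma^{(n)}(y)) = d^{\alpha,n}(x,y) \le d^\alpha(x,y)
\]
is immediate from nonexpansiveness of $\hat{\sem{\phi}}$, and taking joins over $n$ and $\phi$ yields $d^{\mathcal{L}} \le d^\alpha$.

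The base and propositional cases are routine and mainly serve to set up notation. For a truth constant $c$ (uniform depth~$0$), I set $\hat{\sem{c}} = o \circ M_0 \hat c$, the canonical free extension of $\hat c\colon 1\to\Omega$ along $\eta_1$; the factorization identity $\hat{\sem{c}} \circ \gamma^{(0)} = \hat c \circ {!}$ then falls out from naturality of $\eta$ together with the unit law of the $M_0$-algebra $(\Omega,o)$. For $p(\phi_1,\ldots,\phi_m)$ of uniform depth $n$, I set $\hat{\sem{\phi}} = \sem{p} \circ \langle \hat{\sem{\phi_1}},\ldots,\hat{\sem{\phi_m}}\rangle$; both nonexpansiveness and the $M_0$-homomorphism property transfer from the inductive hypotheses and from $\sem{p}$, which has both by Definition~\ref{def:graded-logic}\,(1).

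The hard part will be the modality case. For $\phi = \lambda \psi$ of uniform depth $n+1$ with $\psi$ of uniform depth $n$, using $\sem{\lambda} = f \circ \alpha_\Omega$, naturality of $\alpha$, and the inductive hypothesis $\sem{\psi}_\gamma = \hat{\sem{\psi}} \circ \gamma^{(n)}$, one unfolds
\[
\sem{\lambda\psi}_\gamma = f \circ \alpha_\Omega \circ G \sem{\psi}_\gamma \circ \gamma = f \circ M_1 \hat{\sem{\psi}} \circ M_1\gamma^{(n)} \circ \alpha_X \circ \gamma.
\]
The crux is then to show that the witness $f \circ M_1 \hat{\sem{\psi}}\colon M_1 M_n 1 \to \Omega$ factors through $\mu^{1,n}$ as a nonexpansive $M_0$-homomorphism $\hat{\sem{\lambda\psi}}\colon M_{n+1} 1 \to \Omega$. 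A direct check from the graded monad laws shows that $(M_n 1, M_{n+1} 1, \mu^{0,n}, \mu^{0,n+1}, \mu^{1,n})$ always forms an $M_1$-algebra, and in the depth-$1$ setting that covers our leading examples this algebra is moreover canonical by Proposition~\ref{canonical}. Since $(\Omega,\Omega,o,o,f)$ is also an $M_1$-algebra by Definition~\ref{def:graded-logic}\,(\ref{item:m1-alg}) and $\hat{\sem{\psi}}$ is already an $M_0$-homomorphism on the 0-parts, Lemma~\ref{lem:canonical} together with the universal property of canonical algebras yields a unique $M_1$-homomorphism extending $\hat{\sem{\psi}}$, whose 1-part is the desired $\hat{\sem{\lambda\psi}}$ and satisfies $\hat{\sem{\lambda\psi}} \circ \mu^{1,n} = f \circ M_1 \hat{\sem{\psi}}$. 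The factorization $\sem{\lambda\psi}_\gamma = \hat{\sem{\lambda\psi}} \circ \gamma^{(n+1)}$ then drops out from the definition $\gamma^{(n+1)} = \mu^{1,n} \circ M_1\gamma^{(n)} \circ \alpha_X \circ \gamma$, completing the induction.
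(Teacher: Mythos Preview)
Your proposal is correct and follows essentially the same route as the paper's proof: both construct, by induction on formulas, $M_0$-homomorphisms $M_n1\to\Omega$ (the paper writes $\sem{\phi}_\mathbb{M}$ for your $\hat{\sem{\phi}}$) through which the coalgebraic semantics factors via $\gamma^{(n)}$, with the modality step handled by canonicity of the $M_1$-algebra $(M_n1,M_{n+1}1,\mu^{0,n},\mu^{0,n+1},\mu^{1,n})$. Your explicit flagging of the depth-$1$ hypothesis needed for canonicity (via Proposition~\ref{canonical}) is a useful clarification that the paper's proof sketch leaves implicit.
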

\noindent The assumption of uniform depth cannot be removed in
general~\cite{DBLP:journals/corr/abs-2307-14826}.

\begin{example}\label{expl:graded-logics}
  We have a graded logic $\mathcal{L}^\mathsf{mtrace}$ for metric
  trace semantics
  (\autoref{expl:graded-semantics}) %
  featuring modalities~$\modal{a}$ for all $a\in\Act$ as in
  \autoref{expl:modalities}, a single truth constant~$1$, and no
  propositional operators. We equip the set~$\Omega=[0,1]$ of truth
  values with the usual $\Pmet$-algebra (i.e.\ join semilattice)
  structure $([0,1],\vee,0)$, and let $\hat{1}\colon 1 \to [0,1]$ take
  the value~$1$. The logic $\mathcal{L}^\mathsf{mtrace}$ remains
  invariant under metric trace semantics when extended with
  propositional operators that are nonexpansive join-semilattice
  morphisms, such as $\vee$. Analogously we define the logic
  $\mathcal{L}^\mathsf{cmtrace}$ for trace semantics of
  closed-branching metric transition systems. Notice that the
  interpretation of~$1$ fails to be homomorphic, so~$1$ needs to be a
  truth constant.
\end{example}

\section{Expressivity Criteria}\label{sec:expr}

\noindent We proceed to adapt expressivity criteria appearing in
previous work on two-valued behavioural
equivalences~\cite{DBLP:conf/concur/DorschMS19,DBLP:conf/lics/FordMS21}
to the quantitative setting, which poses quite specific challenges. A
key role in the treatment of expressivity of logics will be played by
the notion of initiality~\cite{DBLP:books/daglib/0023249}.

\begin{defn}
  A family of maps $(f_i\colon A\to B)_{i\in I}$ between metric spaces
  $A$ and $B$ is \emph{initial} if~$A$ carries the smallest (pseudo-)metric
  making all maps $f_i$ nonexpansive, explicitly:
  $d(x,y)=\bigvee_i d(f_i(x),f_i(y))$.
\end{defn}

\noindent Using this notion, the definition of expressivity can be
rephrased as follows: An invariant logic~$\mathcal{L}$ is expressive
if for every $G$-coalgebra~$(X,\gamma)$, the family of all evaluation
maps~$\sem{\phi}_\gamma$ of uniform-depth formulae~$\phi$ is initial
on~$(X,d^\alpha)$.

\begin{remark}\label{rem:density}
  In the branching-time case, a stronger notion of expressivity,
  roughly phrased as \emph{density} of the set of depth-$n$ formulae
  in the set of nonexpansive properties at depth~$n$, follows from
  expressivity under certain additional
  conditions~\cite{DBLP:conf/csl/Forster0HNSW23,DBLP:conf/concur/WildS20,DBLP:conf/fossacs/WildS21,DBLP:conf/lics/WildSP018,DBLP:conf/concur/KonigM18}, using
  lattice-theoretic variants of the Stone-Weierstraß theorem. The
  analogue of the Stone-Weierstraß theorem in
  general fails for coarser semantics. Also, for semantics coarser
  than branching time, expressivity in the sense of
  \autoref{def:inv-expr} can often be established using more
  economic sets of propositional operators (e.g.\ no propositional
  operators at all), for which density will clearly fail.
\end{remark}
\noindent Our expressivity result is based on propagating initiality
through an induction on depth. Unlike in the Eilenberg-Moore
case~\cite{DBLP:journals/corr/abs-2307-14826}, this requires, in many
examples, to strengthen the inductive invariant; we treat this
systematically as follows:
\begin{defn}
  An \emph{initiality invariant} is a property~$\Phi$ of sets
  $\mathfrak{A} \subseteq \MET(X, \Omega)$ of nonexpansive functions
  such that~(i) every family of maps
  satisfying~$\Phi$ is initial, and~(ii)~$\Phi$ is upwards closed
  w.r.t.\ subset inclusion.
\end{defn}
\begin{example}\label{expl:invariants}
  \begin{enumerate}[wide]
  \item Initiality itself is an initiality invariant. If~$\Phi$ is
    initiality, then we say `initial-type' for `$\Phi$-type'. %
  \item We say that $\mathfrak{A}\subseteq\MET(X,\Omega)$ is
    \emph{normed isometric} if whenever $d(x,y)>\epsilon$ for
    $x,y \in X$ and $\epsilon>0$, then there is some
    $f \in \mathfrak A$ such that $|f(x)-f(y)|> \epsilon$ and
    $f(x)\vee f(y)=1$. Normed isometry is an initiality invariant. %
  \end{enumerate}
\end{example}
\noindent Our expressivity criterion then takes the following shape:
\begin{defn}
  Let~$\Phi$ be an initiality invariant. A graded
  logic~$\mathcal{L}=(\Theta, \mathcal{O}, \Lambda)$ with truth value
  object $(\Omega,o)$ is \emph{$\Phi$-type depth-0 separating}
  if %
  the family of maps
  $\{o \cdot M_0\hat{c} \colon M_0 1 \to \Omega \mid c \in
  \Theta\}$ %
  has property~$\Phi$. Moreover,~$\mathcal{L}$
  is \emph{$\Phi$-type depth-1 separating} if whenever~$A$ is a
  canonical $M_1$-algebra of the form $M_n1$
  (\autoref{canonical}) and~$\mathfrak{A}$ is a set of
  $M_0$-homomorphisms $A_0 \rightarrow \Omega$ that has
  property~$\Phi$ and is closed under the propositional operators
  in~$\mathcal{O}$, then %
  the set
	\[\Lambda(\mathfrak{A}) := \{\appModal{L}{g}: A_1 \rightarrow \Omega \mid L \in
	\Lambda, g \in \mathfrak{A} \}\] %
      has property $\Phi$, where
      $\appModal{L}{g}\colon A_1 \to \Omega$ is the (by canonicity,
      unique) morphism extending the $M_0$-algebra morphism $g$ to an
      $M_1$-algebra morphism $A\to\rsem L$
      (\autoref{def:graded-logic}).
\end{defn}
\begin{theorem}[Expressivity]
  \label{thm:main}
  Let~$\Phi$ be an initiality invariant, and suppose that a graded
  logic $\mathcal{L}$ is both $\Phi$-type depth-0 separating and
  $\Phi$-type depth-1 separating. Then $\mathcal{L}$ is expressive.
\end{theorem}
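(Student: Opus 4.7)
The plan is to reduce expressiveness to initiality of an auxiliary cone on each $M_n1$, and then to establish property~$\Phi$ for that cone by induction on~$n$.

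First, I would invoke the factorisation underlying invariance (Theorem~\ref{thm:invariance} and~\eqref{eq:old-semantics}): every $\phi \in \mathcal{L}_n$ satisfies $\sem{\phi}_\gamma = \sem{\phi}_\mathbb{M} \circ \gamma^{(n)}$ on any $G$-coalgebra $(X,\gamma)$, with $\sem{\phi}_\mathbb{M}\colon M_n1 \to \Omega$ an $M_0$-homomorphism. Writing $\mathfrak{A}_n := \{\sem{\phi}_\mathbb{M} \mid \phi \in \mathcal{L}_n\} \subseteq \MET(M_n1,\Omega)$ and recalling that $d^\alpha(x,y) = \bigvee_n d_{M_n1}(\gamma^{(n)}(x),\gamma^{(n)}(y))$, initiality of each $\mathfrak{A}_n$ on $M_n1$ suffices for $d^\alpha \le d^{\mathcal{L}}$, which combined with invariance delivers the desired equality. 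Since~$\Phi$ is an initiality invariant, it is enough to show that $\mathfrak{A}_n$ has~$\Phi$ for every $n \in \nat$.

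I would prove this by induction on~$n$. For $n=0$, $\mathcal{L}_0$ contains the truth constants $c \in \Theta$, whose interpretations $\sem{c}_\mathbb{M}\colon M_01 \to \Omega$ form a subfamily of $\mathfrak{A}_0$ with property~$\Phi$ by $\Phi$-type depth-$0$ separation; upward closure of~$\Phi$ propagates this to $\mathfrak{A}_0$. For the step, Proposition~\ref{canonical} provides that the $M_1$-algebra with carriers $M_n1, M_{n+1}1$ and multiplications as structure is canonical of the form $M_n1$. By induction $\mathfrak{A}_n$ has~$\Phi$, consists of $M_0$-homomorphisms, and, since Definition~\ref{def:graded-logic} interprets each $p \in \mathcal{O}$ as an $M_0$-homomorphism, is closed under the propositional operators. $\Phi$-type depth-$1$ separation thus yields that $\Lambda(\mathfrak{A}_n) = \{\appModal{\lambda}{g} \mid \lambda \in \Lambda,\ g \in \mathfrak{A}_n\}$ has~$\Phi$. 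The crux is the identification $\sem{\lambda\phi}_\mathbb{M} = \appModal{\lambda}{\sem{\phi}_\mathbb{M}}$; I would verify it by writing $\sem{\lambda} = f \circ \alpha_\Omega$ per Definition~\ref{def:graded-logic}, unfolding $\sem{\lambda\phi}_\gamma = f \circ \alpha_\Omega \circ G\sem{\phi}_\gamma \circ \gamma$, using naturality of~$\alpha$ and $\sem{\phi}_\gamma = \sem{\phi}_\mathbb{M} \circ \gamma^{(n)}$ to rewrite this as $f \circ M_1\sem{\phi}_\mathbb{M} \circ M_1\gamma^{(n)} \circ \alpha_X \circ \gamma$, and comparing with the defining equation $\appModal{\lambda}{\sem{\phi}_\mathbb{M}} \circ \mu^{1n}_1 = f \circ M_1\sem{\phi}_\mathbb{M}$ against the unfolding $\gamma^{(n+1)} = \mu^{1n}_1 \circ M_1\gamma^{(n)} \circ \alpha_X \circ \gamma$. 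This places $\Lambda(\mathfrak{A}_n) \subseteq \mathfrak{A}_{n+1}$, and upward closure of~$\Phi$, together with closure of $\mathcal{L}_{n+1}$ under propositional operators, finishes the step.

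The main obstacle is precisely the identification $\sem{\lambda\phi}_\mathbb{M} = \appModal{\lambda}{\sem{\phi}_\mathbb{M}}$, which relies on canonicity of the $M_1$-algebra of the form $M_n1$ (hence on the depth-$1$ assumption on $\mathbb{M}$ carried through Proposition~\ref{canonical}) to uniquely characterise the required $M_0$-homomorphism $M_{n+1}1 \to \Omega$. Once this is secured, the remaining bookkeeping—closure of $\mathfrak{A}_n$ under the propositional operators, and the passage between initiality on each $M_n1$ and coincidence of the two pseudometrics on the original coalgebra—is routine.
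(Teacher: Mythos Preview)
Your proposal is correct and follows essentially the same approach as the paper: induction on~$n$ to show that $\mathfrak{A}_n=\{\sem{\phi}_\mathbb{M}\mid\phi\in\mathcal{L}_n\}$ has property~$\Phi$, using depth-$0$ separation for the base case and depth-$1$ separation (together with closure of $\mathfrak{A}_n$ under propositional operators and upward closure of~$\Phi$) for the step. You make explicit two points the paper leaves implicit---the reduction from expressiveness to initiality of each $\mathfrak{A}_n$, and the identification $\sem{\lambda\phi}_\mathbb{M}=\appModal{\lambda}{\sem{\phi}_\mathbb{M}}$---but the latter is exactly the content of~\eqref{eq:modal} in the proof of Theorem~\ref{thm:invariance}, so no additional argument is needed there.
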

\begin{remark}\label{rem:restrict-separation}
  Our definition of separation differs from notions used for
  two-valued
  logics~\cite{DBLP:conf/concur/DorschMS19,DBLP:conf/lics/FordMS21}
  and for quantitative graded semantics induced by Eilenberg-Moore
  distributive laws~\cite{DBLP:journals/corr/abs-2307-14826}, which
  overall have turned out to be much more well-behaved than the more
  general setting of the present work. %
  The most obvious novelty is the use of an initiality
  invariant~$\Phi$ strengthening the induction hypothesis in the
  inductive proof of \autoref{thm:main}. We will see that this is
  needed even in very simple examples in our more general setting.
  Moreover, we have phrased separation in terms of the specific
  canonical algebras $M_n1$ on which it is needed, rather than on
  unrestricted canonical algebras. This allows exploiting additional
  properties of $M_n1$, e.g.\ that for graded monads $M_n=TF^n$
  arising from Kleisli distributive laws
  (\autoref{expl:graded-monads}), $M_n1$ is free as an $M_0$-algebra.

\end{remark}

\begin{example}
  \begin{enumerate}[wide]
    \label{expl:separation}
  \item \emph{Metric Streams:} A simple example for failure of
    initial-type separation (\autoref{expl:invariants}) are metric
    streams, i.e.\ streams over a metric space of labels
    $(\Act, d_\Act)$; these are coalgebras for the functor
    $G = \Act \times {-}$.  Behavioural distance on streams is
    captured by the graded monad $G^n = \Act^n \times
    \{-\}$. %
    The logic~$\mathcal{L}$ consisting of the truth constant~$1$
    and modalities~$\modal{a}$ for all $a\in\Act$, with
    interpretation
    $\sem{\modal{a}}\colon \Act\times [0,1] \to [0,1]$ given by
    $(b, v) \mapsto (1- d_\Act(a, b)) \wedge v$,\lsnote{@Lutz: Add intuition} %
    is $\Phi$-type depth-0 separating and $\Phi$-type depth-1
    separating for~$\Phi$ being normed isometry, and hence expressive
    by \autoref{thm:main}. (The modality~$\Diamond_a$ restricts the
    corresponding modality for metric transition systems as per
    \autorefexpls{expl:modalities} and~\ref{expl:graded-logics} to
    metric streams: a state satisfies $\Diamond_a\phi$ to the
    degree that its output is close to~$a$ and its successor
    satisfies~$\phi$). On the other hand,~$\mathcal{L}$ fails to be
    initial-type depth-1 separating, illustrating the necessity of the
    general form of \autoref{thm:main}.
    
    \item \label{expl:mts-separation}\emph{Metric transition systems:} The
  graded logics $\mathcal{L}^\mathsf{mtrace}$ and
  $\mathcal{L}^\mathsf{cmtrace}$ for metric trace semantics
  (\autoref{expl:graded-logics}), in the version with no propositional
  operators, are $\Phi$-type depth-$0$ separating and $\Phi$-type
  depth-$1$ separating for $\Phi$ being normed isometry, and hence are
  expressive by \autoref{thm:main}. We thus improve on an example from
  recent work based on Galois
  connections~\cite{DBLP:conf/csl/BeoharG0M23}, where application of
  the general framework required the inclusion of propositional shift
  operators (which were subsequently eliminated in an ad-hoc manner),
  and we generalize to systems with closed branching on a metric state
  space.  %

\item Probabilistic metric trace semantics is modelled
  straightforwardly as a graded semantics using a graded trace
  monad~(\autoref{expl:graded-monads}). By~\autoref{thm:propNoExpr},
  however, there is no graded logic for probabilistic metric trace
  semantics that satisfies the conditions of \autoref{thm:main}.
\end{enumerate}
\end{example}

\begin{remark}\label{rem:galois}
  In a recent approach based on Galois
  connections~\cite{DBLP:conf/csl/BeoharG0M23,DBLP:conf/stacs/BeoharG0MFSW24},
  logics are related to fixpoints of behaviour functions induced by
  the logic itself (similar to approaches that define trace semantics
  via intended characteristic
  logics~\cite{DBLP:journals/corr/KlinR16}), while our present
  interest is in providing logical characterizations of \emph{given}
  behavioural distances. %
  The Galois framework is highly general, and in fact not even tied to
  coalgebraic modelling, or in fact to state-based systems of any
  kind~\cite{DBLP:conf/csl/BeoharG0M23}, but correspondingly offers
  less concrete recipes. Instantiated to our current setup, the key
  condition of \emph{compatibility} appearing in \emph{op.\ cit.\
  }%
  roughly speaking amounts to initial-type depth-1 separation of the
  logic w.r.t.\ its own Kantorovich
  lifting~\cite{bbkk:trace-metrics-functor-lifting,DBLP:conf/stacs/BeoharG0MFSW24}.
\end{remark}

\begin{remark}[Branching-time semantics]\label{expl:branching-time}
  Any functor $G$ yields a graded monad given by iterated application
  of $G$, that is $M_n = G^n$, and by unit and multiplication being
  identity~\cite{DBLP:conf/calco/MiliusPS15}.  In general, the
  finite-depth branching-time semantics of a $G$-coalgebra
  $(X, \gamma)$ is defined via its \emph{canonical cone}
  $(p_i\colon X \to G^i1)_{i<\omega}$ into the \emph{final sequence}
  $1 \xleftarrow{!} G1 \xleftarrow{G!} G^21 \leftarrow\dots$
  of~$G$. The $p_i$ are defined inductively by
  $p_0 = \mathord{!} \,\colon X\to 1$ and
  $p_{i+1} = Gp_i \cdot \gamma$. This semantics is captured by the
  graded monad $M_n = G^n$ and
  $\alpha=\id$~\cite{DBLP:conf/calco/MiliusPS15}.  More specifically,
  the \emph{finite-depth branching-time behavioural distance} of
  states $x,y\in X$ is $\bigvee_{i<\omega} d(p_i(x),p_i(y))$, and thus
  agrees with the graded behavioural distance obtained via the graded
  semantics in the graded monad $M_n = G^n$.  This monad has
  $M_0=\id$, so that the corresponding graded logics are just
  branching-time logics without further
  restriction~\cite{DBLP:conf/calco/MiliusPS15,DBLP:conf/concur/DorschMS19}. Coalgebraic
  quantitative logics of this kind have received some recent
  attention~\cite{DBLP:conf/csl/Forster0HNSW23,DBLP:conf/concur/WildS20,DBLP:conf/fossacs/WildS21,DBLP:conf/lics/WildSP018,afs:linear-branching-metrics,JainEA20,DBLP:conf/concur/KonigM18}.
  Suppose~$\Lambda$ is a finite \emph{separating} set of modalities,
  i.e.\ the maps $\sem{L} \cdot Gf\colon GX\to\Omega$, with~$L$
  ranging over modalities and~$f$ over nonexpansive maps
  $X\to\Omega$, form an initial family. Moreover, let~$\mathcal{O}$
  contain truth~$1$, meet~$\land$, %
  fuzzy negation $\neg$ (i.e.\ $\neg x=1-x$), and truncated addition
  of constants $(-)\oplus c$. Then one shows using a variant of the
  Stone-Weierstraß theorem~\cite{DBLP:conf/lics/WildSP018} that the
  graded logic~$\mathcal{L}$ given by $\Lambda$, $\mathcal{O}$, and
  $\Theta=\emptyset$ is initial-type depth-$0$ separating and
  initial-type depth-$1$ separating. By \autoref{thm:main}, we obtain
  that~$\mathcal{L}$ is expressive. Previous work on quantitative
  branching-time
  logics~\cite{DBLP:conf/lics/WildSP018,DBLP:conf/concur/KonigM18,DBLP:conf/concur/WildS20,DBLP:conf/fossacs/WildS21,DBLP:conf/csl/Forster0HNSW23}
  discusses, amongst other things, conditions on~$G$ that allow
  concluding expressivity even for infinite-depth behavioural
  distance.
\end{remark}

\section{Case Study: Fuzzy Metric Trace Semantics}
\label{sec:examples}
\lsnote{@Lutz: Add intuition}
We apply the recipe outlined above to
obtain a characteristic logic for trace distance on \emph{fuzzy
  metric transition systems}. That is, we proceed as follows: We cast
fuzzy metric trace distance as a graded semantics using a suitable
depth-1 graded monad~$\mathbb{M}$, and check that~$\mathbb{M}$ is
depth-1 using the techniques outlined in
\autoref{sec:presentations}. We then identify a corresponding graded
logic~$\mathcal{L}$, verifying the requirements of
\autoref{def:graded-logic}. Invariance of $\mathcal{L}$ then follows
automatically (\autoref{thm:invariance}). Finally, we show
expressivity using \autoref{thm:main}.

A \emph{fuzzy $\Act$-labelled metric transition system (fuzzy metric
  LTS)}~\cite{DErricoLoreti07,WuEA18,WuEA18b,JainEA20}) consists of a
set (or metric space)~$X$ of states and a fuzzy transition relation
$R\colon X\times\Act\times X\to[0,1]$, with~$\Act$ a metric space. A
fuzzy LTS $(X,R)$ is \emph{finitely branching} if
$\{(a,y)\mid R(x,a,y)>0\}$ is finite for every~$x\in X$. Equivalently,
a finitely branching fuzzy LTS is a coalgebra for the functor
$\FHaus_\omega(\Act\times (-))$ (cf.\
\autoref{expl:monads}.\ref{item:fuzzy-power}).
	
  A natural fuzzy trace semantics of fuzzy transition systems assigns
  to each state~$x$ of a fuzzy LTS $(X,R)$ a fuzzy trace set
  $\Trace(x)\in\FPow_\omega (\Act^*)$ where
    \begin{equation*}\textstyle
      \Trace(x)(a_1\dots a_n)=\bigvee\{\bigwedge_{i=1}^n
      R(x_{i-1},a_i,x_i)\mid x=x_0,x_1,\dots,x_n\in X\}.
    \end{equation*}
    This notion of trace relates, for instance, to a notion of fuzzy
    path that is implicit in the semantics of fuzzy computation tree
    logic~\cite{PanEA16} and to notions of fuzzy language accepted by
    fuzzy automata~(e.g.~\cite{Belohlavek02}). We obtain a notion of
    \emph{fuzzy trace distance}~$d^T$ of states~$x,y$, given by the
    distance of $\Trace(x)$, $\Trace(y)$ in $\FHaus_\omega(\Act^*)$,
    i.e.\ under fuzzy Hausdorff distance
    (\autoref{expl:monads}.\ref{item:fuzzy-power}) w.r.t.\ the metric
    on~$\Act^*$ that is the supremum metric on each~$\Act^n$, and
    assigns distance~$1$ to traces of different lengths. To capture
    this distance in a graded semantics, consider the distributive law
    $\lambda\colon \Act \times \FHaus_\omega({-}) \to
    \FHaus_\omega(\Act \times {-})$ given by
    $\lambda(a, U)(a,x)= U(x)$ and $\lambda(a, U)(b, x) = 0$ for
    $b \neq a$. By \autoref{expl:graded-monads} we thus obtain the
    \emph{graded fuzzy metric trace monad}
    $M_n =\FHaus_\omega(\Act^n \times (-))$.
    \label{item:fuzzy-theory}
    The monad $\FHaus_\omega$ can be presented by the following
    quantitative equational theory: Take a
    binary operation~$+$, a constant~$0$, and unary operations $r$ for
    every $r\in[0,1]$. Impose strict equations ($=_0$) saying
    that~$+$,~$0$ form a join semilattice structure and that the
    operations~$r$ define an action of the monoid $([0,1],\wedge)$
    (i.e.\ $1(x)=x$, $r(s(x))=_0(r\wedge s)(x)$). Finally, impose
    axioms $x=_\epsilon y\vdash r(x)=_\epsilon s(y)$ for $r,s\in[0,1]$
    such that $|r-s|\le\epsilon$. By \autoref{lem:distributive}, the
    graded fuzzy trace monad $M_n = \FHaus_\omega(\Act^n\times X)$ is
    presented by the above algebraic description of~$\FHaus_\omega$ at
    depth~$0$, with additional depth-1 unary operations~$a$ for
    $a\in\Act$ and depth-1 equations $a(x+y)=_0 a(x)+a(y)$,
    $a(0)=_0 0$, $a(r(x))=_0 r(a(x))$, and
    $x =_\epsilon y \vdash a(x)=_{\epsilon \vee d(a,b)} b(y)$.

    \textbf{Fuzzy metric trace logic} \label{item:fuzzy-trace-logic}
    interprets the additional operations $r\in[0,1]$ on the truth
    value object $[0,1]$ by $r(x)=r\wedge x$, and otherwise uses the
    same quantitative join semilattice structure as for metric trace
    semantics (\autoref{expl:graded-logics}). We include the truth
    constant~$1$ and modal operators~$\modal{a}^c$ for $a\in \Act$ and
    $c \in [0,1]\cap\mathbb{Q}$, with interpretation
    $\sem{\modal{a}^c}\colon M_1[0,1]\to[0,1]$ given by
    $\sem{\modal{a}^c}(A)=\bigvee_{b\in \Act, v\in [0,1]}A(b,v)\wedge
    v \wedge (c\ominus d(a,b))$. (When~$\Act$ is discrete,
    then~$\modal{a}^1$ is the usual fuzzy diamond modality,
    e.g.~\cite{Fitting91}). Thus, a state~$x$ in a fuzzy metric
    transition system satisfies $\Diamond^c_a\phi$ to the degree
    that~$x$ has a $b$-successor~$y$ with~$b$ close to~$a$ and~$y$
    satisfying~$\phi$; crucially, `closeness' of~$b$ to~$a$ needs to
    be shifted down as governed by the parameter~$c$. This logic is
    initial-type depth-$0$ separating and initial-type depth-$1$
    separating, and hence expressive for fuzzy trace distance by
    \autoref{thm:main}; both this result and the logic itself appear
    to be new (the case with~$\Act$ discrete is partially covered in
    work on Galois connections~\cite{DBLP:conf/stacs/BeoharG0MFSW24}).
    Indeed for non-discrete~$\Act$, the logic with only~$\modal{a}^1$
    instead of all $\modal{a}^c$ fails to be expressive. The logic
    remains invariant when extended with additional nonexpansive
    propositional operators that are $\FHaus_\omega$-homomorphic, such
    as~$\vee$.

\section{Conclusions}\lsnote{Future work on real-valued measures instead of probabilities?}

\noindent We have shown that there is no unary quantitative
coalgebraic modal logic characterizing a natural notion of
quantitative trace distance on probabilistic metric transition
systems. Moving onwards from this observation, we have developed a
generic framework for linear-time/branching-time spectra of
behavioural distances on state-based systems in coalgebraic
generality, covering, for instance, metric, probabilistic, and fuzzy
transition systems. Unlike previous work on Eilenberg-Moore-style
coalgebraic trace
distances~\cite{DBLP:conf/stacs/BeoharG0MFSW24,DBLP:journals/corr/abs-2307-14826},
the framework covers also systems with labels from a metric space. The
key abstractions in the framework are based on the notion of a graded
monad on the category of metric spaces and an arising notion of
quantitative graded semantics. We have provided a graded quantitative
algebraic system for the description of such graded monads (extending
and modifying the existing non-graded
system~\cite{mardare2016quantitative}). %
Moreover, we have established sufficient conditions for canonical
invariant \emph{quantitative graded
  logics}~\cite{DBLP:journals/corr/abs-2307-14826} to be
\emph{expressive} for given quantitative graded semantics, and we have
exploited this result to obtain expressive logics for some instances
of Kleisli-type trace semantics~\cite{DBLP:journals/lmcs/HasuoJS07},
notably including a new result for fuzzy metric trace
semantics. %

One important next step in the development will be to identify a
generic game-based characterization of behavioural distances in the
framework of graded semantics, generalizing work specific to metric
transition systems~\cite{DBLP:journals/tcs/FahrenbergL14} and building
on game-based concepts for two-valued graded
semantics~\cite{DBLP:conf/lics/FordMSB022}. Also, there is interest in
computing distinguishing quantitative formulae
(cf.~\cite{Cleaveland90,WissmannEA21} for the two-valued
branching-time setting), generalizing recent results for the
branching-time case~\cite{RadyBreugel23} to spectra of coarser
semantics.

 \bibliographystyle{plainurl}
 \bibliography{gsq}

 \newpage \appendix

 \section{Appendix: Additional Details and Omitted Proofs}

 We give details and proofs omitted in the main body.

\subsection{Details for \autoref{sec:prelims}
(\nameref{sec:prelims})}

\subsubsection*{Details for \autoref{expl:monads}}

We show explicitly that $\Pc \colon \Met \to \Met$ is a monad.
For the unit laws, we need to show that
\begin{enumerate}
    \item $\mu_X \circ \Pc\eta_X = \id_{\Pc X}$: Let $Y \in \Pc X$, by
      unraveling of definitions
     we have to show that $\mu \circ \Pc\eta_X(Y)$ is the closure of $Y$, which is already closed.
    \item $\mu_X \circ \eta_{\Pc X} = \id_{\Pc X}$: Let $Y \in \Pc X$,
      then we again have to show that  $\mu \circ \eta_{\Pc X}(Y)$ is the closure of $Y$, which is already closed.
\end{enumerate}
\noindent For assosciativity, let $Y \in \Pc \Pc \Pc X$. We show that in both paths of the diagram, the result is the closure of $\{w \mid w \in z, z \in y, y\in Y\}$. On the one hand we have 

\begin{equation*}
    \begin{split}
        \mu_{X} \cdot \Pc\mu_{X}(Y) &= \mu_{X}(\{ \mu_{X}(Z) \mid Z \in Y \})\\
        &=  \mu_{X}(\{ \overline{ \bigcup_{W\in Z}  W  } \mid Z \in Y\} )\\
        &= \overline{\bigcup_{Z\in Y} \overline{\bigcup_{W\in Z}  W }}\\
        &= \overline{\bigcup_{Z\in Y} \bigcup_{W\in Z}  W }\\
    \end{split}
\end{equation*}

\noindent For the seccond path we utilize the following lemma

\begin{lemma}
    \label{lem:closure}
    Let $X$ be a metric space,  and let $A\subseteq \mathcal{P}(X)$ be a subspace of $\mathcal{P}(X)$ equipped with the Hausdorff metric. Then $\overline{\bigcup_{Y \in\overline{A}} Y} = \overline{\bigcup_{Y \in A} Y}$.
\end{lemma}
\begin{proof}
    The inclusion $\supseteq$ is obvious. For the inclusion $\subseteq$
    pick $x\in \overline{\bigcup_{Y \in\overline{A}} Y}$. Let $\epsilon > 0$. We need to show that there is $y \in \bigcup_{Y \in A} Y$ such that $d(x, y) < \epsilon$. By definition of closure, there is $y' \in \bigcup_{Y \in\overline{A}} Y$ such that $d(x, y') < \frac{\epsilon}{2}$. This implies there is $Y \in \overline{A}$ with $y' \in Y$. Further, by definition of closure there must be $Y' \in A$ such that $d(Y, Y') < \frac{\epsilon}{2}$. By definition of the Hausdorff metric, there must be $y \in Y'$ such that $d(y, y') < \frac{\epsilon}{2}$. So $d(x, y) \leq d(x, y') + d(y', y) \leq \epsilon$.
\end{proof}

\noindent Then it follows that

\begin{equation*}
    \begin{split}
        \mu_{X} \cdot \mu_{\Pc X}(Y) &= \mu_X(\overline{\bigcup_{Z\in Y} Z})\\
        &= \overline{\bigcup_{W \in\overline{\bigcup_{Z\in Y} Z}} W}\\
        &= \overline{\bigcup_{W \in\bigcup_{Z\in Y} Z} W} \by{\autoref{lem:closure}}\\
        &= \overline{\bigcup_{Z\in Y} \bigcup_{W\in Z}  W }\by{Associativity of monad multiplication of $\mathcal{P}$}\\
    \end{split}
\end{equation*}

\subsection{Details for \autoref{sec:prob-met-trace}
(\nameref{sec:prob-met-trace})}
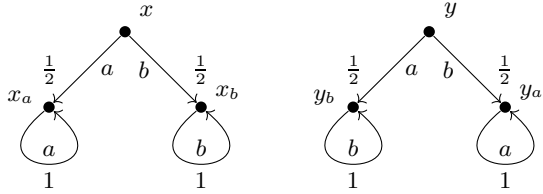
\begin{figure}[h]
	\label{system}
	\caption{States with probabilistic trace distance $v$ and logical distance $< v^2$.}
\centering
\begin{tikzpicture}[node distance=2cm, auto]
				\node at (2.5, 1) (A)[circle, fill,inner sep=2pt] {};
				\node at ([shift={(45:0.4)}]A) {$x$};
				\node at (1.5, 0) (D)[circle, fill,inner sep=2pt] {};
				\node at ([shift={(160:0.4)}]D) {$x_a$};
				\node at (3.5, 0) (E)[circle, fill,inner sep=2pt] {};
				\node at ([shift={(30:0.4)}]E) {$x_b$};

				\node at (6.5, 1) (A1)[circle, fill,inner sep=2pt] {};
				\node at ([shift={(45:0.4)}]A1) {$y$};
				\node at (5.5, 0) (D1)[circle, fill,inner sep=2pt] {};
				\node at ([shift={(160:0.4)}]D1) {$y_b$};
				\node at (7.5, 0) (E1)[circle, fill,inner sep=2pt] {};
				\node at ([shift={(30:0.4)}]E1) {$y_a$};

				\path[->,arrows = {-Stealth[length=6pt, inset=2pt]}]
					(A) edge node[left=2ex] {\(\frac{1}{2}\)} node[right=0.5ex] {$a$} (D)
						edge node[right=2ex] {\(\frac{1}{2}\)} node[left=0.5ex] {$b$} (E)

					(D) edge[out=220,in=320,looseness=30]  node[below] {\(1\)} node[above] {$a$} (D)
					(E) edge[out=220,in=320,looseness=30]  node[below] {\(1\)} node[above] {$b$} (E)

					(D1) edge[out=220,in=320,looseness=30]  node[below] {\(1\)} node[above] {$b$} (D1)
					(E1) edge[out=220,in=320,looseness=30]  node[below] {\(1\)} node[above] {$a$} (E1)

					(A1) edge node[left=2ex] {\(\frac{1}{2}\)} node[right=0.5ex] {$a$}(D1)
						 edge node[right=2ex] {\(\frac{1}{2}\)} node[left=0.5ex] {$b$}(E1) ;

\end{tikzpicture}
\label{fig:propCtrEx}
\end{figure}

\subsubsection*{Proof of \autoref{thm:propNoExpr}}
For the sake of deriving a contradiction, assume that $\mathcal{L}$ is
invariant and expressive.  Consider the transition systems pictured in
\autoref{fig:propCtrEx}. Here, $a, b \in \Act$ with $a \not = b$ and
$d(a,b) = v < 1$. The depth-$2$ behaviour of~$x$ is
$0.5(a, a) + 0.5(b, b)$, and that of~$y$ is $0.5(a, b) + 0.5(b, a)$.
It is easy to see that the behavioural distance of these length-$2$
trace distributions is $v$. We show that the deviations of truth
values of $\mathcal{L}$-formulae between~$x$ and~$y$ stay away from
that value, specifically below~$v^2<v$, contradicting expressivity.

First, note that~$x$ and~$y$ are behaviourally equivalent up to
depth-$1$, and hence agree on formulae of depth at most~$1$; we thus
restrict attention to formulae of depth at least~$2$ (where we do not
insist on uniform depth but exclude top-level depth-$1$ subformulae).
Let~$L\in\Lambda$ be a modal operator, interpreted as
$\sem{L}\colon \ftFinK(\Act \boxplus [0, 1]) \to [0,1]$.  Then we have
the following:

\begin{claim}\label{claim:dist}
  The interpretation~$\sem{L}$ satisfies the equation
\begin{equation}
	\label{eq:distribution}
	\sem{L}\Big( \sum_{i \in I}p_i (a_i, \sum_{j\in J_i} q_{ij} v_{ij} )\Big) = \sem{L}\Big( \sum_{i \in I, j\in J_i} p_i q_{ij} (a_i,   v_{ij} )\Big)
\end{equation}
where the outer sum on the left hand side and the sum on the right
hand side are formal sums describing probability distributions on
$\Act\boxplus[0,1]$, while the inner sum on the left hand side is just an
arithmetic sum.%
\end{claim}

\begin{proof}[Proof (\autoref{claim:dist})]
    Let $x_0$ and $x_1$ be states in coalgebras with probabilistic trace distance~$1$ (such states exist because $\Act$ contains labels with positive distance and the distance on $\Act^{\boxplus n}$ is Manhattan distance). By expressivity,  for every $\epsilon > 0$, there is a formula $\phi_\epsilon$ such that $d(\sem{\phi_\epsilon}_\gamma(x_0), \sem{\phi_\epsilon}_\gamma(x_1)) > 1 - \epsilon$ (w.l.o.g. we assume $\sem{\phi_\epsilon}(x_0) < \epsilon$ and $\sem{\phi_\epsilon}(x_1)> 1-\epsilon$). Now we construct a new coalgebra $(X', \gamma')$ extending~$(X,\gamma)$ with new states $x_p$ for $p \in [0,1]$, with successor distributions $\gamma'(x_p) = \gamma(x_1) +_p \gamma(x_0)$. The probabilistic trace distance of $x_0$ and $x_p$ is $p$, while the probabilistic trace distance between $x_p$ and $x_1$ is $1-p$. By invariance, $p - \epsilon < \sem{\phi_\epsilon}(x_p) < p + \epsilon$. We construct a coalgebra $(X'',\gamma'')$ that extends
    $(X',\gamma')$ with two new states $l, r$, which receive
  successor distributions
  \begin{equation*}
    \gamma''(l) = \sum_{i\in I}p_i(a_i, \sum_{j\in J_i} q_{ij} x_{v_{ij}}) \quad
    \gamma''(r) = \sum_{i\in I, j\in J_i} p_iq_{ij}(a_i, x_{v_{ij}})
  \end{equation*}
  It is clear that $l$ and $r$ are probabilistically trace
    equivalent. Now $\ftFinK(\Act \boxplus \sem{\phi_\epsilon}) \circ \gamma''(l)$ has distance  at most $\epsilon$ from the argument on the left hand side of \eqref{eq:distribution}, and symmetrically $\ftFinK(\Act \boxplus \sem{\phi_\epsilon}) \circ \gamma''(r)$ has distance  at most $\epsilon$ from the argument on the right hand side. Since~$\sem{L}$ is nonexpansive (being the interpretation of a modal operator in a coalgebraic modal logic over $\MET$), we have that
	\begin{equation}
		\begin{split}
			&\sem{L}\left( \sum_{i \in I}p_i (a_i, \sum_{j\in J_i} q_{ij} v_{ij} )\right) = \\
			&\sem{L} \left(\lim_{\epsilon \to 0} \ftFinK(\Act \boxplus \sem{\phi_\epsilon})\left(\sum_{i\in I}p_i(a_i, \sum_{j\in J_i} q_{ij} x_{v_{ij}}) \right) \right) =\\
			&\sem{L} \left( \lim_{\epsilon \to 0} \ftFinK(\Act \boxplus \sem{\phi_\epsilon})\left(\sum_{i\in I, j\in J_i} p_iq_{ij}(a_i, x_{v_{ij}}) \right) \right) = \\
			&\sem{L}\left( \sum_{i \in I, j\in J_i} p_i q_{ij} (a_i, v_{ij} )\right)\\
		\end{split}
	\end{equation}
\end{proof}

\noindent %
We call distributions from $\ftFinK(\Act \boxplus [0,1])$
\emph{$L$-equivalent} if they are identified under~$\sem{L}$. Then any
distribution $\sum_I p_i (a_i, v_i)$ is, by
equation~\eqref{eq:distribution}, $L$-equivalent to one of the form
$\sum_I p_iv_i (a_i, 1) + p_i(1-v_i)(a_i, 0)$. Now let~$\phi$ be a
depth-1 formula, and put
$v_a=\sem{\phi}_\gamma(x_a) =
\sem{\phi}_\gamma(y_a)$ (where the second equality holds
because~$x_a$ and~$y_a$ are probabilistically trace equivalent) and
$v_b=\sem{\phi}_\gamma(x_b) =
\sem{\phi}_\gamma(y_b)$. Then $|v_a-v_b| \leq v$ by
invariance. We have
\begin{align*}
  \sem{L\phi}_\gamma(x) &= \sem{L} \cdot \ftFinK(\Act \boxplus \sem{\phi}_\gamma)(0.5(a, x_a) + 0.5(b, x_b))= \sem L(0.5(a, v_a) + 0.5(b, v_b))\\
  \sem{L\phi}_\gamma(y) &= \sem{L} \cdot \ftFinK(\Act \boxplus \sem{\phi}_\gamma)(0.5(a, y_b) + 0.5(b, y_a)) = \sem{L}(0.5(a, v_b) + 0.5(b, v_a)).
\end{align*}
The distribution $\sem{L\phi}_\gamma(x)$ is thus $L$-equivalent to
$\mu = 0.5v_a(a, 1) + 0.5(1-v_a)(a, 0) + 0.5v_b(b, 1) + 0.5(1-v_b)(b,
0)$, while $\sem{L\phi}_\gamma(y)$ is $L$-equivalent to
$\nu = 0.5v_b(a, 1) + 0.5(1-v_b)(a, 0) + 0.5v_a(b, 1) + 0.5(1-v_a)(b,
0)$\lsnote{Instead need to work with $\epsilon$, $1-\epsilon$ here;
  what if, say, $v_a=0$, though?}. We give an upper bound on
$d(\mu, \nu)$.  Without loss of generality, assume $v_a \leq
v_b$. Then the following distribution $\pi$ is a coupling of $\mu$ and
$\nu$:
\begin{equation}
\begin{alignedat}{3}
	\label{eq:coupling}
	\pi &= 0.5v_a((a,1),(a,1)) &&+ 0.5(1-v_b)((a,0),(a,0))\\
	&+ 0.5v_a((b,1),(b,1)) &&+ 0.5(1-v_b)((b,0),(b,0))\\
	&+ 0.5(v_b-v_a)((b,1), (a,1)) &&+ 0.5(v_b-v_a)((a,0), (b,0))
\end{alignedat}
\end{equation}
By the Kantorovich-Rubinstein duality, $d(\mu, \nu)$ is bounded by the expected value
by~$\mathbb{E}_\pi d$ of $d$ under $\pi$. In the calculation of $\mathbb{E}_\pi d$,
the first two lines in \Cref{eq:coupling} contribute~$0$. Thus,
\[
	\begin{split}
          \mathbb{E}_\pi d &= 0.5(v_b-v_a)d((b,1), (a,1)) + 0.5(v_b-v_a)d((a,0), (b,0))\\
		   &\leq 0.5vd((b,1), (a,1)) + 0.5vd((a,0), (b,0))\\
		   &= 0.5v^2 + 0.5v^2\\
		   &= v^2\\
	\end{split}
\]
where in the inequality, we use that $v_b-v_a\le v$.  Since
$\sem{L}$ is nonexpansive, $d(\mu, \nu) \leq v^2$ implies
$|\sem{L}(\mu)-\sem{L}(\nu)|\leq v^2$, and therefore
$|L\phi(x)-L\phi(y)|\le v^2$, as claimed. \qed

\begin{remark}\label{rem:set-based-logics} We note that for coalgebraic logics in the sense we employ
  here, which live natively in the category of metric spaces,
  interpretations~$\sem{L}\colon\ftFinK(\Act\boxplus[0,1])\to[0,1]$ of
  modalities~$L$ are nonexpansive by definition. We can strengthen our
  impossibility result by letting the modal logic live over~$\SET$, in
  the sense that interpretations~$\sem{L}$ can be unrestricted maps;
  for purposes of the present discussion, we refer to such modalities
  as \emph{set-based}. Under a mild additional assumption on the
  logic, we can show that this relaxation does not actually provide
  additional room for maneuvering:

  \begin{claim}\label{claim:lambda-nexp}
    If $\mathcal{L}$ is an invariant and expressive set-based
    coalgebraic logic for probabilistic metric trace semantics, and
    there is a formula $\phi$ and states $x_0$ and $x_1$ in a
    coalgebra $(X, \gamma)$ such that $\sem{\phi}(x_0) = 0$ and
    $\sem{\phi}(x_1) = 1$, then for every modal operator~$L$
    in~$\mathcal{L}$, the interpretation~$\sem{L}$ is nonexpansive.
\end{claim}
\begin{proof}
  Assume for the sake of contradiction that there are
  $s, t \in \ftFinK(\Act \boxplus [0,1])$, such that
  $d(\sem{L}(s), \sem{L}(t)) > d(s,t)$. Since the logic is assumed to
  be invariant, $x_0$ and $x_1$ must have probabilistic metric trace
  distance~$1$. %
  We define a new coalgebra $(X',\gamma')$ extending $(X,\gamma)$ with
  new states $\bar{x}, \bar{y}$, and~$x_p$ for $p \in (0,1)$, with
  successor distributions given by
  $\gamma'(x_p) = \gamma(x_1) +_p \gamma(x_0)$,
  $\gamma'(\bar{x}) = s'$, and $\gamma'(\bar{y}) = t'$, where
  $s',t' \in \ftFinK(\Act \boxplus X')$ are defined by substituting
  all values $p \in [0,1]$ in $s,t$ with $x_p$. The state $x_p$ is
  easily seen to have trace distance $p$ (respectively $1-p$) from
  $x_0$ ($x_1$). Therefore, because of invariance,
  $\sem{\phi}(x_p) = p$.  The probabilistic metric trace distance
  of~$\bar x$ and~$\bar y$ is at most $d(s', t')=d(s,t)$. On the other
  hand,
  \begin{equation*}
    \begin{split}
      &d(\sem{L\phi}_\gamma(\bar{x}),\sem{L\phi}_\gamma(\bar{y}))\\
      &=d(\sem{L}(\ftFinK(\Act \boxplus \sem{\phi}_\gamma)(s')), \sem{L}(\ftFinK(\Act \boxplus \sem{\phi}_\gamma)(t'))\\
      &= d(\sem{L}(s), \sem{L}(t))\\
      &>d(s,t),
    \end{split}
  \end{equation*}
  so~$\mathcal{L}$ is not invariant with respect to
  probabilistic trace distance, contradiction.
\end{proof}
\end{remark}

\subsection{Details for \autoref{sec:theories} (\nameref{sec:theories})}

\subsubsection*{Proof of \autoref{prop:theoryMonad}}
  Let $\mathcal{T} = (\Sigma, \delta, E)$ be a depth-$1$ graded
  quantitative equational theory and $\mathbb M$ the graded monad it
  induces.
  \begin{enumerate}[wide]
  \item %
    We show that Diagram~\eqref{coequalizer} is a
    coequalizer. Elements of $M_1M_nX$ are represented as equivalence
    classes~$[s]$ of depth-$1$-terms over depth-$n$ terms over
    variables from~$X$; we briefly refer to such terms~$s$ as
    \emph{layered terms}. More presicely speaking, the element of
    $[s]\in M_1M_nX$ represented by~$s$ is formed by first taking the
    equivalence classes of the lower depth-$n$ terms, and then taking
    the equivalence class of the depth-$1$ tern over $M_nX$ thus
    obtained. We write~$\bar s$ for the collapse of~$s$ into a
    deptn-$n+1$ term over~$X$, so~$\bar s$ represents the element
    $\mu^{10}([s])\in M_{n+1}X$. Given layered terms~$s,t$,
    $d(\mu^{1n}_X([s]), \mu^{1n}_X([t])) = \epsilon$ implies that
    there is a proof for $\bar s =_\epsilon \bar t$ in the
    context~$\Gamma_X=\{x=_\delta y\mid x,y\in X, d(x,y)=\delta\}$.

    Approximate equalities of layered terms in~$M_1M_nX$ are derived
    by regarding the lower depth-$n$ terms as variables, assembled in
    a context~$\Gamma_0$ containing all derivable approximate
    equalities among then, and then applying the usual derivation
    system to the upper depth-$1$ terms. The coequalizer of
    $M_1\mu^{0n}_X$ and $\mu^{10}M_nX$ as in~\eqref{coequalizer} is
    formed by additionally allowing deriviation steps using strict
    equalities where depth-$0$ operations are shifted from the upper
    depth-$1$ layer to the lower depth-$n$ layer. Equivalently (by
    $(\mathbf{nexp})$, which on strict equalities acts like a
    congruence rule), this amounts to using as assumptions, instead of
    just~$\Gamma_0$, all approximate equalities $u=_\epsilon v$ among
    depth-$0$ terms~$u,v$ over depth-$n$ terms over~$X$ (i.e.\
    representatives of elements of $M_0M_nX$) that hold in $M_nX$
    (i.e.\ the elements of $M_0M_nX$ represented by~$u,v$ are
    identified up to~$\epsilon$ by $\mu^{0n}$); that is, we assume, in
    the upper-layer depth-$1$ deriviations, not only the metric
    structure of~$M_nX$ but also the structure of $M_nX$ as an
    $M_0$-algebra. We write~$\Delta$ for the set of these assumptions.

    We now claim that whenever $\bar s=_\epsilon \bar t$ is derivable
    in context~$\Gamma_X$ for layered terms~$s,t$, then
    $s=_\epsilon t$ is derivable, as a depth-$1$ approximate equality,
    from the assumptions in~$\Delta$ (of course, like the usual
    assumptions on variables, the assumptions in~$\Delta$ cannot be
    substituted into). By the above discussion, it follows
    that~\eqref{coequalizer} is indeed a coequalizer. We proceed by
    induction on the derivation of $\bar s=_\epsilon \bar t$,
    distinguishing cases on the last step. We note that
    since~$\bar s$, $\bar t$ have depth at least~$1$, the case for
    $\mathbf{(assn)}$ does not occur. The remaining cases are as
    follows.
    \begin{itemize}
    \item $\mathbf{(refl)}$: In this case, the layered terms $s$
      and~$t$ collapse into syntactically identical terms, which
      implies that they differ only by shifting depth-$0$ operations
      between the upper depth-$1$ layer and the lower depth-$n$
      layer, and hence are derivably equal under~$\Delta$.
    \item Steps using $\textbf{(sym)}$, $\textbf{(triang)}$,
      $\textbf{(wk)}$, and $\textbf{(arch)}$ can just be copied. As an
      example, we treat the case for $\mathbf{(triang)}$ in detail:
      Since approximate equalities derivable in context~$\Gamma_X$ are
      uniform-depth and use only variables from~$\Gamma_X$, the
      intermediate term used in the last step is necessarily a
      depth-$n+1$ term over~$X$, and therefore a collapse~$\bar w$ of
      some layered term~$w$. That is, we have concluded
      $\bar s=_\epsilon \bar t$ from $\bar s=_{\delta_1}\bar w$ and
      $\bar w=_{\delta_2}\bar t$ where
      $\epsilon=\delta_1+\delta_2$. By induction, $s=_{\delta_1}w$ and
      $w=_{\delta_2}t$ are derivable under~$\Delta$, and then the same
      holds for $s=_\epsilon t$.
    \item $\mathbf{(nexp)}$: Nonexpansiveness of an operation~$f$ is
      equivalently phrased as an axiom
      $x_1=_\epsilon y_1,\dots,x_n=_\epsilon y_n\vdash
      f(x_1,\dots,x_n)=_\epsilon f(y_1,\dots,y_n)$, so we omit this
      case, referring to the case for $\mathbf{(ax)}$.
    \item $\mathbf{(ax)}$: In this case, we have applied an axiom
      $\Gamma\vdash u=_\epsilon v$, and derived
        $\sigma(u) =_\epsilon \sigma(v)$ from $\sigma(x)=_\delta\sigma(y)$
      for all $(x=_\delta y)\in\Gamma$. First we consider the case
      that~$u,v$ are depth-$0$. Then for $(x=_\delta y)\in\Gamma$,
      $\sigma(x)$ and $\sigma(y)$ have depth $n+1$, and hence arise by
      collapsing layered terms $w_x$, $w_y$; by induction,
      $w_x=_\delta w_y$ is derivable under~$\Delta$, and again
      applying $\Gamma\vdash u=_\epsilon v$, we obtain that
        $\sigma'(u)=\sigma'(v)$ is derivable under~$\Delta$ where $\sigma'$
      is the substitution given by $\sigma'(x)=w_x$. Since the layered
        terms $\sigma'(u)$ and $\sigma'(v)$ collapse to $\bar s$
      and~$\bar t$, respectively, they differ from~$s$ and~$t$,
      respectively, only by shifting depth-$0$ operations between the
      layers, and hence $s=_\epsilon t$ is derivable under~$\Delta$.

      The remaining case is that $u,v$ have depth~$1$. In this case,
      for $(x=_\delta y)\in\Gamma$, $\sigma(x)$ and $\sigma(y)$ have
      depth~$n$, and $\sigma(x)=_\delta \sigma(y)$ is derivable, hence
      in~$\Delta$. We write $u_\sigma$ and~$v_\sigma$ for the layered
      terms with depth-$1$ layer given by~$u$ and~$v$, respectively,
      and depth-$n$ layer given by~$\sigma$. Then
      $u_\sigma=_\epsilon v_\sigma$ is derivable under~$\Delta$. The
      layered terms~$u_\sigma$ and $v_\sigma$ differ from~$s$ and~$t$,
      respectively, at most by shifting depth-$0$ operations between
      the depth-$1$ layer and the depth-$n$ layer, so $s=_\epsilon t$
      is also derivable under~$\Delta$.
    \end{itemize}
  \item Second, we show that the $\mu^{nk}$ are epi.  The $\mu^{1k}$
    are (regular) epimorphisms since diagram \ref{coequalizer} is a
    coequalizer diagram. This implies that all $\mu^{nk}$ are epi. To
    see this, first note that each $\mu^{0k}$ is split epi by the unit
    law $\mu^{0k}\eta M_k = id_{M_k}$. We show that $\mu^{nk}$ is epi
    by induction on $n$. The base cases $n \le 1$ are already
    discharged. For the step from~$n$ to $n+1$, recall the associative
    law $\mu^{n+1,k}\mu^{1n}M_k = \mu^{1,m+k}M_1\mu^{nk}$. We have
    that $\mu^{1,m+k}$ is epi, and by induction, so is
    $M_1\mu^{nk}$. Thus, both sides of the associative law are epi. It
    follows that $\mu^{(n+1)k}$ is epi as desired.
  \qed
  \end{enumerate}

\subsection{Details on \nameref{sec:presentations}}
Fix $k\colon [0,1]^2 \to [0,1]$ such that

\begin{enumerate}
	\item $x + y \geq z$ and $x' + y' \geq z'$ implies $k(x, x') + k(y, y') \geq k(z,z')$.
	\item $k(x, y) = 0$ if and only if $x = y = 0$
\end{enumerate}

\noindent Put  $d_{A \otimes B}((a, b),(a', b')) = k(d(a, a'), d(b, b'))$.

\begin{lemma}
  The function $d_{A\otimes B}$ defined above is a metric on the set
  $A \times B$.
\end{lemma}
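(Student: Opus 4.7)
The plan is to verify each of the four axioms of a metric (reflexivity, separation, symmetry, triangle inequality) for $d_{A\otimes B}$ by reducing to the corresponding properties of $d_A$ and $d_B$ via the two assumed properties of $k$.

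Reflexivity is immediate from property~(2): since $d_A(a,a) = 0$ and $d_B(b,b) = 0$, we have $d_{A\otimes B}((a,b),(a,b)) = k(0,0) = 0$. Symmetry likewise follows directly from symmetry of $d_A$ and $d_B$, since $k(d_A(a,a'), d_B(b,b')) = k(d_A(a',a), d_B(b',b))$. Separation is again a direct consequence of property~(2): if $d_{A\otimes B}((a,b),(a',b')) = k(d_A(a,a'), d_B(b,b')) = 0$, then by the ``only if'' direction of~(2) both $d_A(a,a') = 0$ and $d_B(b,b') = 0$, whence $a = a'$ and $b = b'$ by separation in $A$ and $B$.

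The main step is the triangle inequality, which is exactly what property~(1) is designed to deliver. Given points $(a,b), (a',b'), (a'',b'') \in A \times B$, set
\[
  x = d_A(a,a'),\ y = d_A(a',a''),\ z = d_A(a,a''),\ x' = d_B(b,b'),\ y' = d_B(b',b''),\ z' = d_B(b,b'').
\]
The triangle inequalities in $A$ and $B$ give $x + y \geq z$ and $x' + y' \geq z'$, so property~(1) yields
\[
  k(x,x') + k(y,y') \geq k(z,z'),
\]
which rewritten is precisely
\[
  d_{A\otimes B}((a,b),(a',b')) + d_{A\otimes B}((a',b'),(a'',b'')) \geq d_{A\otimes B}((a,b),(a'',b'')).
\]

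No step appears to be a real obstacle; the lemma is essentially a compatibility check, and the two properties of $k$ were chosen exactly to make reflexivity/separation and the triangle inequality follow mechanically. The only minor point worth noting is that one must be sure $d_{A\otimes B}$ lands in $[0,1]$, but this is built into the assumption that $k$ has codomain $[0,1]$.
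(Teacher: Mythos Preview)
Your proof is correct and follows essentially the same approach as the paper: property~(2) handles reflexivity and separation, and property~(1) yields the triangle inequality by instantiating with the coordinate distances exactly as you do. You additionally verify symmetry (which the paper omits as obvious) and remark on the codomain being $[0,1]$, both of which are fine minor completions.
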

\begin{proof}
	For reflexivity, we have that $d_{A\otimes B}((a,b), (a,b)) = k(d_A(a, a), d_B(b,b)) = k(0, 0) = 0$.
	For positivity, suppose that $d_{A\otimes B}((a,b), (a',b')) = 0$. Then $k(d(a, a'), d(b, b')) = 0$, implying $d(a, a') = 0$ and thus $a = a'$ and similarly $b = b'$, therefore $(a,a') = (b, b')$.
	For the triangle inequality, we have that
	\[
		\begin{split}
			&d_{A\otimes B}((a, b), (a', b')) + d_{A\otimes B}((a', b'), (a'', b''))\\
			&= k(d(a, a'), d(b,b')) + k(d(a', a''), d(b', b''))\\
			&\geq k(d(a, a''), d(b, b''))\\
			&= d_{A\otimes B}((a, b), (a'', b'')).
		\end{split}
 	\]
\end{proof}

\begin{lemma}
	The $p$-norms for $1 \leq p\in \nat$ with discount factor $\delta \in (0, 1]$, i.e. $k(x, y) = \sqrt[\leftroot{-3}\uproot{3}p]{x^p + (\delta y)^p}$, satisfy these conditions.
\end{lemma}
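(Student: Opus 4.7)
The plan is to verify the two required conditions separately, with the second being immediate and the first reducing to a standard Minkowski-type inequality.

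First I would dispense with condition~(2). Since $\delta \in (0, 1]$, we have $\delta > 0$, so $(\delta y)^p = 0$ iff $y = 0$. Hence $k(x,y) = \sqrt[p]{x^p + (\delta y)^p} = 0$ iff both $x^p = 0$ and $(\delta y)^p = 0$, i.e.\ iff $x = y = 0$.

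For condition~(1), I would first note that $k$ is monotone in each argument, since the map $t \mapsto t^p$ is monotone on $[0,\infty)$ and likewise so is its inverse $t\mapsto \sqrt[p]{t}$. Therefore, assuming $x + y \geq z$ and $x' + y' \geq z'$, it suffices to establish the subadditivity inequality
\[
  k(x, x') + k(y, y') \;\geq\; k(x+y,\; x'+y'),
\]
since then monotonicity yields $k(x+y, x'+y') \geq k(z, z')$ and the desired chain follows by transitivity.

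The subadditivity inequality above is precisely the Minkowski inequality for the $p$-norm in $\mathbb{R}^2$, applied to the vectors $(x, \delta x')$ and $(y, \delta y')$: indeed,
\[
  \sqrt[p]{(x+y)^p + (\delta x' + \delta y')^p}
  \;\leq\;
  \sqrt[p]{x^p + (\delta x')^p} + \sqrt[p]{y^p + (\delta y')^p},
\]
which is exactly $k(x+y, x'+y') \leq k(x, x') + k(y, y')$ after factoring $\delta$ out of the second coordinate. This is a standard textbook fact for $1 \leq p$, and I would simply invoke it (noting that all entries are nonnegative, so there is no sign issue). The only real work is checking that the reformulation in terms of $k$ matches the Minkowski inequality correctly; this is straightforward since the $p$-norm of $(a,b)\in\mathbb{R}^2$ is $\sqrt[p]{|a|^p + |b|^p}$ and all quantities involved are already nonnegative. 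Since both conditions are thereby verified, the lemma follows.
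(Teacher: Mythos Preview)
Your proof is correct and follows essentially the same approach as the paper's: both reduce condition~(1) to the Minkowski inequality for the $p$-norm in $\mathbb{R}^2$ together with monotonicity of $k$, and condition~(2) is handled identically. The only minor difference is that you factor the argument explicitly as ``subadditivity via Minkowski, then monotonicity,'' whereas the paper chains these steps in a single calculation.
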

\begin{proof}
  For the second condition: It is clear that $k(0, 0) = 0$. So suppose
  $k(x, y) = 0$. This is the case if and only if
  $x^p+ (\delta y)^p = 0$, which in turn implies $x = y = 0$.  For the
  first condition, suppose that $x + y \geq z$ and $x' + y'\geq z'$. Then
  \[
    \begin{split}
      &k(x, x') + k(y, y')\\
      &= \sqrt[p]{x^p + (\delta y)^p} +\sqrt[p]{x'^p + (\delta y')^p}\\
      &\geq \sqrt[p]{(x + x')^p + ((\delta y) + (\delta y'))^p} \quad \text{(Minkowski inequality)}\\
      &\geq \sqrt[p]{z^p + (\delta z')^p} \quad \text{(Monotonicity of $+$, $\sqrt[p]{-}$ and $\delta\cdot$)}\\
      &= k(z, z')
    \end{split}
  \]
\end{proof}
\begin{lemma}
	The sup norm $k(x, y) = x\vee \delta y$ satisfies these conditions.
\end{lemma}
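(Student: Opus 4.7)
The plan is to verify the two conditions in turn. Condition (2) is immediate: since $\delta \in (0,1]$ is strictly positive, $k(x,y) = x \vee \delta y = 0$ holds if and only if both $x = 0$ and $\delta y = 0$, which in turn is equivalent to $x = y = 0$.

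For condition (1), the main idea is to bound the sum $k(x,x') + k(y,y') = (x \vee \delta x') + (y \vee \delta y')$ from below by each of the two terms appearing in the join $k(z,z') = z \vee \delta z'$. First, since $x \vee \delta x' \geq x$ and $y \vee \delta y' \geq y$, the sum is at least $x + y$, which is at least $z$ by assumption. Second, since $x \vee \delta x' \geq \delta x'$ and $y \vee \delta y' \geq \delta y'$, the sum is at least $\delta x' + \delta y' = \delta(x' + y')$, which is at least $\delta z'$ by the other assumption together with $\delta \geq 0$. Combining these two bounds gives
\[
k(x,x') + k(y,y') \;\geq\; z \vee \delta z' \;=\; k(z,z'),
\]
as required.

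The argument is entirely routine, with no real obstacle; the only point worth flagging is that one uses the hypothesis $\delta > 0$ in (2) but only $\delta \geq 0$ in (1), so the stated range $\delta \in (0,1]$ comfortably suffices for both. Equivalently, one could appeal once to the general inequality $(a \vee b) + (c \vee d) \geq (a+c) \vee (b+d)$ with $a = x$, $b = \delta x'$, $c = y$, $d = \delta y'$, and then monotonicity of $\vee$ in both arguments, but the direct two-case bound above is more transparent.
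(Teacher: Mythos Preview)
Your proof is correct and essentially the same as the paper's. The paper applies the inequality $(a \vee b) + (c \vee d) \geq (a+c) \vee (b+d)$ directly (with $a = x$, $b = \delta x'$, $c = y$, $d = \delta y'$) and then uses monotonicity of $\vee$; you unpack that same inequality into its two constituent lower bounds, which is precisely the alternative you mention at the end.
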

\begin{proof}
	Satisfaction of the second condition is obvious, we show the first condition. Assume $x + y \geq z$ and $x' + y'\geq z'$. Then $k(x, x') + k(y, y') = (x\vee \delta x') + (y\vee \delta y') \geq (x+y)\vee (\delta x' +\delta y') \geq z\vee \delta z' = k(z, z')$
\end{proof}

\subsubsection*{Proof of \autoref{lem:ManhattanNexp}}
Let $(a, s), (b,t) \in \Act \boxplus TX$ where
$s = g(x_1,\ldots, x_n)$ and $t = h(y_1, \ldots, y_m)$ are terms of
$TX$ written in form of an algebraic equational theory associated to
$T$. Then we have that $d((a,s), (b,t)) = d(a, b) \oplus d(s,
t)$. Therefore
\[
  \begin{split}
    &d(\lambda_X(a, s), \lambda_X(b, t))\\
    & = d(g((a,x_1), \ldots, (a,x_n)), h(((b, y_1), \ldots, (b, y_m))))\\
    & \leq d(g((a,x_1), \ldots, (a,x_n)), g(((b, x_1), \ldots, (b, x_n))))\\
    & \oplus d(g((b,x_1), \ldots, (b,x_n)), h(((b, y_1), \ldots, (b, y_m))))\\
    & \leq d(a, b) \oplus d(g((b,x_1), \ldots, (b,x_n)), h(((b, y_1), \ldots, (b, y_m))))\\
    & = d(a, b) \oplus d(s, t) = d((a,s), (b,t))
  \end{split} \hfill
\qed
\]

\subsubsection*{Proof of \autoref{lem:distributive}}

We proceed in two parts. First, we show that $(M_nX)_{n\in \nat}$ is a
quantitative graded $\mathcal{T}[\Act]$ algebra, i.e. a quantitative
algebra \cite{mardare2016quantitative}, extended with grades in the
obvious way (generalized from the set-based
concept~\cite{DBLP:conf/concur/DorschMS19}). Second, we show that
$(M_nX)_{n\in \nat}$ is generated from~$X$ vie the operations of
$\mathcal{T}[\Act]$. Lastly, we prove completeness of the axioms with
respect to this algebra. We interpret depth-0 operations on
$M_nX=T(\Act^{\otimes n} \otimes X)$, ensuring satisfaction
of~$\mathcal{T}$, using the fact that~$T$ is the monad generated
by~$\mathcal{T}$, so that every set~$TY$ is a free
$\mathcal{T}$-algebra. Depth-1 operations $a$ are given by the
functions
$a_n\colon T(\Act^{\otimes n} \otimes {-}) \to T(\Act^{\otimes
  n+1}\otimes {-})$ defined by
$a_n(s) = \lambda_{\Act^{\otimes n}\otimes X}(a,s)$. For satisfaction
of the distributivity axiom, let
$v_1,\dots,v_n \in T(\Act^{\otimes n} \otimes X)$, represented as
$\mathcal{T}$-terms, and let~$g$ be an $n$-ary operation
of~$\mathcal{T}$; then
$a(g(v_1,\dots,v_n))=\lambda(a,g(v_1,\dots,v_n))= T\langle a,
\id_{\Act^{\otimes n} \otimes X} \rangle(t) = g((a, v_1), \ldots, (a,
v_n))$. For the distance axiom
$x =_\epsilon y \vdash a(x) =_{k(d(a, b), \epsilon)} b(y)$, let
$s, t \in T(\Act^{\otimes n} \otimes {-})$. %
Then we have
\[
  \begin{split}
    d(a(s), b(t)) &= d(\lambda_X(a, s), \lambda_X(b, t))\\
    & \le d_{\Act\otimes T(\Act^{\otimes n}\otimes X)}((a,s),(b,t))\\
    &= k(d(a, b), d(s,t)),
  \end{split}
\]
using nonexpansiveness of~$\lambda$ in the second step.

Generatedness of~$M_nX$ under these operations is then clear. For
completeness, we show that low distances in $M_nX$ are derivable using
quantitative equational reasoning: Let
$s, t \in M_nX=T(\Act^{\otimes m}\otimes X)$. By generatedness, we can
write~$s,t$ as terms $s=g(\sigma_1(x_1), \ldots, \sigma_n(x_n))$,
$t=h(\rho_1( y_1), \ldots, (\rho_m(y_m))$ where $g,h$ are
$\mathcal{T}$-terms, $\sigma_i,\rho_i\in\Act^{\otimes n}$, and
$x_i,y_i\in X$ for $i=1,\dots,n$.  Let~$\Gamma$ contain all distances
among $x_1,\dots,x_n,y_1.\dots,y_n$. Assume $d(s, t) \leq \epsilon$,
then we need to show that $\Gamma \vdash s =_\epsilon t$. By
completeness of the theory~$\mathcal{T}$ for~$T$, it is clear that
$\Gamma' \vdash s =_\epsilon t$ where $\Gamma'$ contains all distances
among the trace-poststate pairs
$\sigma_1(x_1),\dots,\sigma_n(x_n)),\rho_1(y_1),\dots,\rho_n(y_n))$. So
we reduce to proving that distances of these traces can be derived,
i.e.
$\Gamma \vdash (a_1, \ldots, a_n, x) =_{\epsilon'} (b_1, \ldots, b_n,
y)$ if
$d((a_1, \ldots, a_n, x), (b_1, \ldots, b_n, y)) \leq \epsilon'$. This
is immediate by induction over $n$, where the inductive step applies
the distance axiom.
\qed

\subsection{Details for \autoref{sec:logic}\\
(\nameref{sec:logic})}

\subsubsection{Graded Behavioural Distance is Below Branching-Time Behavioural Distance}
  \label{sec:coarser}
  \noindent We next substantiate the claim that every graded
  behavioural distance is below branching-time behavioural distance;
  this needs some forward references to \autorefsecs{sec:logic}
  and~\ref{sec:expr}. In fact, it is even true that every graded
  behavioural distance is below finite-depth branching-time distance
  as summarized in \autoref{expl:branching-time}. This, in turn, is
  immediate from the fact that the behaviour maps as introduced in
  \autoref{sec:logic} factors through the final sequence, as stated
  formally in the next lemma.

\begin{lemma}\label{lem:coarser}
  Let $(\mathbb{M},\alpha)$ be a graded semantics of a functor~$G$ on
  metric spaces. The morphisms $\gamma^{(i)}\colon X \to M_n1$ factor
  through $p_i\colon X \to G^i1$ as defined in \autoref{expl:branching-time}.
\end{lemma}
\begin{proof}
  We show the claim by induction on $i$. For $i = 0$ we have by
  naturality of $\eta$ that
  $\gamma^{(0)} = M_0! \cdot \eta_X = \eta_1 \cdot !_X = \eta_1 \cdot
  p_0$. For the inductive step assume that $\gamma^{(i)} = f_i \cdot
  p_i$. Then we have

  \begin{align*}
    \gamma^{(i+1)} &= \mu^{1n} \cdot M_1\gamma^{(n)} \cdot \alpha
      \cdot \gamma & \by{Def. $\gamma^{(i+1)}$}\\
    &= \mu^{1n} \cdot M_1f_i \cdot M_1p_i \cdot \alpha_X \cdot \gamma
      &\by{IH}\\
    &= \mu^{1n} \cdot M_1f_i \cdot \alpha_{G^iX} \cdot Gp_i \cdot
      \gamma & \by{Naturality of $\alpha$}\\
    &= \mu^{1n} \cdot M_1f_i \cdot \alpha_{G^iX} \cdot p_{i+1}
      & \by{Def. $p_{i+1}$} & \qedhere
  \end{align*}
\end{proof}

\subsubsection*{Proof of \autoref{thm:invariance}}
\label{prf:invariance}

The proof of \autoref{thm:invariance} 
\cite[Proposition 21]{DBLP:journals/corr/abs-2307-14826} is based on the
evaluation $\sem{\phi}_\gamma\colon X \to \Omega$ of a uniform depth-n formula
$\phi$ factoring through $M_n1$ via a map $\sem{\phi}_\mathbb{M}$. Since these facts are used in the proof of \autoref{thm:main}, we recall the details
here.

We define for each uniform depth-$n$ formula~$\phi$ an $M_0$-algebra
homomorphism $\sem{\phi}_{\mathbb M}\colon M_n1 \to \Omega$ and show
that on a coalgebra $(X, \gamma)$, we have
$\sem{\phi}_\gamma = \sem{\phi}_\mathbb{M} \cdot \gamma^{(n)}$.  The
claim then follows from the fact that the $\sem{\phi}_\mathbb{M}$ are
nonexpansive and
$d^\alpha(x, y) \geq d_{M_n1}(\gamma^{(n)}(x), \gamma^{(n)}(y))$ for
$x, y \in X$. We define $\sem{\phi}_\mathbb{M}$ recursively as
follows.

\begin{itemize}
\item  $\sem{c}_\mathbb{M} = M_01 \xrightarrow{M_0\hat{c}} M_0\Omega \xrightarrow{o} \Omega$ for $c \in \Theta$;
\item $\sem{ p(\phi_1, \ldots, \phi_k)}_\mathbb{M} = \sem{p} \cdot \langle \sem{ \phi_1}_\mathbb{M}, \ldots, \sem{ \phi_k }_\mathbb{M} \rangle$ for $p \in \mathcal{O}$ $k$-ary;
\item $\sem{ L \phi}_\mathbb{M} = \rsem{L}^\bullet(\sem{\phi'}_\mathbb{M})$ for
  $L \in \Lambda$ and $\sem{L} = \rsem{L} \cdot \alpha_\Omega$ as
  per \autoref{def:graded-logic}.
\end{itemize}
Here, the definition of modal operators is by freeness of the canonical algebra, that is,
$\rsem{L}^\bullet(\sem{\phi'}_\mathbb{M})$ is the unique morphism that makes the following square commute:
\begin{equation}\label{eq:modal}
  \begin{tikzcd}
    M_1M_n1 \arrow[r, "M_1\sem{\phi'}_\mathbb{M}"] \arrow[d,
    "\mu^{1n}"] & M_1\Omega \arrow[d, "\rsem L"] \\
    M_{n+1}1 \arrow[r, "\rsem{L}^\bullet(\sem{\phi'}_\mathbb{M})"]                      & \Omega
\end{tikzcd}
\end{equation}
It is straightforward to show by induction on the depth of $\phi$ that
the morphism $\sem{\phi}_\mathbb{M}$ defines a homomorphism of
$M_0$-algebras $(M_n1, \mu^{0n})$ and $(\Omega, o)$, which is needed
for $\sem{L\phi}_\mathbb{M}$ to be defined.

Given a coalgebra $(X, \gamma)$ and a depth-$n$ formula $\phi$ of
$\mathcal L$, we now show by induction on $\phi$ that
$\sem{\phi}_\gamma = \sem{\phi}_\mathbb{M} \cdot \gamma^{(n)}$.

For the case of $\phi = c \in \Theta$ we have, by unfolding
definitions, that $\sem{c}_\gamma = \hat{c} \cdot !_X$ and
$ \sem{\phi}_\mathbb{M} \cdot \gamma^{(n)} = o\cdot M_0\hat{c} \cdot
M_0!_X \cdot \eta_X$, which are the outer paths in the following
diagram:
\begin{equation*}
  \begin{tikzcd}
    X \arrow[r, "!"] \arrow[d, "\eta_X"] & 1 \arrow[d, "\eta_1"] \arrow[r, "\hat c"] & \Omega \arrow[d, "\eta_\Omega"] \arrow[rd, "id"] &        \\
    M_0X \arrow[r, "M_0!"]               & M_01 \arrow[r, "M_0 \hat c"]              & M_0\Omega \arrow[r, "o"]                         & \Omega
  \end{tikzcd}
\end{equation*}
The squares commute by  naturality of $\eta$, while commutativity of the triangle is implied by $o$ being an $M_0$-algebra.

The step for formulae of the form $\phi = p(\phi_1, \ldots, \phi_n)$
is immediate from the definitions. For $\phi = L\phi'$ with
$\phi'$ being of uniform depth~$n$, we have

\begin{equation*}
	\begin{alignedat}[t]{2}
	  &\sem{\phi}_\gamma = \sem{L} \cdot G\sem{\phi'}_\gamma \cdot \gamma\\
	  &= \rsem L \cdot \alpha_\Omega \cdot G\sem{\phi'}_\gamma \cdot \gamma \\
		&= \rsem L \cdot M_1\sem{\phi'}_\gamma \cdot\alpha_X\cdot \gamma & \by{naturality of $\alpha$}\\
		&= \rsem L \cdot M_1\sem{\phi'}_\mathbb{M}\cdot M_1\gamma^{(n)}\cdot\alpha_X\cdot \gamma & \by{induction}\\
    &=\rsem{L}^\bullet(\sem{\phi'}_\mathbb{M}) \cdot \mu^{1n} \cdot M_1\gamma^{(n)}\cdot\alpha_X\cdot \gamma & \by{\ref{eq:modal}}\\
		&= \sem{\phi}_\mathbb{M} \cdot \gamma^{(n+1)} & \by{definitions}\\
  \end{alignedat}
\end{equation*}
\qed

\subsection{Details for \autoref{sec:expr} (\nameref{sec:expr})}
\subsubsection*{Details for \autoref{rem:density}}\jfnote{should we mention that this holds in several examples, particularly in the 2-valued case?}
We make explicit the fact that given an initial family $\mathfrak{A}$ of $\Pmet$-homomorphisms (i.e.\ quantitative join semilattice homomorphisms)  $A\to [0,1]$, the closure of $\mathfrak{A}$ under all admissible propositional operators (i.e.\ $\Pmet$-homomorphisms $[0,1]^k\to[0,1]$) may not be dense in the space of all $\Pmet$-homomorphisms $A\to [0,1]$, even when~$A$ is free.
Take for instance the $\Pmet$-algebra $\Pmet (X, d_X)$ where $X = \{a, b, c\}$ with $d(a,b) = d(b,c) = 0.3$ and $d(a,c) = 0.4$.
	We attempt to recover the $\Pmet$-homomorphism $f\colon \Pmet (X, d_X) \rightarrow [0,1]$ uniquely defined by $f(\{a\}) = 0.6$, $f(\{b\}) = 0.8$, $f(\{c\}) = 1$ and $f(\{d\}) = 0$.
		We define a set $\mathfrak{A}$ as above as consisting of all join semilattice homomorphisms $g$ where either a) $g \not \leq f$ or b) $g(\{c\}) < 1$. This set of functions is initial.
We show that $\mathfrak{A}$ is also closed under the propositional
operators mentioned above: Let $o\colon [0, 1]^n \rightarrow [0, 1]$
be a join semilattice homomorphism, and let
$g_1,\dots,g_n\in\mathfrak A$. We have to show that
$o\langle g_i\rangle\in\mathfrak A$. It is clear that
$o\langle g_i\rangle$ is a join semilattice morphism. For the
remaining property, we distinguish cases as follows.
\begin{itemize}
\item Suppose that $o(1,\dots,1) < 1$. Then by monotonicity of join
  semilattice morphisms,
  $o(\langle g_i \rangle_{i \leq n}(\{c\})) < 1$, so
  $o\langle g_i \rangle\in\mathfrak A$.
\item Otherwise, $o(1^n) = 1$. By join continuity, the
  set~$J=\{j\in\{1,\dots,n\}\mid o(0^{j-1}10^{n-j-1}) = 1\}$ is
  nonempty. Since $o$ preserves the empty join, i.e. $o(0^n) = 0$, we
  know by nonexpansiveness that
  \begin{equation}\label{eq:j-ident}
    o(0^{j-1}v0^{n-j-1}) = v\quad\text{for all
      $j\in J$.}
  \end{equation}
  If $g_j(\{c\}) < 1$ for all $j \in J$, then
  $o(\langle g_i \rangle_{i \leq n}(\{c\})) < 1$ by join continuity,
  so $o\langle g_i \rangle\in\mathfrak A$ as claimed. Otherwise, we
  have $j \in J$ such that $g_j(\{c\}) = 1$. Since
  $g_j\in\mathfrak A$, this means that we have $z \in X$ such that
  $g_j(z) > f(z)$, implying by~\eqref{eq:j-ident} that
  $o(0^{j-1}g_j(z)0^{n-j-1}) = g_j(z) > f(z)$ and thus
  $o(\langle g_i \rangle_{i \leq n}(z)) > f(z)$ by monotonicity. Thus,
  $o\langle g_i \rangle\not\le f$, so
  $o\langle g_i \rangle\in\mathfrak A$.
\end{itemize}
On the other hand, it is clear that $\mathfrak{A}$ is not dense in
$\Galg{0}{M}(\Pmet(X), [0,1])$.

\subsubsection*{Proof of \autoref{thm:main}}

  Since uniform depth-$n$ formulae $\phi$ factor through $M_n1$ via
  algebra homomorphisms
  $\sem{\phi}_\mathbb{M} \colon M_n1 \to \Omega$, as
  shown in the \nameref{prf:invariance}, we only have to show
  that the closure of the family of maps
  \begin{equation*}
    \{\llbracket \phi \rrbracket_\mathbb{M}: M_n1 \to [0, 1] \mid \phi \text{ is
      a depth-$n$ formula}\}
  \end{equation*}
  under the propositional operators in $\mathcal{O}$ has property
  $\Phi$ and is thus initial for each~$n$.  We proceed by
  induction on $n$. The base case $n = 0$ is immediate by $\Phi$-type
  depth-0 separation. For the inductive step let $\mathfrak{A}$ denote
  the set of evaluations $M_n1 \rightarrow \Omega$ of depth-$n$
  formulae. By the induction hypothesis, $\mathfrak{A}$ has property
  $\Phi$. By definition,~$\mathfrak{A}$ is closed under propositional
  operators in
  $\mathcal{O}$. %
  By $\Phi$-type depth-$1$ separation, it follows that set
\begin{equation*}
  \{\rsem{L}^\bullet(\llbracket \phi \rrbracket_\mathbb{M}) \mid L \in
  \Lambda,\; \phi \text{ a depth-$n$ formula}\}
\end{equation*}
has property $\Phi$, proving the claim.
\qed

\subsubsection*{Details on \autoref{expl:separation}.1 (Metric Streams)}
In the following, we write $\mathcal{L}^{\stream}$ for the logic of
metric streams featuring only the truth constant~$1$ and
modalities~$\modal{a}$ for $a\in\Act$.

\begin{lemma}\label{lem:metric-streams-normed}
	$\mathcal{L}^{\stream}$ is normed isometric depth-0 and normed isometric depth-1 separating.
\end{lemma}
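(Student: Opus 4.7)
The plan is to dispatch the two claims separately. For depth-$0$, the family $\{\sem{1}\}$ lives on $M_01=1$, which is a singleton in the branching-time graded monad $M_n=\Act^n\times(-)$, so normed isometry holds vacuously. For depth-$1$, the canonical algebra of the form $M_n1$ has $A_0=\Act^n$ and $A_1=\Act^{n+1}$ with the sup metric, and the modality semantics unfolds to $\appModal{\modal{a}}{g}(b,\sigma)=(1-d_\Act(a,b))\wedge g(\sigma)$.

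Given $\mathfrak{A}\subseteq\MET(\Act^n,[0,1])$ normed isometric and a pair of streams $(b_1,\sigma_1),(b_2,\sigma_2)$ at distance $d_\Act(b_1,b_2)\vee d(\sigma_1,\sigma_2)>\epsilon$, my strategy is to pick the modality $\modal{b_1}$ together with a witness $g\in\mathfrak{A}$ satisfying $g(\sigma_1)=1$ (swapping the roles of the two points if necessary). A short calculation yields $\appModal{\modal{b_1}}{g}(b_1,\sigma_1)=1$ and spread $\max(d_\Act(b_1,b_2),\,1-g(\sigma_2))$. Where normed isometry applies directly, the argument is routine: if $d(\sigma_1,\sigma_2)>\epsilon$, normed isometry on the tail pair yields $g$ with $g(\sigma_1)=1$ (after swapping) and $g(\sigma_2)<1-\epsilon$, so the $1-g(\sigma_2)$ term carries the spread; if $d_\Act(b_1,b_2)>\epsilon$ and $\sigma_1\neq\sigma_2$, the same application for any $\epsilon'\in(0,d(\sigma_1,\sigma_2))$ still yields some $g$ attaining $1$ at $\sigma_1$, and the $d_\Act(b_1,b_2)$ term then suffices.

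The main obstacle is the residual case $\sigma_1=\sigma_2$ with $d_\Act(b_1,b_2)>\epsilon$, where abstract normed isometry provides no witness at the single tail point. I plan to address this by carrying through the induction of Theorem~\ref{thm:main} with a strengthened invariant: in addition to being normed isometric, $\mathfrak{A}$ contains, for every $\sigma\in A_0$, some $g$ with $g(\sigma)=1$. This strengthened property is upward closed, still implies initiality, holds at depth~$0$ (the constant $1$ on the singleton $M_01$ is such a witness), and is preserved by $\Lambda(-)$: for a point $(b,\sigma)\in\Act^{n+1}$, taking $a=b$ together with $g\in\mathfrak{A}$ satisfying $g(\sigma)=1$ gives $\appModal{\modal{b}}{g}(b,\sigma)=1$. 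With this invariant in hand, the residual case is discharged immediately, completing the argument.
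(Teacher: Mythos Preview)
Your approach is the same as the paper's---unfold $\appModal{\modal{a}}{g}(b,\sigma)=(1-d_\Act(a,b))\wedge g(\sigma)$ and show the invariant transfers---but you are more careful, and rightly so. The paper proceeds without case distinction, asserting that ``since $\mathfrak A$ is normed isometric, there is $f\in\mathfrak A$ with $f(v)=1$ and $f(w)=1-d(v,w)$'', and computes from there. As you notice, normed isometry (Example~\ref{expl:invariants}) does not deliver this when $v=w$; in fact the lemma as literally stated is false. Take $\Act=\{v,w\}$ with $d(v,w)=\tfrac12$, so $A_0=M_11\cong\Act$, and let $\mathfrak A=\{f\}$ with $f(v)=\tfrac12$, $f(w)=1$: this $\mathfrak A$ is normed isometric and (since $\mathcal O=\emptyset$) closed under the propositional operators, yet both maps in $\Lambda(\mathfrak A)$ send $(v,v)$ and $(w,v)$ to~$\tfrac12$, so $\Lambda(\mathfrak A)$ is not normed isometric.

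Your remedy---strengthening $\Phi$ to require, in addition to normed isometry, that every point of $A_0$ is sent to~$1$ by some member of~$\mathfrak A$---is exactly the right move. This strengthened $\Phi$ is upward closed, implies initiality, holds at depth~$0$ (via the truth constant~$1$ on the one-point space $M_01$), and is preserved by~$\Lambda$: for the normed-isometry clause your case split goes through, and for the pointwise-$1$ clause one takes $a=b$ together with a $g$ hitting~$1$ at~$\sigma$. With this invariant in place, Theorem~\ref{thm:main} still yields expressiveness, so the intended application is unaffected. In short, you follow the paper's line but identify and repair a genuine gap that the paper's proof glosses over.
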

\begin{proof}
  For normed isometric depth-$0$ separation, note that $M_01 = 1$ and
  that $\llbracket \top \rrbracket = M_0\hat{\top}$ is defined by
  $* \mapsto 1$, so that $\{\llbracket \top \rrbracket\}$ is normed
  isometric. For normed isometric depth-$1$ separation, note that
  since $M_0=\id$ in the present case, canonical $M_1$-algebras have
  the form $\id\colon GA_0\to A_1=GA_0=\Act\times A_0$. Let~$A$ be
  such a canonical $M_1$-algebra, and let $\mathfrak{A}$ be a normed
  isometric set of morphisms $A_0 \rightarrow \Omega$. Let
  $(a, v), (b, w) \in M_1A_0=\Act\times A_0$.  Since~$\mathfrak{A}$ is
  normed isometric, there is $f \in \mathfrak{A}$ such that $f(v) = 1$
  and $f(w) = 1- d(v, w)$.  To show that $\Lambda(\mathfrak{A})$ is
  normed isometric, we show that
  $\sem{\modal{a}}^\bullet(f)\in\Lambda(\mathfrak A)$ exhibits the required
  properties. On the one hand, we have
  \begin{equation*}
    \begin{split}
      &\llbracket \modal{a} \rrbracket^\bullet(f)((a, v))\\
      & = \llbracket \modal{a} \rrbracket((a, f(v))\\
      & = \min\{1- d(a, a), f(v)\}\\
      & = f(v)\\
      & =  1
    \end{split}
  \end{equation*}
  and on the other hand,
  \begin{equation*}
    \begin{split}
      &\llbracket \modal{a} \rrbracket^\bullet(f)((b, w))\\
      & = \llbracket \modal{a} \rrbracket((b, f(w)))\\
      & = \min\{1- d(a, b), f(w)\}\\
      & = \min\{1- d(a, b), 1- d(v, w)\}\\
      & = 1 - \max\{d(a, b), d(v, w)\}\\
      & = 1- d((a, v), (b, w)).
    \end{split}
  \end{equation*}
\end{proof}

\begin{proposition}
  $\mathcal{L}^{\stream}$ is not initial-type depth-1 separating.
\end{proposition}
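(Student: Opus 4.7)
The goal is to exhibit a canonical $M_1$-algebra of the form $M_n 1$ together with an initial cone $\mathfrak{A}$ of $M_0$-homomorphisms $A_0 \to \Omega$ whose image $\Lambda(\mathfrak A)$ fails to be initial. Since the graded monad for metric streams has $M_0 = \id$, $M_0$-homomorphisms are just nonexpansive maps, and since $\mathcal{L}^{\stream}$ has $\mathcal{O} = \emptyset$, no closure condition on $\mathfrak A$ needs to be verified. This reduces the task to a concrete counterexample.

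I would take $n = 1$, so that $A_0 = M_1 1 \cong \Act$ and $A_1 = M_2 1 \cong \Act \times \Act$ under the supremum metric. Pick two labels $a_0, b_0 \in \Act$ with $\epsilon := d_\Act(a_0, b_0) \in (0, 1]$; restricting attention to the two-point subspace $\{a_0, b_0\}$ if necessary, let $g\colon \Act \to [0, 1]$ be the nonexpansive map with $g(a_0) = 0$ and $g(b_0) = \epsilon$ (for instance $g(x) = d_\Act(a_0, x) \wedge \epsilon$), and set $\mathfrak A := \{g\}$. On the two-point space $\{a_0, b_0\}$ this is an isometric embedding into $[0,1]$ and hence a (singleton) initial cone.

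Unfolding the defining square of $\appModal{\modal c}{g}$ yields
\[
  \appModal{\modal c}{g}(x, y) \;=\; (1 - d_\Act(c, x)) \wedge g(y) \qquad (c \in \Act,\; (x, y) \in A_1).
\]
The decisive observation is that $g(a_0) = 0$ collapses the meet to $0$ on every pair whose second coordinate is $a_0$, irrespective of the first coordinate. Consequently, for every $c \in \Act$,
\[
  \appModal{\modal c}{g}(a_0, a_0) \;=\; 0 \;=\; \appModal{\modal c}{g}(b_0, a_0),
\]
even though $d_{A_1}((a_0, a_0), (b_0, a_0)) = d_\Act(a_0, b_0) \vee 0 = \epsilon > 0$. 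Hence $\Lambda(\mathfrak A)$ fails to recover the metric on $A_1$, so $\mathcal{L}^{\stream}$ is not initial-type depth-$1$ separating.

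I do not anticipate a genuine obstacle; the proof is essentially a direct computation once the right instance is chosen. The substantive point is conceptual: initial-type separation is too strong precisely because one is permitted to pick a small initial $\mathfrak A$ whose members collapse via the $\wedge$ in $\sem{\modal c}$. Enlarging $\mathfrak A$ (e.g.\ to all nonexpansive maps) restores separation through constant-$1$ maps, and strengthening to normed isometry forces some member of $\mathfrak A$ to attain the value $1$ at each relevant post-state, blocking exactly this collapse; this is why Lemma~\ref{lem:metric-streams-normed} does go through.
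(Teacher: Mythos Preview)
Your proof is correct and slightly different from the paper's. Both construct a singleton initial cone $\mathfrak A=\{g\}$ on a two-point $A_0$ and check that $\Lambda(\mathfrak A)$ fails to separate a chosen pair in $A_1$, but the mechanisms differ. You choose $g(a_0)=0$, so the meet in $\sem{\modal c}$ collapses $\appModal{\modal c}{g}$ identically to~$0$ on all pairs with second coordinate $a_0$; the failure of initiality is then immediate and the argument explains conceptually why normed isometry (forcing some value to be~$1$) is the right strengthening. The paper instead takes $\Act=\{a,b\}$ with $d(a,b)=0.8$, a separate $A_0=\{v,w\}$ with $d(v,w)=0.5$, and $f$ with values $0.75,0.25$, then computes that $\sem{\modal a}^\#(f)$ and $\sem{\modal b}^\#(f)$ yield distances $0.55$ and $0.05$ between $(a,v)$ and $(b,w)$, both strictly below $0.8$; this shows the phenomenon persists even when all values are bounded away from~$0$. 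A further difference: your counterexample lives in the canonical algebra with $A_0=M_1 1\cong\Act$, which literally matches the paper's definition of depth-1 separation (restricted to algebras ``of the form $M_n 1$''); the paper's choice of $A_0=\{v,w\}$ with $d(v,w)=0.5$ is not of this form for the given~$\Act$, so your version adheres more strictly to the stated definition.
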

\begin{proof}
  Let $\Act = \{a, b\}$ with $d(a, b) = 0.8$. Let $A$ be a canonical
  $M_1$-algebra (described as in the proof of
  \autoref{lem:metric-streams-normed}) with $A_0 = \{v, w\}$ where
  $d(v, w) = 0.5$. The map $f\colon A_0\to \Omega$ defined by
  $f(w) = 0.25$ and $f(v) = 0.75$ is thus initial; we show that
  $\Lambda(\{f\})=\{\sem{\modal{a}}^\bullet(f),\sem{\modal{b}}^\bullet(f)\}$ (a
  set of maps $A_1=\Act\times A_0\to\Omega$) is not, thus proving the
  claim (recall that there are no propositional operators). For
  $\sem{\modal{a}}^\bullet(f)$, we have
  \begin{equation*}
    \begin{split}
      &|\llbracket \modal{a} \rrbracket^\bullet(f)((a, v)) - \llbracket \modal{a} \rrbracket^\bullet(f)((b, w))|\\
      & =|\llbracket \modal{a} \rrbracket((a, f(v))) - \llbracket \modal{a} \rrbracket((b, f(w))|\\
      & =|\min\{1- d(a, a), f(v)\} - \min\{1- d(a, b), f(w)\}|\\
      & =|f(v) - \min\{1- d(a, b), f(w)\}|\\
      & =|0.75 - \min\{0.2, 0.25\}| = 0.55\\
    \end{split}
  \end{equation*}
  Moreover, for $\sem{\modal{b}}^\bullet(f)$, we have
  \begin{equation*}
    \begin{split}
      &|\llbracket \modal{b} \rrbracket^\bullet(f)((a, v)) - \llbracket \modal{b} \rrbracket^\bullet(f)((b, w))|\\
      & =|\llbracket \modal{b} \rrbracket((a, f(v))) - \llbracket \modal{b} \rrbracket((b, f(w))|\\
      & =|\min\{1- d(b, a), f(v)\} - \min\{1- d(b, b), f(w)\}|\\
      & =|\min\{1- d(b, a), f(v)\} - f(w)\}|\\
      & =|\min\{0.2, 0.75\} - 0.25\}| = 0.05.
    \end{split}
  \end{equation*}
  Since $d((a,v), (b,w)) = 0.8$, this shows that
  $\Lambda(\mathfrak{A})$ is not initial.
\end{proof}

\subsubsection*{Details for \autoref{expl:separation}.2 (Metric Transition Systems)}
\begin{lemma}
  The graded quantitative theory in
  \autoref{expl:trace-theories} induces the
  graded monad $\Pmet(L^n \times {-})$
\end{lemma}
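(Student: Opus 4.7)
The plan is to apply Lemma~\ref{lem:distributive} with the Kleisli distributive law
\[
  \lambda_X\colon \Act \otimes \Pmet X \to \Pmet(\Act \otimes X),
  \qquad \lambda_X(a, U) = \{(a,x) \mid x \in U\},
\]
where $\otimes$ denotes the supremum tensor, corresponding to the combinator $k(x,y)=x\vee y$ used in Definition~\ref{def:trace-theory}. First I would note that $\Pmet$ is presented as a plain monad on $\MET$ by the quantitative theory of join semilattices (with strict axioms for $+,0$), as recalled already in Example~\ref{expl:trace-theories}.\ref{lem:qtrace-monad} and originally shown in~\cite[Corollary 9.4]{mardare2016quantitative}. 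Instantiating Definition~\ref{def:trace-theory} to this base theory, with $k(x,y)=x\vee y$, produces precisely the depth-$1$ axioms $\vdash a(0)=_0 0$, $\vdash a(x+y)=_0 a(x)+a(y)$, and $x=_\epsilon y \vdash a(x)=_{\epsilon \vee d_\Act(a,b)} b(y)$ listed in the example.

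The main step is to verify that $\lambda$ is nonexpansive (indeed isometric), which is not covered by Lemma~\ref{lem:ManhattanNexp}. Given $(a,U),(b,V)\in \Act \otimes \Pmet X$, I would compute the Hausdorff distance directly: using $d((a,x),(b,y))=d_\Act(a,b) \vee d_X(x,y)$ and the identity $c \vee \bigwedge_{y} f(y) = \bigwedge_y (c \vee f(y))$ (which holds in any distributive lattice, here $[0,1]$), one finds
\[
  \textstyle\bigvee_{x\in U}\bigwedge_{y\in V} \bigl(d_\Act(a,b)\vee d_X(x,y)\bigr)
  = d_\Act(a,b) \vee \bigvee_{x\in U}\bigwedge_{y\in V} d_X(x,y),
\]
and symmetrically for the other half of the Hausdorff distance, so that
$d_{\Pmet(\Act\otimes X)}(\lambda_X(a,U),\lambda_X(b,V)) = d_\Act(a,b) \vee d_{\Pmet X}(U,V) = d_{\Act \otimes \Pmet X}((a,U),(b,V))$.
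Thus every component of $\lambda$ is nonexpansive.

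With nonexpansiveness established, Lemma~\ref{lem:distributive} applies directly and yields that the theory in Example~\ref{expl:trace-theories}.\ref{lem:qtrace-monad} presents the graded monad arising from $\lambda$, which by Example~\ref{expl:graded-monads}.\ref{item:met-trace} is $M_n = \Pmet(\Act^n \times (-))$, as required. The only non-routine part of the argument is the lattice-theoretic manipulation showing that the supremum tensor distributive law is nonexpansive for the Hausdorff lifting; everything else is bookkeeping against the schema of Definition~\ref{def:trace-theory}.
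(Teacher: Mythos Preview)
Your proposal is correct and follows essentially the same route as the paper: reduce to Lemma~\ref{lem:distributive} and verify nonexpansiveness of $\lambda$ via the Hausdorff calculation, pulling $d_\Act(a,b)$ through the $\bigvee\bigwedge$ using distributivity of $\vee$ over $\wedge$. The paper's proof is just this computation written out in one chain of equalities, without the surrounding commentary.
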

\begin{proof}
  By \autoref{lem:distributive}, it suffices to show that the natural
  transformation
  $\lambda\colon \Act \times \Pmet{-} \to \Pmet(\Act \times {-})$
  given by $\lambda_X(a, S) = \{(a, s) \mid s\in S\}$ is
  nonexpansive. Let $(a, S), (b, T) \in \Act \times \Pmet X$ for a
  metric space~$X$. Then
  \begin{equation*}
    \begin{split}
      &d(\lambda_X(a, S), \lambda_X(b, T))\\
      &= \big(\bigvee_{s\in S} \bigwedge_{t\in T}d((a, s), (b, t))\big)\vee\big( \bigvee_{t\in T} \bigwedge_{s\in S}d((a, s), (b, t))\big)\\
      &= \big(\bigvee_{s\in S} \bigwedge_{t\in T}d(a, b) \vee d(s, t)\big)\vee\big( \bigvee_{t\in T} \bigwedge_{s\in S}d(a,b) \vee d(s, t)\big)\\
      &= \big(\bigvee_{s\in S} \big(d(a, b) \vee \bigwedge_{t\in T} d(s, t)\big)\big)\vee\big( \bigvee_{t\in T} \big(d(a,b) \vee \bigwedge_{s\in S} d(s, t)\big)\big)\\
      &\le d(a, b) \vee \big(\bigvee_{s\in S} \bigwedge_{t\in T}d(s, t)\big)\vee\big( \bigvee_{t\in T} \bigwedge_{s\in S}d(s, t)\big)\\
      &= d((a, S), (b, T)).
    \end{split}
  \end{equation*}
\end{proof}

\begin{lemma}
  The logic for trace semantics given in
  \autoref{expl:graded-logics} is
  expressive.
\end{lemma}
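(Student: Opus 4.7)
The plan is to apply \Cref{thm:main} with the normed-isometry initiality invariant $\Phi$ of \Cref{expl:invariants}. Depth-$0$ separation is immediate: $M_01=\Pmet(1)=\{\emptyset,\{*\}\}$ has only one nontrivial pair, at Hausdorff distance~$1$, and $\sem{1}_{\mathbb M}\colon\Pmet(1)\to[0,1]$ sends $\emptyset\mapsto 0$ (forced by $\Pmet$-homomorphy, since the $M_0$-algebra structure $o$ on $[0,1]$ is $\bigvee$) and $\{*\}\mapsto 1$, providing a normed-isometry witness.

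For depth-$1$ separation, let $\mathfrak{A}$ be a normed-isometric set of $\Pmet$-homomorphisms $M_n1=\Pmet(\Act^n)\to[0,1]$, and consider $S,T\in\Pmet(\Act^{n+1})$ with $d_H(S,T)>\epsilon$. By symmetry of the Hausdorff metric I may assume there is $s=(a,w)\in S$ with $\inf_{t\in T}d(s,t)>\epsilon$. Setting $W^{*}=\bigcup_{a'\in\Act,\,d(a,a')\le\epsilon}T_{a'}\subseteq\Act^n$ with $T_{a'}=\{w'\mid(a',w')\in T\}$, the assumption on~$s$ forces $d(w,w')>\epsilon$ for every $w'\in W^{*}$, so $d_H(\{w\},W^{*})>\epsilon$ (trivially so when $W^{*}$ is empty, comparing to~$\emptyset$). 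Applying normed isometry of $\mathfrak{A}$ to the pair $(\{w\},W^{*})$, arguing as in the metric-streams argument of \Cref{expl:separation} to obtain a witness peaked on the intended side, yields $g\in\mathfrak{A}$ with $g(\{w\})=1$ and $g(W^{*})<1-\epsilon$. Taking $h=\appModal{\modal{a}}{g}\in\Lambda(\mathfrak{A})$, the $a'=a$ summand of $h(S)=\bigvee_{a'\in\Act}(1-d(a,a'))\wedge g(S_{a'})$ picks up $g(\{w\})=1$, giving $h(S)=1$; each $a'$-summand of the corresponding expression for $h(T)$ is bounded either by $1-\epsilon$ (when $d(a,a')>\epsilon$) or by $g(T_{a'})\le g(W^{*})<1-\epsilon$ (when $d(a,a')\le\epsilon$, using $T_{a'}\subseteq W^{*}$ and monotonicity of the $\Pmet$-homomorphism~$g$), so $h(T)<1-\epsilon$. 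Hence $|h(S)-h(T)|>\epsilon$ and $h(S)\vee h(T)=1$, as required.

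The main obstacle is to justify the selection of a normed-isometry witness $g$ peaked on the intended side $\{w\}$ rather than on $W^{*}$: a direct reading of the invariant gives only that some $g$ satisfies $g(\{w\})\vee g(W^{*})=1$, without control over which element attains~$1$. In the ``wrong-peak'' subcase $g(W^{*})=1$, $g(\{w\})<1-\epsilon$, the plan is to exploit finiteness of $W^{*}$ and preservation of finite joins by the $\Pmet$-homomorphism~$g$ to locate some $y^{*}\in W^{*}$ with $g(\{y^{*}\})=1$, corresponding to $(a^{*},y^{*})\in T$ with $d(a,a^{*})\le\epsilon$, and then to switch the outer modality to $\modal{a^{*}}$, producing a witness that peaks on~$T$ instead. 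The technical heart of the proof is then to bound $\appModal{\modal{a^{*}}}{g}(S)$ below $1-\epsilon$ by combining nonexpansiveness of~$g$ with the assumption that $s$ is far from every element of $T$ to argue that $g(S_{a'})<1-\epsilon$ uniformly for those $a'$ with $d(a^{*},a')\le\epsilon$; this bookkeeping is where the quantitative setting diverges most noticeably from the two-valued case.
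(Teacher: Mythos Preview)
Your strategy via \Cref{thm:main} with the normed-isometry invariant is the right one, and once you have a witness $g\in\mathfrak A$ with $g(\{w\})=1$ and $g(W^{*})<1-\epsilon$, your computation that $\appModal{\modal{a}}{g}(S)=1$ and $\appModal{\modal{a}}{g}(T)<1-\epsilon$ is correct and is essentially the paper's argument. The gap is exactly where you locate it, in producing such a~$g$; but your proposed wrong-peak workaround does not go through. Switching to $\modal{a^{*}}$ and bounding $\appModal{\modal{a^{*}}}{g}(S)$ would require control over \emph{all} of $S_{a'}$ for $d(a^{*},a')\le\epsilon$, whereas your only hypothesis concerns the single element $s=(a,w)$. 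Concretely, take a single label~$a$, $X=\{w,y^{*}\}$ with $d(w,y^{*})=\tfrac12$, $S=\{(a,w),(a,y^{*})\}$, $T=\{(a,y^{*})\}$, and $\epsilon<\tfrac12$; then $W^{*}=\{y^{*}\}$, $a^{*}=a$, and for a wrong-peak $g$ with $g(\{y^{*}\})=1$ one gets $\appModal{\modal{a}}{g}(S)=g(\{w,y^{*}\})=1=\appModal{\modal{a}}{g}(T)$. The appeal to the metric-streams argument does not transfer either: there, swapping which of the two \emph{points} carries the peak is harmless because one simultaneously swaps the outer modality, but here the pair $(\{w\},W^{*})$ is not symmetric.

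The paper sidesteps the case split by a small but decisive trick: apply normed isometry not to $(\{w\},W^{*})$ but to the pair $(\{w\}\cup W^{*},\,W^{*})$, whose Hausdorff distance is still $>\epsilon$ since~$w$ is $\epsilon$-far from every element of~$W^{*}$. Every $g\in\mathfrak A$ is a $\Pmet$-homomorphism and hence monotone for inclusion, so the side attaining~$1$ is \emph{forced} to be the larger set $\{w\}\cup W^{*}$; then $g(\{w\})\vee g(W^{*})=g(\{w\}\cup W^{*})=1$ together with $g(W^{*})<1-\epsilon$ yields $g(\{w\})=1$, and the remainder of your argument applies verbatim.
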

\begin{proof}
  As indicated in \autoref{expl:separation}.\ref{expl:mts-separation},
  we use \autoref{thm:main}, with normed isometry as the initiality
  invariant.  Depth-0 separation is straightforward. For depth-1
  separation, we proceed according to
  \autoref{rem:restrict-separation} and exploit that canonical
  $M_1$-algebras of the form~$M_n1$ are free as $M_0$-algebras, which
  in the case at hand means they are quantitative join semilattices
  carrying the Hausdorff metric. That is, we are given a canonical
  $M_1$-algebra~$A$ such that $A_0=\Pmet(X)$ where~$X$ is a metric
  space, and then $A_1=\Pmet(\Act\times X)$. Let $\mathfrak{A}$ be a
  normed isometric set of nonexpansive join-semilattice morphisms
  $A\to\Omega$. We have to show that $\Lambda(\mathfrak{A})$ is normed
  isometric. So let $S,T\in\Pmet(\Act\times X)$ such that
  $d(S,T)>\epsilon$. Without loss of generality, this means that we
  have $(a,x)\in S$ such that $d(\{(a,x)\},T)>\epsilon$. Put
  \begin{equation*}
    C_{a,\epsilon}(T)=\{y\in X\mid \exists (b,y)\in T.\,d(a,b)\le\epsilon\}.
  \end{equation*}
  Then $d(\{x\},C_{a,\epsilon}(T))>\epsilon$ in~$A_0=\Pmet(X)$. To see
  this, let $y\in C_{a,\epsilon}(T)$, i.e.\ $(b,y)\in T$ for some~$b$
  such that $d(a,b)\le\epsilon$. Then $d((a,x),(b,y))>\epsilon$
  because $d((a,x),T)>\epsilon$, and hence $d(x,y)>\epsilon$, as
  required.

  It follows that
  $d(\{x\}\cup C_{a,\epsilon}(T),C_{a,\epsilon}(T))>\epsilon$.
  Since~$\mathfrak{A}$ is normed isometric, we thus have
  $f\in \mathfrak{A}$ such that
  \begin{equation*}
    |f(\{x\}\cup C_{a,\epsilon}(T))-f(C_{a,\epsilon}(T))|>\epsilon
  \end{equation*}
  and moreover
  $f(\{x\}\cup
  C_{a,\epsilon}(T))\vee f(C_{a,\epsilon}(T))=1$. Since~$f$, being a join
  semilattice morphism, is monotone w.r.t.\ set inclusion, this
  implies that $f(\{x\}\cup C_{a,\epsilon}(T))=1$ and
  $f(C_{a,\epsilon}(T))<1-\epsilon$, and moreover
  $f(\{x\}\cup C_{a,\epsilon}(T))=f(x)\vee f(C_{a,\epsilon}(T))$, so
  $f(\{x\})=1$. It follows that 
        $\modal{a}(f)(S) = 1$.
  It remains to
  show that $\modal{a}(f)(T)<1-\epsilon$. So let $(b,y)\in T$; we
  have to show that $f(y)\wedge (1-d(a,b))<1-\epsilon$. We distinguish
  cases:
  \begin{itemize}
  \item $d(a,b)\le\epsilon$: Then $y\in C_{a,\epsilon}(T)$. Since
    $f(C_{a,\epsilon}(T))<1-\epsilon$ and~$f$ preserves joins, we have
    $f(y)<1-\epsilon$, which implies the claim.
  \item $d(a,b)>\epsilon$: Then $1-d(a,b)<1-\epsilon$, which implies
    the claim. \qedhere
  \end{itemize}
\end{proof}

\subsection{Details for \autoref{sec:examples}
(\nameref{sec:examples})}

\subsubsection*{The Fuzzy Finite Powerset Monad}\label{sec:fuzzy-monad} The fuzzy finite powerset monad $\FPow_\omega$ on $\SET$ is given as
follows\lsnote{Use semimodule terminology}. For a set~$X$, $\FPow_\omega X$ is the set of maps
$X\to[0,1]$ with finite support. For a map~$f\colon X\to Y$,
$\FPow_\omega f$ takes fuzzy direct images; explicitly, for
$S\in\FPow_\omega X$ and $y\in Y$,
$\FPow_\omega f(S)(y)=\bigvee_{x\in X\mid f(x)=y}S(x)$. The unit
$\eta$ takes singletons; explicitly, $\eta_X(x)(x)=1$, and
$\eta_X(x)(y)=0$ for $y\neq x$. Finally, the multiplication~$\mu$
takes fuzzy big unions; explicitly,
$\mu_X(\mathfrak S)(x)=\bigvee_{S\in\FPow_\omega(X)}\mathfrak
S(S)\land S(x)$.

We lift~$\FPow_\omega$ to a functor $\FHaus_\omega$ on metric spaces by means of the
Kantorovich lifting
construction~\cite{bbkk:trace-metrics-functor-lifting}, applied to the
evaluation function $\ev_\Diamond\colon \FPow_\omega[0,1]\to[0,1]$
given by
\begin{equation*}
  \ev_\Diamond (S)=\bigvee_{v\in [0,1]}S(v)\land v,
\end{equation*}
equivalently presented as the predicate
lifting~$\Diamond\colon [0,1]^{-}\to [0,1]^{\FPow_\omega -}$ given by
\begin{equation*}
  \Diamond_X(f)(S)=\bigvee_{x\in X}S(x)\land f(x)
\end{equation*}
for $f\in[0,1]^X$ and $S\in\FPow_\omega(X)$. Explicitly, for a metric
space~$X=(X,d)$, the metric on $\FHaus_\omega X$ is thus given by
\begin{equation}\label{eq:fuzzy-kant}
  d(S,T)=\bigvee_{f\in\Met( X, [0,1])}|\Diamond_X(f)(S)- \Diamond_X(f)(T)|.
\end{equation}
This metric has an equivalent Hausdorff-style characterization~\cite[Example~5.3.1]{DBLP:conf/fossacs/WildS21} as
\begin{multline}\label{eq:fuzzy-haus}
  d(S,T)=\Big(\bigvee_{x}\bigwedge_{y} (S(x)\ominus T(y))\vee (S(x)\wedge d(x,y))\Big)\vee\\
  \Big(\bigvee_{y}\bigwedge_{x} (T(y)\ominus S(x))\vee (T(y)\wedge d(x,y))\Big)
\end{multline}
(in \emph{loc.\ cit.}, only the asymmetric case is mentioned, but
note~\cite[Lemma~5.10]{DBLP:journals/lmcs/WildS22}\footnote{In more detail, one applies \cite[Lemma~5.10]{DBLP:conf/fossacs/WildS21}
  to obtain that the Kantorovich lax
  extension~\cite{DBLP:conf/fossacs/WildS21} agrees with the
  Kantorovich lifting~\cite{bbkk:trace-metrics-functor-lifting} on
  pseudometrics when we additionally consider the dual modality~$\Box$
  of~$\Diamond$; this Kantorovich lifting, in turn, is easily seen to
  coincide with~\eqref{eq:fuzzy-kant}. On the other hand, applying the
  characterization given
  in~\cite[Example~5.3.1]{DBLP:conf/fossacs/WildS21} both
  to~$\Diamond$ and, dualizing appropriately, to~$\Box$ in the
  Kantorovich lax extension yields~\eqref{eq:fuzzy-haus}.}).

This metric in fact lifts the monad~$\FPow_\omega$ to a monad
on~$\Met$: The unit is clearly nonexpansive. Nonexpansiveness of the
multiplication, i.e.\ of fuzzy big union, is seen by means of the
Kantorovich description~\ref{eq:fuzzy-kant}; in fact, it suffices to
show that every map
$\Diamond_X(f)\colon\FPow_\omega X\to[0,1]\cong\FPow_\omega1$ is a
morphism of $\FPow_\omega$-algebras. Indeed, we can then argue as
follows: By definition, the family of all $\Diamond_X(f)$ is initial, so
it suffices to show that $\Diamond_X(f)\mu_X$ is nonexpansive for
all~$f$; but $\FPow_\omega$-homomorphy of $\Diamond_X(f)$ means that
these maps equal $\mu_1\FPow_\omega\Diamond(f)$. The
map~$\FPow_\omega \Diamond(f)$ is nonexpansive because $\Diamond(f)$
is nonexpansive and~$\FHaus_\omega$ lifts~$\FPow_\omega$;
nonexpansiveness of $\mu_1$ follows from the fact that join and meet
are nonexpansive operations on $[0,1]$.

So we check that $\Diamond_X(f)$ is a morphism of
$\FPow_\omega$-algebras. Note that under the isomorphism
$\FPow_\omega 1\cong[0,1]$, $\mu_1$ corresponds to $\ev_\Diamond$,
and under the usual Yoneda
correspondence~\cite{DBLP:journals/tcs/Schroder08} between
$\ev_\Diamond$ and
$\Diamond$, %
$\Diamond_X(f)=\ev_\Diamond\FPow_\omega f$. But $\FPow_\omega f$ is a
morphism of free $\FPow$-algebras, and so is $\ev_\Diamond$,
since~$\mu_1$ is.

We claim that $\FHaus_\omega$ is presented by the following
quantitative algebraic theory. We have unary operations~$r$ for all
$r\in[0,1]$, a constant~$0$, and a binary operation~$+$. These are
subject to strict equations saying that $0,+$ form a join semilattice
structure; strict equations stating that the operations~$r$ form an
action of $([0,1],\wedge)$, with~$0$ as zero element:
\begin{equation*}
  1(x)=x\quad r(s(x))=(r\wedge s)(x)\quad 0(x)=0;
\end{equation*}
and the following axioms governing distance:
\begin{equation*}
  x=_\epsilon y\vdash r(x)=_\epsilon s(y)
\end{equation*}
under the side condition that $|r-s|\le\epsilon$. Using
nonexpansiveness of joins, one derives
\begin{equation}\label{eq:fuzzy-eqs}
  x_1=_\epsilon y_1,\dots,x_n=_\epsilon y_n\vdash \sum_i r_i(x_i) =_\epsilon \sum_i s_i (y_i),
\end{equation}
again under the side condition that~$|r_i-s_i|\le\epsilon$ for all~$i$,
where we use big sum notation in the evident sense. %

It is clear that the operations and strict equations induce
$\FPow_\omega$. To see that the full theory induces $\FHaus_\omega$, it remains to
show two things:

\emph{Axiom~\eqref{eq:fuzzy-eqs} is sound over~$\FHaus_\omega X$:}
Since we have already shown that $\FHaus_\omega$ is a monad, it
suffices to show that~\eqref{eq:fuzzy-eqs} holds
in~$\FHaus \{x_1,\dots,x_n,y_1,\dots,y_n\}$ for
$x_1,\dots,x_n,y_1,\dots,y_n\in X$ (with every variable interpreted by
itself, and distances inherited from~$X$). But this is immediate from
the Hausdorff description of the distance: By symmetry, it suffices to
show that the left-hand term in the maximum in~\eqref{eq:fuzzy-haus}
is at most~$\epsilon$. This follows from the fact that by the side
conditions of~\eqref{eq:fuzzy-eqs},
$(r_i\ominus s_i)\vee (r_i\land d(x_i,y_i))\le(r_i\ominus s_i)\vee
d(x_i,y_i)\le\epsilon$ for all~$i$.

\emph{Completeness: Low distances in~$\FHaus_\omega X$ are derivable.}
Let $S,T\in\FHaus_\omega X$ such that $d(S,T)\le\epsilon$; read $S,T$
as terms $S=\sum_xS(x)(x)$, $T=\sum_xT(x)(x)$. We have to show that
$\Gamma\vdash S=_\epsilon T$ is derivable where $\Gamma$ records the
distances of all elements of the (finite) supports of~$S$ and~$T$. Now
$d(S,T)\le\epsilon$ implies that
$\bigvee_x\bigwedge_y (S(x)\ominus T(y))\vee (S(x)\wedge
d(x,y))\le\epsilon$. That is, for every~$x$ there exists $y_x$ such
that $(S(x)\ominus T(y_x))\vee (S(x)\wedge d(x,y_x))\le\epsilon$. This
means that
\begin{enumerate}
\item $S(x)\le T(y_x)\oplus\epsilon$, and
\item either $d(x,y_x)\le\epsilon$ or $S(x)\le\epsilon$. In the latter
  case, we may take $y_x=x$, ensuring both
  $S(x)\le T(y_x)\oplus\epsilon$ and, again, $d(x,y_x)\le\epsilon$.
\end{enumerate}
By the strict equational laws, it follows that
\begin{equation}\label{eq:S-T-epsilon}
  \Gamma\vdash S=\sum_{d(x,y)\le\epsilon}(S(x)\land (T(y)\oplus\epsilon))(x).
\end{equation}
Analogously, we obtain
\begin{equation}\label{eq:S-epsilon-T}
  \Gamma\vdash T=\sum_{d(x,y)\le\epsilon}((S(x)\oplus\epsilon)\land T(y))(y).
\end{equation}
Using~\eqref{eq:fuzzy-eqs} and equations~\eqref{eq:S-T-epsilon}
and~\eqref{eq:S-epsilon-T}, we obtain that
$\Gamma\vdash S=_\epsilon T$ is derivable as desired.

\subsubsection*{Graded Logics for Fuzzy Metric Trace Semantics}
\label{sec:fuzzy-logic} To see that the given logic is indeed a
  graded logic, we have to
show that $\sem{\modal{a}^c}\colon G[0,1]=M_1[0,1]\to[0,1]$ is an
$M_1$-algebra. In terms of the algebraic description of~$\mathbb{M}$,
this means that $\sem{\modal{a}^c}$ should represent term evaluation
w.r.t.\ the $M_0$-algebra structure of $[0,1]$ (which consists in the
usual join semilattice structure of $[0,1]$ and additionally the
interpretation of unary operations~$r$, for $r\in[0,1]$, as taking
meets with~$r$) and suitably chosen nonexpansive interpretations
$b^{\modal{a}^c}\colon [0,1]\to[0,1]$ of the unary depth-1
operations~$b$ for $b\in\Act$; we take $b^{\modal{a}^c}$ to be the map $v \mapsto v \wedge (c - d(a,b))$ on~$[0,1]$ for all~$a$. For depth-0 operations, the fact that
$\sem{\modal{a}^c}$ agrees with term evaluation then just amounts to
the homomorphy condition, which is checked straightforwardly. For the
depth-1 operation~$a$, we calculate as follows (exploiting that every
depth-0 term over $\Omega=[0,1]$ can be normalized to have the form
$\sum r_i(v_i)$ with $r_i,v_i\in[0,1]$):
\begin{align*}
  & \sem{\modal{a}^c}(b(\textstyle\sum r_i(v_i)))\\
  & = \sem{\modal{a}^c}(\textstyle\sum r_i(b(v_i))) &\by{equivalence of depth-1 terms}\\
    & = \textstyle\bigvee_i r_i\wedge v_i \wedge (c - d(a,b)) & \by{definition of $\sem{\modal{a}^c}$}\\
  & = b^{\modal{a}^c}(\textstyle\bigvee_i r_i\wedge v_i), &
\end{align*}
and $\bigvee_i r_i\wedge v_i$ is the evaluation of the term
$\sum_i r_i\wedge v_i$ in $[0,1]$.

It is straightforward to check that the logic is initial-type depth-0
separating. In the proof that it is also initial-type depth-1
separating, we extend the logic to include modal operators
$\modal{a}^c$ for all $c\in[0,1]$; the claim for the original logic,
where we restrict to rational~$c$, then follows immediately from the
fact that the semantics of~$\modal{a}^c$ depends nonexpansively
on~$c$. So let $A$ be a canonical $M_1$-algebra with free $0$-part
$A_0=\FHaus X$ (so $A_1=\FHaus(\Act\times X)$), and let $\mathfrak A$
be an initial family of maps $A_0\to [0,1]$. We have to show that the
family $\Lambda(\mathfrak A)$ of maps $A_1\to [0,1]$ is again
initial. So let $S,T\in\FHaus_\omega(\Act\times X)$ such that
$d(S,T)\ge\epsilon$. W.l.o.g.\ this means that there exist $a,x$ such
that
\begin{equation}\label{eq:fuzzy-ge-eps}
  \bigwedge_{b,y}(S(a,x)\ominus T(b,y))\vee (S(a,x)\wedge
  d((a,x),(b,y)))\ge\epsilon.
\end{equation}
Now define $T_{a,\epsilon}\in\FHaus_\omega X$ by
\begin{equation*}
    T_{a,\epsilon}=\sum_{y, d(a,b) \leq \epsilon} T(b,y)(y).
\end{equation*}
By~\eqref{eq:fuzzy-ge-eps}, we then have
\begin{equation}
    d(S(a,x)(x),T_{a,\epsilon})\ge\epsilon.
\end{equation}
By the Hausdorff-style description of the metric on
$\FHaus(\Act\times X)$, it is immediate that
$d(S(a,x)(x)+T_{a,\epsilon},T_{a,\epsilon})\ge d(S(a,x)(x),T_{a,\epsilon})$, so
\begin{equation}
    d(S(a,x)(x)+T_{a,\epsilon},T_{a,\epsilon})\ge\epsilon.
\end{equation}\jfnote{Justify more clearly}
Since~$\mathfrak A$ is an initial family and the maps in~$\mathfrak A$
are $M_0$-morphisms, in particular monotone w.r.t.\ the ordering of
$\FHaus_\omega X$ as a join semilattice, it follows that we have
$f\in\mathfrak A$ such that $f(T_{a,\epsilon})\le c-\epsilon$ where
$c=f(S(a,x)(x)+T_{a,\epsilon})$. Since~$f$ is an $M_0$-morphism, it follows that
$f(S(a,x)(x))=c$, and $f(S(a,x)(x))=S(a,x)\wedge f(x)$, so

\begin{equation*}
    \appModal{\modal{a}^c}{f}(S)\ge S(a,x)\wedge f(x) \wedge (c- d(a,a))=c.
\end{equation*}
We are thus done once we show that
$\appModal{\modal{a}^c}{f}(T)\le c-\epsilon$. By definition of
$\appModal{\modal{a}^c}{f}(T)$, this means we have to show for
$b\in \Act$ and $y\in X$ that
$T(b,y)\wedge f(y) \wedge (c - d(a,b))\le c-\epsilon$.  We distinguish
cases on $d(a,b)$:
  \begin{itemize}
  \item $d(a,b)\le\epsilon$: Then $T(b,y) \leq T_{a,\epsilon}(y)$, and
    hence $T(b,y)(y)\le T_{a,\epsilon}$. Since
    $f(T_{a,\epsilon})\leq c-\epsilon$ and~$f$ is an $M_0$-morphism,
    in particular is monotone, we have
    $T(b,y)\wedge f(y) = f(T(b,y)(y))\le f(T_{a,\epsilon})\le
    c-\epsilon$, which implies the claim.
  \item $d(a,b)>\epsilon$: Then $c-d(a,b)<c-\epsilon$, which implies
    the claim. 
  \end{itemize}

  \subparagraph*{Failure of expressivity of the logic with only
    $\modal{a}^1$} Consider the
  $\FHaus_\omega(\Act \times {-})$-coalgebra $(X,\gamma)$ where
  $X = \{x, y, z\}$, $\gamma(x) = 0.5(a(z))$, $\gamma(y) = 0.5(b(z))$,
  and $\gamma(z) = 0$, where $d(a,b) = 0.5$. The fuzzy metric trace
  distance of $x,y$ is $0.25$. However, we have
  $\sem{\modal{a}^11}(x) = 0.5 =\sem{\modal{a}^11}(y)$, and similarly
  for $\modal{b}^1$, so the logical distance of $x$ and $y$ in the
  logic with only $\modal{a}^1$, $\modal{b}^1$ is $0$.  To witness the
  distance of $0.25$ logically, we require the modal operator
  $\modal{a}^{0.5}$; indeed, $\sem{\modal{a}^{0.5}1}(x)=0.5$, while
  $\sem{\modal{a}^{0.5}1}(y)=0.25$.

\end{document}